\definecolor{mylila}{RGB}{190,186,218}
\definecolor{myyellow}{RGB}{255,255,179}
\definecolor{mygreen}{RGB}{141,211,199}
\tikzstyle{vertex}=[draw, circle, inner sep = 2pt]
\tikzstyle{terminal}=[draw, inner sep = 2pt]
\tikzstyle{Bvertex}=[draw, circle, fill, inner sep = 2pt]
\tikzstyle{Avertex}=[draw, fill, inner sep = 2pt]
\tikzstyle{white-vertex}=[trapezium, draw, trapezium left angle=60,
\newlength{\RoundedBoxWidth}
\newsavebox{\GrayRoundedBox}
\newenvironment{GrayBox}[1]%
   {\setlength{\RoundedBoxWidth}{.93\columnwidth}
    \def\boxheading{#1}
    \begin{lrbox}{\GrayRoundedBox}
       \begin{minipage}{\RoundedBoxWidth}}%
   {   \end{minipage}
    \end{lrbox}
    \begin{center}
    \begin{tikzpicture}%
       \node(Text)[draw=black!20,fill=white,rounded corners,inner sep=2ex,text width=\RoundedBoxWidth]
             {\usebox{\GrayRoundedBox}};
        \coordinate(x) at (current bounding box.north west);
        \node [draw=white,rectangle,inner sep=3pt,anchor=north west,fill=white]
        at ($(x)+(6pt,.75em)$) {\boxheading};
    \end{tikzpicture}
    \end{center}}
\newenvironment{defproblemx}[1]{\noindent\ignorespaces%
                                \FrameSep=6pt%
                                \parindent=0pt%
%                 \vspace*{-1em}
                \begin{GrayBox}{#1}%
                \begin{tabular*}{\columnwidth}{!{\extracolsep{\fill}}@{\hspace{.1em}} >{\itshape} p{1.5cm} p{0.86\columnwidth} @{}}%
            }{
                \end{tabular*}%
                \end{GrayBox}%
                \ignorespacesafterend
%                 \vspace*{-1em}
            }
\newcommand{\defProblemQuestion}[3]{%
  \begin{defproblemx}{#1}
    Input: & #2 \\
    Question: & #3
  \end{defproblemx}
}
 \newtheorem{theorem}{Theorem}
 \newtheorem{numberedclaim}{Claim}[]
 \newtheorem{observation}[theorem]{Observation}
 \newtheorem{definition}[theorem]{Definition}
 \newtheorem{lemma}[theorem]{Lemma}
 \newtheorem{remark}[theorem]{Remark}
\crefname{paragraph}{paragraph}{paragraphs}
\Crefname{paragraph}{Paragraph}{Paragraphs}
\crefname{numberedclaim}{claim}{claims}
\Crefname{numberedclaim}{Claim}{Claims}
\newenvironment{claimproof}[1]{\begin{proof}[Proof of \Cref{#1}.]}{\end{proof}}
\newcommand{\largeWeight}{\ensuremath{4}}
\newcommand{\even}{even}
\newcommand{\odd}{odd}
\newcommand{\witness}{\bm{y}}
\newcommand{\wof}[1]{\witness_{#1}}
\newcommand{\wprime}{\witness'}
\newcommand{\wone}{\witness^1}
\newcommand{\wtwo}{\witness^2}
\newcommand{\wthree}{\witness^3}
\newcommand{\woneof}[1]{\wone_{#1}}
\newcommand{\wtwoof}[1]{\wtwo_{#1}}
\newcommand{\wthreeof}[1]{\wthree_{#1}}
\newcommand{\wpone}{\witness^{P, \operatorname{\odd}}}
\newcommand{\wptwo}{\witness^{P, \operatorname{\even}}}
\newcommand{\wceven}{\witness^{C, \operatorname{\even}}}
\newcommand{\wcevenof}[1]{\wceven_{#1}}
\newcommand{\wcone}[1][]{\witness^{C, 1}_{#1}}
\newcommand{\wctwo}[1][]{\wceven_{#1}}
\newcommand{\wcthree}[1][]{\witness^{C, 2}_{#1}}
\newcommand{\wconeof}[1]{\witness^{C, 1}_{#1}}
\newcommand{\wctwoof}[1]{\witness^{C, \operatorname{\even}}_{#1}}
\newcommand{\wcthreeof}[1]{\witness^{C, 2}_{#1}}
\newcommand{\wcmtwo}{\witness^{C, 0}}
\newcommand{\wcmtwoof}[1]{\wcmtwo_{#1}}
\newcommand{\wcoeven}{\witness^{C_1, \operatorname{\even}}}
\newcommand{\wcoone}[1][]{\witness^{C_1, 1}_{#1}}
\newcommand{\wcothree}[1][]{\witness^{C_1, 2}_{#1}}
\newcommand{\wcomtwo}{\witness^{C_1, 0}}
\newcommand{\wctone}[1][]{\witness^{C_2, 1}_{#1}}
\newcommand{\wctthree}[1][]{\witness^{C_2, 2}_{#1}}
\newcommand{\weone}[1][]{\witness^{e, 1}_{#1}}
\newcommand{\wetwo}[1][]{\witness^{e, 2}_{#1}}
\newcommand{\weeven}[1][]{\witness^{e, \operatorname{\even}}_{#1}}
\newcommand{\vote}{\mathsf{vote}}
\newcommand{\weightedvote}{\widetilde{\mathsf{vote}}}
\newcommand{\wDhat}{\neu{\hat{\witness}^{D}}}
\newcommand{\wDhatof}[1]{\neu{\hat{\witness}^{D}_{#1}}}
\algnewcommand\algorithmicinput{\textbf{Input:}}
\algnewcommand\algorithmicoutput{\textbf{Output:}}
\algnewcommand\Input{\item[\algorithmicinput]}
\algnewcommand\Output{\item[\algorithmicoutput]}
\algnewcommand\algorithmicgoto{\textbf{GoTo }}
\algnewcommand\GoTo{\item[\algorithmicgoto]}
\algnewcommand\algorithmicforeach{\textbf{for each}}
\newcommand{\wpm}{\textsc{Popular Matching with Weighted Voters}}
\newcommand{\vect}[1]{\bm{{#1}}}
\newcommand{\vectof}[2]{\bm{#1}_{#2}}
\newcommand{\witnessc}{\witness^C}
\newcommand{\witnesscof}[1]{\vectof{\witnessc}{#1}}
\newcommand{\Ghat}{\neu{\hat{F}}}
\newcommand{\eaf}{a_2}
\newcommand{\eafh}{a_1}
\newcommand{\ebfour}{b_2}
\newcommand{\ebfh}{b_1}
\newcommand{\eato}{a_4}
\newcommand{\eatoh}{a_3}
\newcommand{\ebto}{b_4}
\newcommand{\ebtoh}{b_3}
\newcommand{\eatt}{a_6}
\newcommand{\eatth}{a_5}
\newcommand{\ebtt}{b_6}
\newcommand{\ebtth}{b_5}
\newcommand{\eap}{a_8}
\newcommand{\eaph}{a_7}
\newcommand{\ebp}{b_8}
\newcommand{\ebph}{b_7}
\newcommand{\eas}{a_{10}}
\newcommand{\ebs}{b_{10}}
\newcommand{\eaedge}{a_9}
\newcommand{\ebe}{b_9}
\newcommand{\LogicTRUE}{\mathsf{true}}
\newcommand{\LogicFALSE}{\mathsf{false}}
\newcommand{\hcp}{H^{\mathcal{C} + \mathcal{P}}}
\newcommand{\hletwo}{H^{\le 2}}
\newcommand{\hct}{H^{\mathcal{C} + \mathcal{T}}}
\newcommand{\unnamedGadget}{6-path gadget}
\crefname{observation}{Observation}{Observations}
\crefname{lemma}{Lemma}{Lemmas}
\crefname{claim}{Claim}{Claims}
\newcommand{\new}[1]{{\color{black}#1}}
\newcommand{\neu}[1]{{\color{black}#1}}
\newcommand{\Jmatching}{\new{M_\mathcal{J}}}
\newcommand{\Jmatchingprime}{\new{M_\mathcal{J}'}}
\newcommand{\oldJmatching}{{M_\mathcal{J}}}
\newcommand{\oldJmatchingprime}{{M_\mathcal{J}'}}
\newcommand{\Jmatchingpp}{\new{M_\mathcal{J}''}}
\newcommand{\oldJmatchingpp}{{M_\mathcal{J}''}}
\newcommand{\Imatching}{\new{M_{\mathcal{I}}}}
\newcommand{\oldImatching}{{M_{\mathcal{I}}}}
\newcommand{\Imatchingprime}{\new{M_\mathcal{I}'}}
\newcommand{\oldImatchingprime}{{M_\mathcal{I}'}}
\newcommand{\no}{\texttt{no}}
\begin{document}

\begin{frontmatter}

%% Title, authors and addresses

%% use the tnoteref command within \title for footnotes;
%% use the tnotetext command for theassociated footnote;
%% use the fnref command within \author or \address for footnotes;
%% use the fntext command for theassociated footnote;
%% use the corref command within \author for corresponding author footnotes;
%% use the cortext command for theassociated footnote;
%% use the ead command for the email address,
%% and the form \ead[url] for the home page:
%% \title{Title\tnoteref{label1}}
\tnotetext[label1]{\'Agnes Cseh was supported by the J\'anos Bolyai Research Fellowship.
Klaus Heeger was supported by DFG Research Training Group 2434 ``Facets of Complexity'' and DFG project FPTinP (NI 369/16).}
\author[inst1]{Klaus Heeger}
\address[inst1]{Technische Universit\"at Berlin, Algorithmics and Computational Complexity, Germany}
\author[inst2,inst3]{\'Agnes Cseh}
\address[inst2]{Institute of Economics, HUN-REN Centre for Economic and Regional Studies, Budapest, Hungary}
\address[inst3]{Department of Mathematics, University of Bayreuth, Germany}

\title{Popular matchings with weighted voters\tnoteref{label1}}

%% use optional labels to link authors explicitly to addresses:

 %\author{Klaus Heeger}
 %\author[label2]{\'Agnes Cseh}
% \affiliation[label2]{organization={Institute of Economics, Centre for Economic and Regional Studies},
%             addressline={T\'oth K\'alm\'an utca 4.},
%             city={Budapest},
%             postcode={1097},
%             country={Hungary}}
%%
%% \affiliation[label2]{organization={},
%%             addressline={},
%%             city={},
%%             postcode={},
%%             state={},
%%             country={}}

%\author{}

%\affiliation{organization={},%Department and Organization
%            addressline={}, 
%            city={},
%            postcode={}, 
%            state={},
%            country={}}

\begin{abstract}
In the \textsc{Popular Matching} problem, we are given a bipartite graph~$G = (A \cup B, E)$ and for each vertex $v\in A\cup B$, strict preferences over the neighbors of~$v$.
 Given two matchings~$M$ and $M'$, matching $M$ is more popular than $M'$ if the number of vertices preferring $M$ to $M'$ is larger than the number of vertices preferring $M'$ to~$M$.
 A matching~$M$ is called \emph{popular} if there is no matching~$M'$ that is more popular than~$M$.
 
 We consider a natural generalization of \textsc{Popular Matching} where every vertex has a weight. Then, we call a matching~$M$ more popular than matching~$M'$ if the weight of vertices preferring $M$ to $M'$ is larger than the weight of vertices preferring $M'$ to~$M$.
 For this case, we show that it is \NP-hard to find a popular matching. Our main result is a polynomial-time algorithm that delivers a popular matching or a proof for it\new{s} non-existence in instances where all vertices on one side have weight $c$ for some $c > 3$ and all vertices on the other side have weight~1.

\end{abstract}

%%Graphical abstract
%\begin{graphicalabstract}
%\includegraphics{grabs}
%\end{graphicalabstract}

%%Research highlights
%\begin{highlights}
%\item Research highlight 1
%\item Research highlight 2
%\end{highlights}

\begin{keyword}
%% keywords here, in the form: keyword \sep keyword
popular matching \sep stable matching \sep complexity \sep algorithm

%% PACS codes here, in the form: \PACS code \sep code

%% MSC codes here, in the form: \MSC code \sep code
%% or \MSC[2008] code \sep code (2000 is the default)

\end{keyword}

\end{frontmatter}

%% \linenumbers

%% main text
%\linenumbers

\section{Introduction}

The simple majority voting rule offers a natural way of aggregating voters' preferences. Already Condorcet~\cite{Con85} used pairwise comparisons to calculate the winning candidate, establishing his famous paradox on the smallest voting instance not admitting a majority winner.

The concept of majority voting translates to other scenarios where voters submit preference lists. One such field is the area of two-sided matchings under preferences, where popular matchings~\cite{Gardenfors75,AbrahamIKM07,SM10,Man13,Cse17} serve as a voting-based alternative to the well-known notions of stable matchings~\cite{GS62,BB02a} and Pareto optimal matchings~\cite{ACM+04}. In short, a popular matching~$M$ is a simple majority winner among all matchings, as it guarantees that no matter what alternative matching is offered on the market, a weak majority of the non-abstaining agents will opt for~$M$. Restricted to matching instances on bipartite graphs, the popular matching problem has been studied in two models.
\begin{itemize}
\item {\em House allocation (i.e.\ one-sided preferences).} One side of the graph consists of agents who have strictly ordered preferences and cast votes, while the other side is formed by houses with no preferences or votes.
\item {\em Two-sided preferences.} Vertices on both sides are agents, who all have strictly ordered preferences and cast votes. This setting is analogous to the classic stable marriage model.
\end{itemize}

In this paper, we focus on the latter model and extend it to a direction motivated by voting. More specifically, we supply the agents in the instance with weights. The vote of each agent then counts with multiplicity: the larger the weight is, the more influence this agent has on the outcome of the voting between two matchings.
Analogous weighted voting~\cite{Ban64} scenarios arise naturally in various real-life problems.
%committee voting~\cite{RS66} and liquid democracy~\cite{KMP21}. 
In committees~\cite{RS66}, the vote of persons on different posts might be weighted differently when tallying them. In liquid democracy~\cite{KMP21}, the delegates' votes also count with multiplicity, depending on the number of agents they represent. Another interpretation of weighted voters might be that prioritized voters are assigned a larger weight in order to ensure their beneficial treatment~\cite{CT77}. Also, coalitions often decide to vote together, in which case the decision made within the coalition will also be counted with multiplicity~\cite{ASS+05}.

Weighted voting readily translates to vertex-weighted graphs in the \new{context of matchings}. It follows already from the Condorcet paradox that the existence of a popular matching is not guaranteed if the vertices are weighted, as \Cref{fig:condorcet} demonstrates. However, the complexity of deciding whether a \new{popular matching} exists in a given instance was open until now.

    \begin{figure}[htb]
      \begin{center}
      \begin{minipage}{0.4\textwidth}
		\[
		\begin{array}{rlll}
		a_{1,2,3}:  & b_1 \succ & b_2 & \\
        b_{1,2}:  & a_1 \succ & a_2 & \succ a_3
		\end{array}
		\]
	\end{minipage}\hspace{17mm}\begin{minipage}{0.4\textwidth}	
        \begin{tikzpicture}[scale=0.75,every node/.style={scale=0.9}]
            \node[vertex, label=180:$a_1$] (a1) at (0,0) {};
            \node[vertex, label=180:$a_2$] (a2) at (0,2) {};
            \node[vertex, label=180:$a_3$] (a3) at (0,4) {};
            \node[vertex, label=0:$b_1$] (b1) at (3,1) {};
            \node[vertex, label=0:$b_2$] (b2) at (3,3) {};
            \draw (a1) edge node[pos=0.2, fill=white, inner sep=2pt] {1} node[pos=0.8, fill=white, inner sep=2pt] {1} (b1);
            \draw (a2) edge node[pos=0.2, fill=white, inner sep=2pt] {1} node[pos=0.8, fill=white, inner sep=2pt] {2}  (b1);
            \draw (a3) edge node[pos=0.2, fill=white, inner sep=2pt] {1} node[pos=0.8, fill=white, inner sep=2pt] {3}  (b1);
            \draw (a1) edge node[pos=0.2, fill=white, inner sep=2pt] {2} node[pos=0.8, fill=white, inner sep=2pt] {1}  (b2);
            \draw (a2) edge node[pos=0.2, fill=white, inner sep=2pt] {2} node[pos=0.8, fill=white, inner sep=2pt] {2}  (b2);
            \draw (a3) edge node[pos=0.2, fill=white, inner sep=2pt] {2} node[pos=0.8, fill=white, inner sep=2pt] {3}  (b2);
        \end{tikzpicture}
        \end{minipage}
      \end{center}
      \caption{With certain vertex weights, no popular matching exists in this instance. The lists on the left side as well as the numbers on the edges indicate the preferences of the vertices: For every~$i\in \{1,2,3\}$, $a_i$'s first choice is $b_1$, while its second choice is~$b_2$. The Condorcet paradox with no majority winner corresponds to assigning weight~1 to $a_1, a_2, a_3$ and weight~0 to $b_1, b_2$.
      For example, matching~$M_1 := \{\{a_1, b_1\}, \{a_2, b_2\}\}$ is less popuplar than $M_2 := \{\{a_2, b_1\}, \{a_3, b_2\}\}$.
      Matching~$M_2$ is less popular than~$M_3 := \{\{a_3, b_1\}, \{a_1, b_2\}\}$.
      Finally, $M_3$ is less popular than~$M_1$.  
      The presence of 0-weight vertices is not necessary: the same example works also with weight~3 for $a_1, a_2, a_3$ and weight~1 for $b_1, b_2$.}
      \label{fig:condorcet}
    \end{figure}
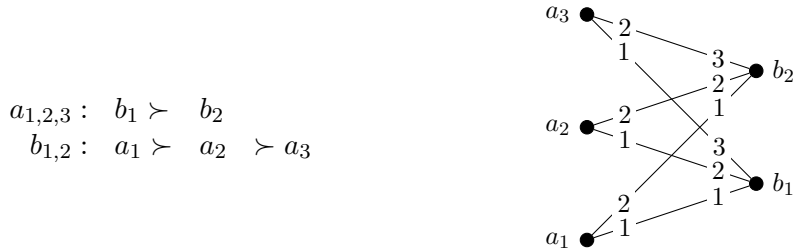

Another motivation for popular matchings arises from the context of stable matchings.
While stable matchings provide a local stability criterion (the absence of so-called blocking pairs), popularity takes a more global perspective and does not forbid local instabilities but only global instabilities, i.e.\ a complete matching preferred to the current matching.
Therefore, popular matchings have been proposed as an alternative to stable matchings~\cite{HK13} as they may be larger than stable matchings.
In the context of bipartite stable matching problems, the different sides often represent fundamentally different agents (i.e.\ when considering the admission of students to colleges, then one side represents the students, while the other side represents the free slots at colleges);
consequently, it is natural to \new{assign larger weights to preferences of one side} than \new{to} the preferences of the other side.
Recently, Kavitha~\cite{DBLP:conf/icalp/Kavitha20} suggested to search for popular matchings when the vote of each vertex from one side is counted with multiplicity~$c > 1$ and each vertex from the other side with multiplicity~1.

\subsection{Related work}

Matchings under preferences constitute a field actively researched by both Computer Scientists and Economists~\cite{Rot82,HK09,Man13}. Besides two-sided matchings, majority voting has also been defined for the roommates problem~\cite{FaenzaKPZ19,GMS+21}, spanning trees~\cite{Dar13}, permutations~\cite{VSW14,KCM21}, the ordinal group activity selection problem~\cite{Dar18}, and branchings~\cite{KKM+20}. Matchings in bipartite graphs are nevertheless the most actively researched area~\cite{BK19,FaenzaKPZ19,FaenzaK20,KM20,GMS+21,HK21,AB22} of the majority voting rule outside of the usual voting scenarios. In the context of matchings it was first introduced by Gärdenfors~\cite{Gardenfors75} for matching markets with two-sided preferences, and then studied by Abraham et al.~\cite{AbrahamIKM07} in the house allocation model. Polynomial\new{-}time algorithms to find a popular matching were given in both settings. In the two-sided preferences model, it was already noticed by Gärdenfors~\cite{Gardenfors75} that all stable matchings are popular, which implies that in standard bipartite stable matching instances, popular matchings always exist. In fact\new{,} stable matchings are the smallest size popular matchings, as shown by Biró et al.~\cite{DBLP:conf/ciac/BiroIM10}, while maximum size popular matchings can be found in polynomial time as well~\cite{HK13,Kav14}. 
Notice that as soon as there are vertices with different weights, for example in the house allocation model, where one side has weight~0, the existence of a popular matching is not guaranteed any more.

\new{A} natural extension of various matching problems is to consider graphs with edge or vertex weights. Notice however that the two weighted extensions define inherently different problems for popularity. If vertices are weighted, as in our setting, then this weight influences the voting power of an agent and redefines the more popular relation between matchings. If edge weights are given, the vertices vote exactly as in the non-weighted version, but the matchings carry a different total weight, and the goal is to find a maximum weight matching, subject to popularity.

For house allocation instances with edge weights and strict preferences, a maximum weight one among popular matchings can be found in $O(n + m)$ time~\cite{MI11}, where $n$ is the number of vertices and $m$ is the number of edges in the graph. For instances with two-sided preferences, Faenza et al.~\cite{FaenzaKPZ19} showed that it is \NP-complete to decide if there exists a popular matching that contains two given edges, which is a very restricted case of popular matchings with edge weights. The same authors provided a 2-approximation algorithm for non-negative edge weights. \NP-hardness was established for edge-weighted non-bipartite instances a couple of years earlier already~\cite{HK21}.

The vertex-weighted case has been studied extensively in the house allocation setting. Mestre~\cite{DBLP:journals/talg/Mestre14} presented algorithms to find a popular matching, or a proof for its non-existence. For instances with strict preference lists, he gave an $O(n + m)$ time algorithm, while for preference lists with ties, he solved the problem in  $O(\min\left\{k\sqrt{n},n\right\} m)$ time, where $k$ is the number of distinct weights the agents are given. The algorithm for strict lists was later extended to capacitated house allocation instances by Sng and Manlove~\cite{SM10}, who posed the complexity of the case with weakly ordered preferences as an open problem. Itoh and Watanabe~\cite{IW10} studied the existence probability of a popular matching in random house allocation instances with vertex weights. Ruangwises and Itoh~\cite{DBLP:journals/mst/RuangwisesI21} designed an algorithm to compute the approximability measure introduced by McCutchen~\cite{DBLP:conf/latin/McCutchen08} and called unpopularity factor for a given matching. As a byproduct, they also developed a polynomial algorithm to verify the popularity of a given matching, even in non-bipartite instances.

Kavitha~\cite{DBLP:conf/icalp/Kavitha20} gave a polynomial-time algorithm for finding a matching which is popular when each vertex has weight~1 and simultaneously is popular in the house allocation setting, i.e.\ when each vertex from~$A$ has weight~1 while each vertex from~$B$ has weight~0.

\subsection{Structure of the paper and our contribution}

We define our notation and main problem in Section~\ref{sec:prel}. There we also sketch three relevant known techniques.
We extend one of these, the concept of a witness, to the vertex-weighted case in Section~\ref{sec:ver}. Then in Section~\ref{sec:npc} we give two different hardness reductions to show that finding a popular matching is \NP-hard even if the vertex weight function is very restricted. A strong inapproximability result is also provided there for the problem variant with edge costs. We complement these findings in Section~\ref{sec:alg} with our main result, an algorithm to find a popular matching in polynomial time (if any exists) for instances where the weight of every vertex in $A$ is a constant~${c>3}$, while the weight of every vertex in $B$ is~1. 
%Since $A$ and $B$ model two different kinds of vertices, it is natural to assign each vertex from~$A$ a different weight to the weight of vertices from~$B$ (while each vertex from $A$ (respectively $B$) has the same weight).
%Notably, Kavitha~\cite{DBLP:conf/icalp/Kavitha20} recently observed that almost nothing is known when all agents from on side have weight~$c > 1$ while all vertices from the other side have weight~1.
We pose open questions in Section~\ref{sec:open} and provide detailed examples in the Appendix. %Results marked with (\appmark) are proven in the appendix.

\section{Preliminaries}
\label{sec:prel}
In this section we first introduce the notation we use, and then describe three known techniques we rely on in our proofs.

\subsection{Notation, input, and the popularity criterion}

In our input, a simple bipartite graph $G = (A \cup B, E)$ on $n$ vertices and $m$ edges is given. We denote the set of vertices by $V = A \cup B$, the set of vertices adjacent to $v \in V$ in~$G$ by $N_G (v)$, and the set of edges incident to $v \in V$ in~$G$ by~$\delta_G (v)$. The vertices of $G$ are equipped with weights determined by the weight function~$w: V \rightarrow \mathbb{Q}_{\geq 0}$. Furthermore, each vertex $v \in V$ has a strictly ordered preference list $\succ_v$ over the vertices in~$N_G (v)$, where $u \succ_v z$ means that $v$ prefers $u$ to~$z$. We assume that being matched to any vertex in $N_G (v)$ is preferred to staying unmatched. 
We may drop the subscript~$G$ if it is clear from context.

A \emph{matching}~$M \subseteq E$ is a set of pairwise disjoint edges.
A matching~$M$ is \emph{perfect} if every vertex is covered by an 
%contained in at least one 
edge of~$M$.
For each matching $M$ and vertex $v \in V$, let $M(v)$ denote the vertex $M$ matches $v$ to.
\new{If $v$ remains unmatched in $M$, then we write~$M(v) := \bot$.}
The preference list $\succ_v$ naturally defines a preference relation between any two matchings $M$ and~$M'$ for each $v \in V$: vertex~$v$ prefers $M$ to $M'$ if $M(v) \succ_v M'(v)$. If $M(v) = M'(v)$, then $v$ is indifferent between~$M$ and~$M'$.

Given two matchings $M$ and $M'$, let $V^+ (M,M')$ be the set of vertices preferring $M$ to~$M'$. We define 
$$\Delta_w (M, M') := \sum_{v\in V^+ (M, M')} w(v) - \sum_{v\in V^+ (M', M)} w (v).$$
A matching $M$ is \emph{popular} if $\Delta_w (M, M') \ge 0$ for every matching~$M'$. In words, $M$ is popular if it is never beaten in a pairwise comparison with another matching, where each vertex casts a vote for its preferred matching with the multiplicity of its weight. 
We call an edge~$e\in E(G)$ \emph{popular} if it is contained in some popular matching. We now formally define the main problem studied in this paper.
\defProblemQuestion{\textsc{Popular Matching with Weighted Voters}}
{
A bipartite graph $G = (A \cup B, E)$, $(\succ_v)_{v \in V}$, and vertex weights $w: V \rightarrow \mathbb{Q}_{\geq 0}$.
}
{
Is there a popular matching $M$?
}

\subsection{Popular matching characterizations}
\label{sec:techn}

We now sketch useful known techniques for characterizing popular matchings in house allocation instances, instances with two-sided lists, and finally, instances with weighted voters. 

\subsubsection{House allocation instances}
We first focus on house allocation instances, which correspond to the subcase of our problem definition with $w(a) = 1$ for each $a \in A$ and  $w(b) = 0$ for each $b \in B$. The characterization of Abraham et al.~\cite{AbrahamIKM07} uses the notion of an \emph{$f$-post} and \emph{$s$-post} of a vertex~$a\in A$.
  \begin{definition}
  The \emph{$f$-post}~$f(a)$ of 
  %a vertex~
  $a\in A$ is the first vertex in the preferences of~$a$.
  The \emph{$s$-post}~$s(a)$ of %a vertex~
  $a\in A$ is the vertex that is best-ranked among all vertices that are not the $f$-post of any vertex, i.e.\ the best-ranked vertex among $B \setminus \{ f(a') : a' \in A\}$.
  If no such vertex exists, then we set~$s(a) := \bot$.
  \end{definition}
  
  %With this notation, we are ready to state the characterization of popular matchings in a house allocation instance. We remark that this 
  The following characterization of popular matchings %also
  leads to an $O(n+m)$ algorithm that outputs either a largest cardinality popular matching or a proof for its nonexistence.%\accom{Do we use this remark anywhere? Does it help the reader?}
  
  \begin{theorem}[Abraham et al.~\cite{AbrahamIKM07}]\label{th:abraham}
   In a house allocation instance, a matching~$M$ is popular if and only if the following conditions are both fulfilled.
   \begin{itemize}
    \item Each vertex $b$ such that $b = f(a)$ for some $a\in A$ is matched in $M$ to some~$a'\in A$ with $b = f(a')$.
    \item For each vertex $a\in A$, it holds that $M(a) \in \{f(a), s(a)\}$.
   \end{itemize}
  \end{theorem}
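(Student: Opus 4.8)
The plan is to prove the two implications separately, using throughout that in a house allocation instance only the vertices of~$A$ carry positive weight, and each such weight equals~$1$; hence $\Delta_w(M,M') = |V^+(M,M')\cap A| - |V^+(M',M)\cap A|$, and $M$ is popular exactly when, for every matching~$M'$, at least as many vertices of~$A$ prefer~$M$ as prefer~$M'$.

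For the ``if'' direction, assume $M$ satisfies the two conditions, fix an arbitrary matching~$M'$, and construct an injection~$g$ from $\{a\in A : M'(a)\succ_a M(a)\}$ into $\{a\in A : M(a)\succ_a M'(a)\}$. The key point is that if $a$ prefers~$M'$, then by the second condition $M(a)=s(a)$ (it cannot be $f(a)$, which is $a$'s top choice), so $a$ ranks $b:=M'(a)$ strictly above $s(a)$; by definition of $s(a)$ this forces $b$ to be the $f$-post of some vertex, whence by the first condition $g(a):=M(b)$ satisfies $f(g(a))=b=M(g(a))$, i.e.\ $g(a)$ receives its first choice under~$M$. Since $g(a)\neq a$ (otherwise $M(a)=M'(a)$, contradicting that $a$ prefers~$M'$) and $M'(a)=b$, we get $M'(g(a))\neq b$, so $g(a)$ strictly prefers~$M$; and $g$ is injective because $M$ and $M'$ are matchings. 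This yields $\Delta_w(M,M')\geq 0$, so $M$ is popular.

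For the ``only if'' direction I argue by contraposition: if the first or the second condition fails, I build a matching beating~$M$. First observe that a violation of the second condition of the form $f(a)\succ_a M(a)\succ_a s(a)$ already entails a violation of the first condition at $b:=M(a)$, since $b$ is then the $f$-post of some vertex (by definition of $s(a)$) while $M(b)=a$ and $f(a)\neq b$. So it suffices to handle (a) a violation of the first condition at an $f$-post~$b$, and (b) a vertex~$a$ that is unmatched or matched strictly below $s(a)$ (in which case $s(a)\neq\square$). In each case I start an alternating \emph{promotion chain} with respect to~$M$: assign $b$ (respectively $s(a)$) to a vertex that ranks it first (respectively to~$a$), which displaces $M(b)$ (respectively $M(s(a))$); then route the displaced vertex to its own $f$-post, displacing whoever occupies it, and repeat. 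Every vertex of~$A$ touched after the first step moves to its first choice, hence strictly gains; the chain ends either at an unmatched $f$-post (nobody loses), or at a vertex already matched to its own $f$-post (exactly one vertex strictly loses), or it closes into a cycle (nobody loses). A count then shows that strictly more $A$-vertices gain than lose, so flipping the chain produces a matching~$M'$ with $\Delta_w(M',M)>0$, contradicting popularity.

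The main obstacle is making the promotion-chain argument airtight: one must verify that the chain is well defined and finite, and --- the delicate part --- that after truncating it at its first repeated vertex into a simple alternating path possibly followed by a simple alternating cycle, the flipped edge set is a legal modification of~$M$ and the gain-minus-loss count stays strictly positive (in particular that the edge incident to the initiating vertex does not interfere with later edges of the chain). Here the definitions of $f$- and $s$-post are used essentially: an $s$-post is nobody's $f$-post, which is exactly what prevents a chain initiated at an $s$-post from returning to its starting vertex. I would also remark that this characterization is due to Abraham et al.~\cite{AbrahamIKM07} and is invoked here only as a black-box building block for the weighted setting.
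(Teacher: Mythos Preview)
The paper does not prove this theorem; it is quoted from Abraham et al.\ and used as a black box, exactly as you note in your last sentence. There is therefore nothing to compare against.

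That said, your reconstruction is essentially the original argument and is sound. The ``if'' direction via the injection $g(a)=M(M'(a))$ is clean and correct: the second condition forces $M(a)=s(a)$ whenever $a$ prefers~$M'$, so $M'(a)$ lies strictly above $s(a)$ and hence is some vertex's $f$-post; the first condition then guarantees $g(a)$ is matched to its top choice in~$M$, and the injectivity follows because both $M$ and $M'$ are matchings. For the ``only if'' direction your reduction of a ``middle'' violation of the second condition to a violation of the first is correct (any $b$ with $f(a)\succ_a b\succ_a s(a)$ is necessarily someone's $f$-post), and the promotion-chain construction is the standard one. The termination analysis you flag as delicate does go through: the map $x\mapsto M(f(x))$ eventually repeats or becomes undefined; in case~(a) the first displaced vertex $a'=M(b)$ has $f(a')\neq b$, so at least two vertices gain before any possible loss; in case~(b) the initiating house $s(a)$ is nobody's $f$-post, so the chain of $f$-posts can never return to it, which is precisely the observation you make at the end.
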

  
\subsubsection{Instances with two-sided lists}
\label{sec:twosided}
Now we turn to instances with two-sided preferences and unit weights, i.e.\ $w(v) = 1$ for each~${v \in V}$. Let $\tilde{G}$ be the graph $G$ augmented with a loop at each vertex, such that each vertex is its own last choice. This allows us to regard any matching $M$ in $G$ as a perfect matching $\tilde{M}$ in $\tilde{G}$ by including loops at all vertices left unmatched in~$M$. First we define the vote of $u \in V$ for $v \in N_G(u)$ as
%\begin{linenomath}
\[\vote^M_u (v) := \begin{cases}
  0 & \text{ if } \{u, v\} \in M,\\
  1 & \text{ if $u$ is unmatched in~$M$ or $v \succ_u M(u)$, and}\\
  -1 & \text{ otherwise (i.e.\ $M (u) \succ_u v)$.}
 \end{cases}\]
%\end{linenomath}

This function allows us to express the total votes of the two end vertices \new{of an edge} 
%if edge $e$ is added to~$M$ 
as follows. For an edge $e = \{u, v\}$, we define $\vote^M (e) := \vote^M_u (v) + \vote^M_{v} (u)$. %Notice that an edge $\{u,v\}$ is of weight -2 if $M(u) \succ_u v$ and $M(v) \succ_v u$, it is of weight 2 if $v \succ_u M(u)$ and $u \succ_v M(v)$, while in all other cases, it is of weight~0. 
For loops, $\vote^M(\{u,u\}) = 0$ if $u$ is matched to itself in $\tilde{M}$, else $\vote^M(\{u,u\}) = -1$.
For any matching~$M'$ in $ G$, it is easy to see that $\vote^M(\tilde{M'}) := \sum_{e \in \tilde{M}'} \vote^M(e) = \Delta_{w = 1}(M',M)$. This delivers the first characterization of popular matchings in instances with two-sided preferences.

\begin{theorem}[Kavitha et al.~\cite{KavithaMN11}]
  \label{thm:weight_zero} $M$ is popular in $G$ if and only if for every perfect matching $\tilde{M'}$ in~$\tilde{G}$ it holds that $\vote^M(\tilde{M'}) \leq 0$.
\end{theorem}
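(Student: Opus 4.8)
The plan is to observe that the theorem is essentially a repackaging of the identity $\vote^M(\tilde{M'}) = \Delta_{w=1}(M',M)$ recorded just above, together with the fact that $M' \mapsto \tilde{M'}$ is a bijection between matchings of $G$ and perfect matchings of $\tilde{G}$. So the first step is to pin down that bijection: since $\tilde{G}$ is $G$ with exactly one loop added at every vertex, any perfect matching $N$ of $\tilde{G}$ decomposes uniquely as $N = M' \cup \{\{v,v\} : M'(v) = \square\}$, where $M'$ consists of the non-loop edges of $N$ and is a matching of $G$; conversely $\tilde{M'} = N$, and every matching of $G$ arises this way.

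Given the bijection, both directions are one line each. \textbf{($\Rightarrow$)} If $M$ is popular, take any perfect matching $N$ of $\tilde{G}$ and let $M'$ be its non-loop part; then $\vote^M(N) = \vote^M(\tilde{M'}) = \Delta_{w=1}(M',M) = -\Delta_{w=1}(M,M') \le 0$, the last inequality because $M$ is popular (with all weights $1$). \textbf{($\Leftarrow$)} If $\vote^M(N) \le 0$ for every perfect matching $N$ of $\tilde{G}$, then for an arbitrary matching $M'$ of $G$ the matching $\tilde{M'}$ is such an $N$, so $-\Delta_{w=1}(M,M') = \Delta_{w=1}(M',M) = \vote^M(\tilde{M'}) \le 0$, i.e.\ $\Delta_{w=1}(M,M') \ge 0$; as $M'$ was arbitrary, $M$ is popular.

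All the mathematical substance sits in the identity $\vote^M(\tilde{M'}) = \Delta_{w=1}(M',M)$, which I would prove by a vertex-by-vertex count: each vertex $v$ lies in exactly one edge of $\tilde{M'}$ and contributes to $\vote^M(\tilde{M'})$ either $\vote^M_v(M'(v))$ (if $v$ is matched in $M'$) or the loop value $\vote^M(\{v,v\})$ (if not), and in every case this contribution equals $+1$ when $v$ strictly prefers $M'$ to $M$ (including when $M'$ matches $v$ but $M$ does not), $-1$ when $v$ strictly prefers $M$ to $M'$, and $0$ when $v$ is indifferent; summing over $v$ yields $|V^+(M',M)| - |V^+(M,M')| = \Delta_{w=1}(M',M)$. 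Since the excerpt already grants this identity, the only real ``obstacle'' is bookkeeping: keeping the orientation straight (the function $\vote^M$ scores the challenger $M'$ relative to $M$, hence the sign flip relative to popularity of $M$) and noticing that the loops are precisely the device that lets an uncovered vertex cast a $-1$ vote against a matching that would have matched it.
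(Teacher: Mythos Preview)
Your proof is correct and matches the paper's approach: the paper does not give a standalone proof of this cited theorem but simply records the identity $\vote^M(\tilde{M'}) = \Delta_{w=1}(M',M)$ and remarks that this ``delivers'' the characterization. Your write-up fills in exactly the bookkeeping (the bijection $M' \leftrightarrow \tilde{M'}$ and the sign flip) that the paper leaves implicit.
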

  The second characterization follows from LP-duality and the fact that $G$ is a bipartite graph.
\begin{theorem}[Kavitha et al.~\cite{KavithaMN11,Kav16}]
  \label{thm:witness}
  A matching $M$ in $(G = (V = A \cup B, E),\allowbreak {(\succ_v)_{v \in V}})$ is popular if and only if there exists a vector $\witness \in \{0, \pm 1\}^n$ such that
  \begin{itemize}
      \item $\sum_{v \in V} \witness_v = 0$,
      \item $\witness_a + \witness_b \ge \vote^M(\{a,b\}) \qquad \forall\, \{a,b\}\in E$, and
      \item $\witness_v  \ge\vote^M(\{v,v\}) \qquad \forall\, v \in V$.
  \end{itemize}
%  $\sum_{v \in V} \witness_v = 0$,
%  \[ \witness_a + \witness_b \ \ \ge \ \ \vote^M(\{a,b\})\ \ \ \forall\, \{a,b\}\in E\ \ \ \ \ \ \text{and}\ \ \ \ \ \ \witness_v \ \ \ge\ \ \vote^M(\{v,v\})\ \ \ \forall\, v \in V.\] 
\end{theorem}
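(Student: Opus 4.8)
The plan is to obtain \Cref{thm:witness} from \Cref{thm:weight_zero} by linear programming duality over the perfect matching polytope of~$\tilde G$. Let $N$ be the vertex--edge incidence matrix of~$\tilde G$, where the loop at a vertex~$v$ contributes the unit column~$e_v$ (so that covering the loop covers~$v$ exactly once). Since $G$ is bipartite, $N$ is the bipartite vertex--edge incidence matrix with an appended identity block, hence totally unimodular and of full row rank; consequently $Q := \{x \ge 0 : Nx = \mathbf 1\}$ is an integral polytope whose vertices are exactly the incidence vectors of perfect matchings of~$\tilde G$, and for the incidence vector~$x$ of a perfect matching~$\tilde{M'}$ we have $\sum_e \vote^M(e)\, x_e = \vote^M(\tilde{M'})$. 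Thus \Cref{thm:weight_zero} rephrases popularity of~$M$ as: $\max\{\sum_e \vote^M(e)\,x_e : x \in Q\} \le 0$.

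For the ``if'' direction, suppose $\witness \in \{0,\pm1\}^n$ satisfies the stated conditions. Fix any perfect matching~$\tilde{M'}$ of~$\tilde G$ and sum the inequality $\vote^M(\{a,b\}) \le \witness_a + \witness_b$ over the non-loop edges $\{a,b\} \in \tilde{M'}$ together with $\vote^M(\{v,v\}) \le \witness_v$ over the loops $\{v,v\} \in \tilde{M'}$. Since $\tilde{M'}$ covers every vertex exactly once, the right-hand sides add up to $\sum_{v\in V}\witness_v = 0$, so $\vote^M(\tilde{M'}) \le 0$; as $\tilde{M'}$ was arbitrary, $M$ is popular by \Cref{thm:weight_zero}.

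For the ``only if'' direction, observe that the incidence vector of~$\tilde M$ lies in~$Q$ with value $\vote^M(\tilde M) = 0$ (every edge of~$M$ and every loop at an $M$-unmatched vertex has vote~$0$), so the primal optimum is at least~$0$; if $M$ is popular it is also at most~$0$, hence exactly~$0$. The LP dual is $\min\{\mathbf 1^\top y : y_a + y_b \ge \vote^M(\{a,b\})\ \forall\, \{a,b\}\in E,\ y_v \ge \vote^M(\{v,v\})\ \forall\, v \in V\}$, and by strong duality (the primal is feasible and bounded) its optimum is~$0$ as well. Since $N^\top$ is totally unimodular, $\vote^M$ is integral, and the dual polyhedron is pointed (as $N$ has full row rank), the dual optimum is attained at an integral vertex~$\witness$, which then satisfies $\sum_{v\in V}\witness_v = 0$ and all the required inequalities.

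The one genuinely delicate point is showing that the integral optimal dual solution~$\witness$ can be taken with all entries in~$\{0,\pm1\}$: simply truncating a larger-valued optimal~$\witness$ into~$[-1,1]$ need not keep the edge constraints satisfied, so I would instead use complementary slackness between~$\witness$ and the optimal primal solution~$\tilde M$. For each $\{a,b\} \in M$ this forces the constraint to be tight, i.e.\ $\witness_a + \witness_b = \vote^M(\{a,b\}) = 0$, and for each $M$-unmatched vertex~$v$ it forces $\witness_v = \vote^M(\{v,v\}) = 0$. For each $M$-matched vertex~$v$, dual feasibility of the loop constraint gives $\witness_v \ge \vote^M(\{v,v\}) = -1$; combined with $\witness_a = -\witness_b$ on each matched edge $\{a,b\}$ this yields $\witness_a, \witness_b \in [-1,1]$. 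Hence every coordinate of the integral vector~$\witness$ lies in~$\{-1,0,1\}$, which is exactly what is needed.
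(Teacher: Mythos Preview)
The paper does not prove \Cref{thm:witness}; it is quoted as a known result of Kavitha et al.\ with no argument supplied. Your proof is correct and is in fact the same LP-duality scheme the paper later uses to prove the weighted generalisation (\Cref{lem:witness}): form the primal over the perfect-matching polytope of~$\tilde G$, observe that the optimum equals~$0$ iff $M$ is popular (via \Cref{thm:weight_zero}), and read off the witness from the dual.

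The only substantive addition in your write-up is the restriction of the dual optimum to $\{0,\pm1\}^n$, which the paper's weighted version does not need (there any rational witness suffices). Your argument for this---integrality from total unimodularity of~$N^\top$ together with complementary slackness against the primal optimum~$\tilde M$ to force $\witness_a+\witness_b=0$ on matched edges and $\witness_v=0$ on unmatched loops, then the loop constraints $\witness_v\ge -1$ to pin each coordinate into $[-1,1]$---is exactly the standard route and is sound. One small stylistic remark: once you invoke complementary slackness you no longer need total unimodularity at all, since the tightness equalities plus the loop lower bounds already determine that every coordinate lies in $\{-1,0,1\}$; TU is only needed if you want an integral optimum \emph{before} appealing to complementary slackness.
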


%For a popular matching $M$, a vector $\witness$ as given in Theorem~\ref{thm:witness} is called the {\em witness} of~$M$. A popular matching may have several witnesses.

\subsubsection{Instances with weighted voters}

Ruangwises and Itoh~\cite{DBLP:journals/mst/RuangwisesI21} extended \Cref{thm:weight_zero} to instances with weighted voters. They redefined the vote of $u \in V$ for $v \in N_G(u)$ as
%\begin{linenomath}
\[\weightedvote^M_u (v) := \begin{cases}
  0 & \text{ if } \{u, v\} \in M,\\
  w(u) & \text{ if $M(u) = \bot$ or $v \succ_u M(u)$, and}\\
  -w(u) & \text{ otherwise.}
 \end{cases}\]
% \end{linenomath}
%Whenever we will use~$\weightedvote_u^M (v)$ in the rest of the paper, this will refer to this vote function (and not the special unweighted case introduced in \Cref{sec:twosided}).
For an edge $e = \{u, v\}$, Ruangwises and Itoh~\cite{DBLP:journals/mst/RuangwisesI21} defined $\weightedvote^M (e) := \weightedvote^M_u (v) + \weightedvote^M_{v} (u)$.  For a loop~$e = \{v,v\}$, they set $\weightedvote^M (e) = -w( v)$ if $v$ is matched in~$M$ and $\weightedvote^M ( e) = 0$ otherwise.
Note that for any matching~$M'$ in $G$, we have $\weightedvote^M(\tilde{M'}) := \sum_{e \in \tilde{M}'} \weightedvote^M(e) = \Delta_{w}(M',M)$.

\begin{theorem}[Ruangwises and Itoh~\cite{DBLP:journals/mst/RuangwisesI21}]\label{cl:zero_weight}
    $M$ is popular if and only if for every matching~$\tilde{M}'$ in $\tilde{G}$ it holds that $\weightedvote^M (\tilde{M}') \le 0$,  where~$\tilde{G}$ is the input graph~$G$ augmented with a loop~$\{v, v\}$ for every vertex~${v\in V}$, as defined in \Cref{sec:twosided}.
\end{theorem}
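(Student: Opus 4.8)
The plan is to follow the proof of \Cref{thm:weight_zero}, which is the special case $w \equiv 1$, and simply carry the weights through the calculation. We read ``matching $\tilde M'$ in $\tilde G$'' as in \Cref{sec:twosided}: a set of pairwise disjoint edges and self-loops covering every vertex of $\tilde G$. Such a $\tilde M'$ is exactly the \emph{loop-completion} of the matching $M' := \tilde M' \cap E$ of $G$, that is, $\tilde M' = M' \cup \{\{v,v\} : v \notin V(M')\}$, and this correspondence is a bijection between matchings of $G$ and perfect matchings of $\tilde G$.

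The whole theorem reduces to the identity
\[ \vote^M(\tilde M') \;=\; \Delta_w(M', M) \qquad\text{for every matching } M' \text{ of } G, \]
with $\tilde M'$ the loop-completion of $M'$. Granting this, we are done: $M$ is popular iff $\Delta_w(M, M') \ge 0$ for every matching $M'$ of $G$, which — using $\Delta_w(M,M') = -\Delta_w(M',M)$ — is the same as $\Delta_w(M', M) \le 0$ for every such $M'$, which by the identity is the same as $\vote^M(\tilde M') \le 0$ for every matching $\tilde M'$ of $\tilde G$.

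To prove the identity I would expand both sides as sums over vertices. Write $\Delta_w(M', M) = \sum_{v \in V} c(v)$ where $c(v) = w(v)$ if $v$ strictly prefers $M'$ to $M$, $c(v) = -w(v)$ if $v$ strictly prefers $M$ to $M'$, and $c(v) = 0$ if $M'(v) = M(v)$. On the other side, $\vote^M(\tilde M') = \sum_{e \in \tilde M'} \vote^M(e)$; splitting each edge term $\vote^M(\{u,v\}) = \vote^M_u(v) + \vote^M_v(u)$ between its two endpoints and leaving loop terms alone, and using that $\tilde M'$ covers every vertex exactly once, gives $\vote^M(\tilde M') = \sum_{v\in V} t(v)$, where $t(v) = \vote^M_v(M'(v))$ if $v$ is matched by $M'$ and $t(v) = \vote^M(\{v,v\})$ otherwise. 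It then remains to check $t(v) = c(v)$ for each $v$. For $v$ matched by $M'$ this is exactly the three branches in the definition of $\vote^M_v(\cdot)$: $\{v,M'(v)\}\in M$ gives $M'(v)=M(v)$ and $t(v)=0=c(v)$; $M(v)=\square$ or $M'(v)\succ_v M(v)$ gives ``$v$ prefers $M'$'' and $t(v)=w(v)=c(v)$; and $M(v)\succ_v M'(v)$ gives ``$v$ prefers $M$'' and $t(v)=-w(v)=c(v)$. For $v$ unmatched by $M'$: if $v$ is matched by $M$ then $v$ prefers $M$ and $t(v) = \vote^M(\{v,v\}) = -w(v) = c(v)$; if $v$ is unmatched by $M$ as well then $t(v) = 0 = c(v)$. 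Summing over $v$ yields the identity.

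There is no real obstacle here: weights enter the definition of $\vote^M_u(v)$ only as the multiplicative factor $\pm w(u)$, so the $w\equiv 1$ argument transfers essentially verbatim. The one thing to be careful about is the bookkeeping around self-loops and unmatched vertices — i.e.\ making sure that ``$v$ unmatched in $M'$'' in $G$ is accounted for by the loop $\{v,v\}$ in $\tilde M'$, and that the trichotomy prefers-$M$ / prefers-$M'$ / indifferent is matched correctly to the three cases of the vote function — which is precisely the content of the vertex-local check above.
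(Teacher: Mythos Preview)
Your proof is correct. Note, however, that the paper does not supply its own proof of this statement: it is quoted as a result of Ruangwises and Itoh and used as a black box, so there is no paper proof to compare against. Your argument is the natural one --- it is precisely the weighted analogue of the identity $\vote^M(\tilde{M}') = \Delta_{w=1}(M',M)$ that the paper calls ``easy to see'' in the unweighted setting of \Cref{sec:twosided}, with the vertex-by-vertex bookkeeping spelled out.
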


\section{Witness of popularity}
\label{sec:ver}

We now extend \Cref{thm:witness} to two-sided instances with weighted voters.
The proof is essentially analogous to the proof of \Cref{thm:witness}:
Start with the characterization of popular matchings from \Cref{cl:zero_weight}, formulate this as an LP using the standard bipartite matching LP and then dualize the LP.

\begin{theorem}\label{lem:witness}
 For \textsc{Popular Matching with Weighted Voters}, matching $M$ is popular if and only if there exists a vector $\witness\in \mathbb{Q}^{V}$ with $\sum_{v\in V} \wof{v} = 0$ such that $\wof{v} + \wof{u} \ge \weightedvote^M (e)$ for every edge~$e = \{u, v\} \in E(G)$, $\wof{v} \ge 0$ for every vertex unmatched by~$M$, and $\wof{v} \ge -w (v)$ for every~$v \in V$.
\end{theorem}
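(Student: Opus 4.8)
The plan is to derive this characterization via LP duality, exactly mirroring the argument behind \Cref{thm:witness} but tracking the vertex weights. First I would fix a matching~$M$ and, by \Cref{cl:zero_weight}, observe that $M$ is popular if and only if $\max_{\tilde{M}'} \vote^M(\tilde{M}') \le 0$, where the maximum ranges over all perfect matchings~$\tilde{M}'$ of the looped graph~$\tilde{G}$. The key point is that $\tilde{G}$ is bipartite (loops do not destroy bipartiteness for the purpose of matching polytopes on $A\cup B$ once we treat each loop as a separate allowed self-assignment), so the perfect matching polytope of~$\tilde{G}$ is described by the standard degree constraints: $x_e \ge 0$ for all $e$, and $\sum_{e\in\delta_{\tilde G}(v)} x_e = 1$ for every $v\in V$. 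Hence $\max \vote^M(\tilde{M}')$ equals the optimum of the LP $\max \sum_e \vote^M(e)\, x_e$ subject to these constraints.

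Next I would write down the LP dual. Assigning a free dual variable~$\wof{v}$ to the equality constraint of each vertex~$v$, the dual is $\min \sum_{v\in V}\wof{v}$ subject to $\wof{u}+\wof{v} \ge \vote^M(\{u,v\})$ for every edge~$\{u,v\}\in E$ and $\wof{v} \ge \vote^M(\{v,v\})$ for every loop. By LP duality, $M$ is popular iff this dual optimum is~$\le 0$. Now I would massage the loop constraint into the form stated in the theorem: $\vote^M(\{v,v\}) = -w(v)$ if $v$ is matched by~$M$ and $\vote^M(\{v,v\}) = 0$ otherwise, so the loop constraint reads $\wof{v}\ge -w(v)$ for matched~$v$ and $\wof{v}\ge 0$ for unmatched~$v$. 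To also get $\wof{v}\ge -w(v)$ for \emph{every}~$v$ (including matched ones) with the extra "$\ge 0$ for unmatched" kept, I note $-w(v)\le 0$, so the unmatched constraint is the stronger one there and nothing is lost. Finally, to pin the objective down to exactly~$0$ rather than~$\le 0$, I would argue that one can always shift a feasible dual solution to have $\sum_v \wof{v} = 0$: the vector~$\witness \equiv \bm 0$ is always feasible (since $\vote^M(e)\le 0$ for every $M$-edge and the loop bounds are satisfied — one checks $\vote^M(\{u,v\})\le 0$ when $\{u,v\}\in M$ and more generally the all-zero vector satisfies every constraint because $\vote^M(e) \le w(u)+w(v)$ is not automatic, so here I instead use that if the dual optimum is~$\le 0$ I can scale/translate an optimal~$\witness^\ast$), and more carefully: given optimal~$\witness^\ast$ with $\sum_v\wof[v]^\ast =: s \le 0$, pick any vertex and… — the clean way is to take a convex combination of~$\witness^\ast$ and a feasible point of value exactly matching, or simply note the feasible region for "$\le 0$" can be intersected with the hyperplane $\sum_v \wof v = 0$ without loss because increasing some coordinate keeps feasibility and can only raise the sum to~$0$.

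The one genuinely delicate step is the last one — reconciling "dual optimum $\le 0$" with the theorem's demand that a witness satisfy $\sum_v \wof v = 0$ exactly. The resolution is: (i) if $M$ is popular, take a dual-optimal~$\witness^\ast$ with $\sum_v \wof[v]^\ast \le 0$, and increase coordinates (which preserves all inequality constraints, as they are lower bounds) until the sum reaches exactly~$0$; (ii) conversely, a witness with $\sum_v\wof v = 0$ is a feasible dual point of value~$0$, so the dual optimum is $\le 0$, hence $M$ is popular. I would also remark, as the paper does after \Cref{thm:witness}, that unlike the unit-weight case the witness entries here are rationals in~$[-\max_v w(v),\,\infty)$ rather than in~$\{0,\pm1\}$, and that integrality of the matching polytope (not of the dual) is what we actually used. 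The rest is routine bookkeeping about the loop contributions.
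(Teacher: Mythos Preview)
Your approach is essentially the paper's: LP duality on the looped graph~$\tilde{G}$, using \Cref{cl:zero_weight} and the description of the bipartite perfect-matching polytope by degree constraints. The paper handles the final step (getting $\sum_v \wof{v} = 0$ exactly rather than $\le 0$) more cleanly by implicitly using that $\tilde{M}$ itself is a primal feasible point of value~$0$, so the primal---and hence by strong duality the dual---optimum is always~$\ge 0$; this sidesteps your detour through increasing coordinates and the false start with the all-zero vector.
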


\begin{proof}
 The perfect matching polytope for bipartite graphs (that is, the convex hull of all perfect matchings) can be described by the inequalities $x_e \ge 0$ for every edge~$e$ and $\sum_{e \in \delta (v)} x_e = 1$ for every vertex~$v$~\cite{Birkhoff46} (see also~\cite[Theorem 18.1]{Schrijver}).
 Applying this to the instance~$\tilde{G}$ constructed in \Cref{sec:twosided}, we can express the problem of finding a perfect matching in~$\tilde{G}$ by the following linear program~(\ref{lp:primal}).
 %\begin{linenomath}
 \begin{equation}
 \max \left\{\sum_{e\in E (\tilde{G})} \weightedvote^M (e) x_e \big\vert \sum_{e\in \delta_{\tilde{G}} (v)} x_e = 1 \ \forall v\in V, x_e \ge 0 \ \forall e\in E(\tilde{G})\right\}
 \label{lp:primal}
 \end{equation}
 %\end{linenomath}
 Dualizing LP~(\ref{lp:primal}) results in LP~(\ref{lp:dual}).
 %\begin{linenomath}
 \begin{equation}
 \min \left\{\sum_{v\in V} y_v \big\vert y_v + y_u \ge \weightedvote^M (\{u,v\}) \ \forall \{u, v\} \in E(G), y_v \ge \weightedvote^M (\{v, v\})\right\}
 \label{lp:dual}
 \end{equation}
 %\end{linenomath}
 Note that the matching~$M$ always induces a solution of cost~0 to LP~(\ref{lp:primal}).
 By the strong duality for linear programs (see e.g.\ \cite[Theorem 5.4]{Schrijver}), it follows that LP~(\ref{lp:dual}) admits a solution of cost 0 if and only if the optimal solution of LP~(\ref{lp:primal}) has cost~0.
 This is equivalent to every perfect matching in $\tilde{G}$ having cost at most~0, which is by \Cref{cl:zero_weight} equivalent to $M$ being popular.
\end{proof}

\Cref{lem:witness} motivates the following definition of a witness:

\begin{definition}[Witness]\label{def:witness}
  Let $G = (V = A\cup B, E)$ together with preferences for each~$v\in V$ and vertex weights $w: V \rightarrow \mathbb{Q}_{\geq 0}$ be an instance of \textsc{Popular Matching with Weighted Voters}.
 Given a popular matching $M$, %a \emph{witness of the popularity of $M$}, or, shortly, 
 a \emph{witness of $M$} is a vector~$\witness\in \mathbb{Q}^{V}$ such that \neu{ % for every $v\in V$
 \begin{itemize}
     \item  $\sum_{v\in V} \witness_v = 0$, $\witness_a + \witness_b \ge \weightedvote^M (\{a, b\})$ for every edge~$\{a, b\} \in E(G)$, 
     \item $\witness_v \ge 0$ for each vertex~$v \in V$ which is unmatched in~$M$, and
     \item $\witness_v \ge - w(v)$ for each vertex~$v \in V $ which is matched in~$M$.
 \end{itemize}
}
 %$\sum_{v\in V} \witness_v = 0$, $\witness_a + \witness_b \ge \weightedvote^M (\{a, b\})$ for every edge~$\{a, b\} \in E(G)$, $\witness_v \ge 0$ if $v$ is unmatched in~$M$ and $\witness_v \ge - w(v)$ otherwise for every $v\in V$.

 For matching $M$ and vector $\witness \in \mathbb{Q}^V$, we call an edge $\{a, b\}$ \emph{conflicting} if $\wof{a} + \wof{b} < \weightedvote^M (\{a,b\})$.
\end{definition}
An example of popular matchings together with a witness can be found in \Cref{fig:hcp} together with \Cref{fig:initial-witnesses} in the Appendix.
By \Cref{lem:witness}, a matching~$M$ is popular if and only if there exists a witness of~$M$.

\section{\NP-hardness}
\label{sec:npc}

In this section, we show that two highly restricted variants of \textsc{Popular Matching with Weighted Voters} are \NP-complete.
The first variant (see \Cref{sec:np-const}) assumes a set of non-unit weight vertices of constant size and only 2 kinds of weights for them.
The second variant assumes that all vertices on each side of the bipartition have identical weights (see \Cref{sec:np:identical-weights}). 
Finally, we show a strong inapproximability result for the case that there are utilities on the edges and one aims to find a popular matching of maximum utility (see \Cref{sec:np:weighted}).

\subsection{Constant number of non-unit weight agents}
\label{sec:np-const}
We now show that determining the existence of a popular matching is \NP-complete even if all but 14 agents have weight~1.
In order to do so, we reduce from the \NP-complete problem of deciding, given an instance of \textsc{Popular Matching} (with two-sided preferences and unit weights) and two edges~$e_1$ and $e_2$, whether there is a popular matching containing both~$e_1$ and~$e_2$~\cite{FaenzaKPZ19}.
The reduction consists of replacing each of~$e_1$ and~$e_2 $ by the gadget from \Cref{fig:eg}.
Intuitively, the gadget from \Cref{fig:eg} ensures that every popular matching in the constructed instance ``contains"~$e_1$ and~$e_2$.

    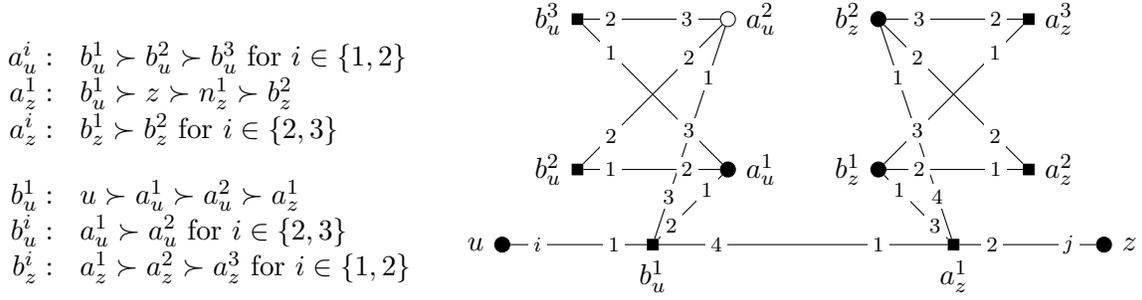
\begin{figure}[htb]
    \centering
       \begin{minipage}{0.7\textwidth}
		\[
		\begin{array}{rl}
    a_u^i&: b_u^1 \succ  b_u^2 \succ b_u^3 $ for $i\in\{1, 2\}\\
    a_z^1 &: b_u^1  \succ z \succ b_z^1 \succ  b_z^2 \\ \vspace{4mm}
    a_z^i &: b_z^1 \succ b_z^2$ for $i \in \{2, 3\}\\
    b_u^1 &: u \succ a_u^1 \succ  a_u^2 \succ a_z^1\\
	b_u^i &: a_u^1 \succ a_u^2$ for $i \in \{2, 3\}\\
    b_z^i&: a_z^1 \succ  a_z^2 \succ a_z^3 $ for $i\in\{1, 2\}
		\end{array}
		\]
	\end{minipage}\vspace{8mm}
 \begin{minipage}{1\textwidth}	
 \centering
        \begin{tikzpicture}
        \node at (-2.75, 0) {};
          \node[vertex, label=90:$u$] (v) at (-2,0) {\scriptsize 1};
          \node[terminal, label=270:$b_u^1$] (xv1) at (-0,0) {\scriptsize 4};
          \node[terminal, label=180:$b_u^2$] (xv2) at (-1,1) {\scriptsize 4};
          \node[terminal, label=180:$b_u^3$] (xv3) at (-1,3) {\scriptsize 4};
          \node[vertex, label=0:$a_u^1$] (yv1) at (1.,1) {\scriptsize 1};
          \node[white-vertex, label=0:$a_u^2$] (yv2) at (1.,3) {\scriptsize 2};
          \node[vertex, label=90:$z$] (w) at (6,0) {\scriptsize 1};
          \node[terminal, label=270:$a_z^1$] (xw1) at (4,0) {\scriptsize 4};
          \node[terminal, label=0:$a_z^2$] (xw2) at (5,1) {\scriptsize 4};
          \node[terminal, label=0:$a_z^3$] (xw3) at (5,3) {\scriptsize 4};
          \node[vertex, label=180:$b_z^1$] (yw1) at (3,1) {\scriptsize 1};
          \node[vertex, label=180:$b_z^2$] (yw2) at (3,3) {\scriptsize 1};

        \draw (v) edge node[pos=0.2, fill=white, inner sep=2pt] {\scriptsize \new{$k$}}  node[pos=0.76, fill=white, inner sep=2pt] {\scriptsize $1$} (xv1);
        \draw (xv1) edge node[pos=0.2, fill=white, inner sep=2pt] {\scriptsize $2$}  node[pos=0.76, fill=white, inner sep=2pt] {\scriptsize $1$} (yv1);
        \draw (xv1) edge node[pos=0.2, fill=white, inner sep=2pt] {\scriptsize $3$}  node[pos=0.76, fill=white, inner sep=2pt] {\scriptsize $1$} (yv2);
        \draw (xv1) edge node[pos=0.2, fill=white, inner sep=2pt] {\scriptsize $4$}  node[pos=0.76, fill=white, inner sep=2pt] {\scriptsize $1$} (xw1);

        \draw (xv2) edge node[pos=0.2, fill=white, inner sep=2pt] {\scriptsize $1$}  node[pos=0.76, fill=white, inner sep=2pt] {\scriptsize $2$} (yv1);
        \draw (xv2) edge node[pos=0.2, fill=white, inner sep=2pt] {\scriptsize $2$}  node[pos=0.76, fill=white, inner sep=2pt] {\scriptsize $2$} (yv2);

        \draw (xv3) edge node[pos=0.2, fill=white, inner sep=2pt] {\scriptsize $1$}  node[pos=0.76, fill=white, inner sep=2pt] {\scriptsize $3$} (yv1);
        \draw (xv3) edge node[pos=0.2, fill=white, inner sep=2pt] {\scriptsize $2$}  node[pos=0.76, fill=white, inner sep=2pt] {\scriptsize $3$} (yv2);

        \draw (w) edge node[pos=0.2, fill=white, inner sep=2pt] {\scriptsize \new{$\ell$}}  node[pos=0.76, fill=white, inner sep=2pt] {\scriptsize $2$} (xw1);
        \draw (xw1) edge node[pos=0.2, fill=white, inner sep=2pt] {\scriptsize $3$}  node[pos=0.76, fill=white, inner sep=2pt] {\scriptsize $1$} (yw1);
        \draw (xw1) edge node[pos=0.2, fill=white, inner sep=2pt] {\scriptsize $4$}  node[pos=0.76, fill=white, inner sep=2pt] {\scriptsize $1$} (yw2);

        \draw (xw2) edge node[pos=0.2, fill=white, inner sep=2pt] {\scriptsize $1$}  node[pos=0.76, fill=white, inner sep=2pt] {\scriptsize $2$} (yw1);
        \draw (xw2) edge node[pos=0.2, fill=white, inner sep=2pt] {\scriptsize $2$}  node[pos=0.76, fill=white, inner sep=2pt] {\scriptsize $2$} (yw2);

        \draw (xw3) edge node[pos=0.2, fill=white, inner sep=2pt] {\scriptsize $1$}  node[pos=0.76, fill=white, inner sep=2pt] {\scriptsize $3$} (yw1);
        \draw (xw3) edge node[pos=0.2, fill=white, inner sep=2pt] {\scriptsize $2$}  node[pos=0.76, fill=white, inner sep=2pt] {\scriptsize $3$} (yw2);
        \end{tikzpicture}
    \end{minipage}
      \caption{An example of the edge gadget for a forced edge $\{u, z\}$, where $u$ ranks~$z $ at the \new{$k$}-th position, and $z$ ranks $u$ at the \new{$\ell$}-th position.
      Squared vertices have weight four, trapezes have weight two, and round vertices have weight one.
      }
      \label{fig:eg}
    \end{figure}%\accom{The prefs were next to the graph, but it looked too crammed.}
    
\begin{theorem}\label{thm:np-harda}
    \textsc{Popular Matching with Weighted Voters} is \NP-com\-plete, even if all but 14 vertices have weight one, and for each vertex, its weight is one, two, or four.
\end{theorem}

We first briefly sketch the proof before giving the formal reduction and correctness proof.
%\begin{proof}[Proof Sketch]
 Given an instance~\neu{$\mathcal{I}$}
 %$\Imatching$
 of \textsc{Popular Matching with Two Forced Edges}, we replace each forced edge~$ \{u, z\}$ by the gadget depicted in \Cref{fig:eg}.
 All agents not contained in such a gadget have weight one.
 We call the resulting instance~$\mathcal{J}$.
 \new{Note that \neu{th}e matchings in instance~$\mathcal{J}$ will always be called $\Jmatching$ (plus possibly a superscript), while matchings in instance~$\mathcal{I}$ will be denoted by~$\Imatching$ (plus possibly a superscript).}
 
 Given a popular matching~$\Imatching$ in~$\mathcal{I}$ containing the two forced edges, we construct a popular matching~$\Jmatching$ in~$\mathcal{J}$ by replacing each forced edge~$\{u, z\}$ by the edges~$\{a_u^2, b_u^2\}$, $\{a_u^1, b_u^3\}$, $\{a_z^2, b_z^2\}$, $\{a_z^3, b_z^1\}$, $\{u, b_u^1\}$, and $\{ a_z^1, z\}$.
 To show that $\Jmatching$ is popular, one first shows that when comparing~$\Jmatching$ to any other matching~$\Jmatchingprime$ in $\mathcal{J}$, then \new{the} summed weighted vote of the agents added in the two edge gadgets will never be in favor of~$\Jmatchingprime$.
 As a second step, we show that if there is a matching~$\Jmatchingprime$ which is more popular than~$\Jmatching$, then we may assume that $\Jmatchingprime$ contains either edges~$\{u, b_u^1\}$ and~$\{z, a_z^1\}$ or neither edge~$\{u, b_u^1\}$ nor~$\{z, a_z^1\}$ for every forced edge~$\{u, z\}$.
 Consequently, if matching~$\Jmatchingprime$ is more popular than $\Jmatching$, then also the matching~$M'$ arising from~$\Jmatchingprime $ by ``inverting" the edge gadgets (i.e.\ removing the edges in the edge gadgets and adding edge~$\{u, z\}$ whenever~$\{u, b_u^1\}$ and $\{a_z^1, z\}$ are contained in~$\Jmatchingprime$) is more popular than~$M$.
 Thus, the popularity of $M$ implies the popularity of~$\Jmatching$.
 
 For the reverse direction, given a popular matching~$\Jmatching$ in~$\mathcal{J}$, the first step is to show that~$\Jmatching$ contains~$\{u, b_u^1\}$ and $\{ a_z^1, z\}$ for every forced edge~$\{u, z\}$.
 The matching~$\Imatching$ arising from $\Jmatching$ by inverting the edge gadgets is then a popular matching in $\mathcal{I}$ which contains every forced edge.
 This finishes the proof sketch.
%\end{proof}

%%%%%%%%%%%%%%%%%%%%%

%\subsection{Proof of Theorem~\ref{thm:np-harda}}

We now give the formal reduction and proof of \cref{thm:np-harda}.
%\npharda*
  \begin{proof}[Proof of \Cref{thm:np-harda}]
  For a set of vertices $X\subseteq V$ and two matchings $\Jmatching$ and~$\Jmatchingprime$, we say that $X$ \emph{prefers~$\Jmatching$ to~$\Jmatchingprime$} if 
  $$\sum_{v\in V^+ (\Jmatching, \Jmatchingprime)\cap X} w(v) - \sum_{v\in V^+ (\Jmatchingprime, \Jmatching\neu{)} \cap X} w(v) > 0.$$
  neu{)}
    By \Cref{cl:zero_weight}, \textsc{Popular Matching with Weighted Voters} is in \NP, as a popular matching can be used as a certificate. To show \NP-hardness, we reduce from \textsc{Popular Matching with Two Forced Edges}, which was shown to be \NP-complete by Faenza et al.~\cite{FaenzaKPZ19}.

\defProblemQuestion{\textsc{Popular Matching with Two Forced Edges}}
{
A popular matching instance and a set $F$ of two edges.
}
{
Does there exist a popular matching $M$ with $F\subseteq M$?
}

We can assume without loss of generality that $F$ is a matching, as otherwise no matching can contain~$F$. Let $\mathcal{I}= (G = ( A \cup B, E), (\succ_v)_{v \in A\cup B}, F)$ be an instance of this problem.

    \textbf{Construction.}  We replace each forced edge $e = \{u, z\}\in F$, where $u\in A$ and $z \in B$, by an edge gadget.
    The edge gadget contains ten vertices $b_u^1$, $b_u^2$, $b_u^3$, $a_u^1$, $a_u^2$ and $a_z^1, a_z^2, a_z^3$, $b_z^1$, and~$b_z^2$, as depicted in \Cref{fig:eg}, and it is asymmetric in $u$ and~$z$. 
    %We denote the vertices of a forced edge by $u$ and $z$, and assume that $u\in A$ and $z \in B$.
    %The gadget contains ten vertices $b_u^1$, $b_u^2$, $b_u^3$, $a_u^1$, $a_u^2$ and $a_z^1, a_z^2, a_z^3$, $b_z^1$, and~$b_z^2$. 
    We set $w( v) = 1$ for all $v\in A \cup B$, $w (b_u^i) = \largeWeight = w (a_z^i)$, $w(a_u^1) = 1$, $w(a_u^2) = 2$, and $w(b_z^i ) = 1$. We call the resulting instance~$\mathcal{J}$.
%    \new{Recall that we will refer to matchings in~$\mathcal{J}$ by $\Jmatching$ (plus possible super- or subscripts).}

    Note that all agents outside the two edge gadgets have weight one, while each edge gadget contains one agent of weight \new{two} and six agents of weight \new{four}.
    Thus, all but 14 vertices have weight one, and the remaining vertices have weight two or four.

    We define a ``projection'' $\pi$ from \new{any} matching \new{$\Imatching$} in~$\mathcal{I}$ \new{(which may contain 0, 1, or 2 forced edges)} to \new{some} matching in~$\mathcal{J}$ via
   % \begin{linenomath}
    \begin{align*}
    \pi(\Imatching) := &(\Imatching\setminus F) \cup \{\{a_u^2, b_u^2\}, \{a_u^1, b_u^3\}, \{a_z^2, b_z^2\}, \{a_z^3, b_z^1\} : \{ u, z\} \in F\} \cup\\
    &\{\{u, b_u^1\}, \{ a_z^1, z\} : \{ u, z\} \in F \cap \neu{\oldImatching}\} \cup \{\{a_z^1, b_u^1\}: \{u, z\} \in F \setminus \Imatching\}.
    \end{align*}
  %  \end{linenomath}
    Intuitively, given a matching~$\Imatching$ in~$\mathcal{I}$, the function~$\pi$ computes an ``equivalent'' matching~$\pi (\Imatching)$ in the instance~$\mathcal{J}$. Equivalent here means that \new{a matching~$\Imatching$ containing both forced edges} is popular if and only if $\pi (\Imatching)$ is popular, which we will prove later, in \Cref{cl:Jpop,cl:Ipop}. 
    Furthermore, we define a ``projection'' $\rho$ from \new{each} matching \new{$\Jmatching$} in $\mathcal{J}$ to \new{some} matching in $\mathcal{I}$ via $\rho (\Jmatching) = (\Jmatching \cap E(\mathcal{I})) \cup \{\{u, z\} \in F : \{u, b_u^1\} \in \Jmatching \land \{a_z^1, z\} \in \Jmatching\}$.
    Intuitively, $\rho$ is the inverse of $\pi$.
    Given a matching~$\Jmatching$ in $\mathcal{J}$, matching~$\rho (\Jmatching)$ is popular \new{in} %$\mathcal{J}$}
    \neu{$\mathcal{I}$} if and only if $\Jmatching$ is popular in~$\mathcal{J}$, as we will see later.
    Note that $\rho(\pi(\Imatching)) = \Imatching$ for any matching $\Imatching$ in~$\mathcal{I}$.

  Before proving the correctness of the reduction, we present two helpful claims.
  First, we show that for every matching~$\Imatching$ in $\mathcal{I}$ containing the forced edges, \new{in order to show popularity of $\pi (\Imatching)$ it is sufficient to compare $\pi (\Imatching)$ with~$\pi (\Imatchingprime)$ for  every matching~$\Imatchingprime \in \mathcal{I}$}.
%  we only have to compare $M$ with a subset of other matchings in $\mathcal{I}$ in order to check the popularity of $\pi (M)$ in~$\mathcal{J}$}.

  \begin{numberedclaim}\label{lem:g}
    Let $\Imatching$ be a matching in $\mathcal{I}$ containing both forced edges, and $\Jmatchingprime$ be any matching in~$\mathcal{J}$.
    Let $X:= \{b_u^i, a_z^i: i\in [3]\}$ and $Y:= \{a_u^j, b_z^j: j\in [2]\}$ for a forced edge $\{u, z\}$.
    Then $X\cup Y$ does not prefer $\Jmatchingprime$ to $\Jmatching := \pi (\Imatching)$.

    If $\Jmatchingprime$ is more popular than $\Jmatching$, then in $\mathcal{J}$ there exists a matching $\Jmatchingpp$ that is more popular than $\Jmatching$ and for each forced edge $\{u, z\}$, the matching~$\Jmatchingpp$ contains either the edge~$\{a_z^1, b_u^1\}$ or the edges $\{u, b_u^1\}$ and $\{a_z^1, z\}$.
  \end{numberedclaim}

  \begin{claimproof}{lem:g}
    First, we show that~$X \cup Y$ does not prefer~$\Jmatchingprime$ to $\Jmatching$.
    In order to do so, we make a case distinction \new{based on which edges are contained in $\Jmatchingprime$.}%on whether~$\{a_z^1, b_u^1\}$ is in~$M'$.

    \new{
    \textbf{Case 1:} $\{a_z^1, b_u^1\} \in \Jmatchingprime$.
    
    T}hen $a_z^1$ prefers~$\Jmatchingprime$ while $b_u^1$ prefers~$\Jmatching$, and thus, their votes cancel out.

    \new{
    \textbf{Case 1.1:} $\{a_u^1, b_u^2\} \in \Jmatchingprime$.
    
    T}hen~$a_u^1$ and~$b_u^2$ prefer~$\Jmatchingprime$, while $b_u^3$ and $a_u^2$ prefer~$\Jmatching$.
    
    \new{\textbf{Case 1.2:} $\{a_u^1, b_u^2\} \notin \Jmatchingprime$.
    
    Then} none of the vertices $a_u^1$, $a_u^2$, $b_u^2$, and $b_u^3$ prefers~$\Jmatchingprime$.

    \new{In each of Case~1.1 and Case~1.2,} $\{a_u^1, a_u^2, b_u^2, b_u^3\}$ does not prefer~$\Jmatchingprime$ to~$\Jmatching$.
    Symmetric arguments show that $\{a_z^\new{2}, a_z^3, b_z^1, b_z^2\}$ does not prefer~$\Jmatchingprime$, and therefore, $X\cup Y$ does not prefer~$\Jmatchingprime$ to~$\Jmatching$.

    \new{
    \textbf{Case 2:} $\{a_z^1, b_u^1\} \notin \Jmatchingprime$.
    
    T}hen neither~$a_z^1$ nor $b_u^1$ prefers~$\Jmatchingprime$ to $\Jmatching$.
    We first look at vertices~$a_u^1$, $a_u^2$, $b_u^1$, $b_u^2$, and $b_u^3$.

    \new{
    \textbf{Case 2.1:} $\{u, b_u^1\} \in \Jmatchingprime$.
    
    T}hen we already saw in \new{Case~1.1} that $\{a_u^1, a_u^2, b_u^2, b_u^3\}$ does not prefer $\Jmatchingprime$ to $\Jmatching$.

    \new{
    \textbf{Case 2.2:} $\{u , b_u^1\} \notin \Jmatchingprime$.
    
    Then}~$b_u^1$ prefers~$\Jmatching$ to~$\Jmatchingprime$.
    Note that~$b_u^2$ can prefer~$\Jmatchingprime$ to $\Jmatching$ only if $\{a_u^1, b_u^2\} \in \Jmatchingprime$, which implies that $\Jmatching \succ_{b_u^3} \Jmatchingprime$.
    Consequently, vertices~$b_u^1$, $b_u^2$, and~$b_u^3$ \new{contribute to the vote} for~$\Jmatching$ by weight at least 4, while $a_u^1$ and $a_u^2$ only cast votes of summed weight 3.
    Therefore, $\{a_u^1, a_u^2, b_u^1, b_u^2, b_u^3\}$ does not prefer $\Jmatchingprime$ to $\Jmatching$.
    
    Symmetric arguments show that also $\{a_z^1, a_z^2, a_z^3, b_z^1, b_z^2\}$ does not prefer~$\Jmatchingprime$ to~$\Jmatching$.

    We now prove the second part of the claim.
    Let~$\{u, z\} $ be a forced edge.
    If $\Jmatchingprime $ contains either~$\{a_z^1, b_u^1\}$ or both of $\{u, b_u^1\}$ and $\{a_z^1, z\}$, then there is nothing to show.
    So assume that $\Jmatchingprime$ contains neither~$\{a_z^1, b_u^1\}$ nor both of $\{u, b_u^1\}$ and $\{a_z^1, z\}$.
    \new{We make a case distinction depending on %how
    \neu{what} $\Jmatchingprime$ looks like.
    In each case, we will construct another matching~$\Jmatchingpp$ which is also more popular than $\Jmatching$.
    In order to show that $\Jmatchingpp$ is more popular than $\Jmatching$, we will show that $\Delta^*:= \Delta_w (\Jmatching, \Jmatchingpp) - \Delta_w (\Jmatching, \Jmatchingprime) \le 0$.
    This implies that $\Delta_w (\Jmatching, \Jmatchingpp)  = \Delta_w (\neu{\oldJmatching, \oldJmatchingprime}) + \Delta^* \le  \Delta_w (\Jmatching, \Jmatchingprime) < 0$, i.e.\ that $\Jmatchingpp$ is more popular than $\Jmatching$.}

    \new{\textbf{Case 1:} $b_u^1$ is unmatched in $\Jmatchingprime$.

    Then the matching~$\Jmatchingpp$ arising from~$\Jmatchingprime $ by adding $\{a_z^1, b_u^1\}$ to~$\Jmatchingprime$ (and possibly deleting an edge incident to~$a_z^1$) is also more popular than $\Jmatching$:
    \neu{Agent~$b_u^1$ prefers $\oldJmatching$ to both $\oldJmatchingprime$ and $\oldJmatchingpp$, so $b_u^1$ contributes 0 to $\Delta^*$.
    Agent~$a_z^1$ prefers $\oldJmatchingpp (a_z^1) = b_u^1$ to $\oldJmatching (a_z^1) = z$ but does not prefer $\oldJmatchingprime (a_z^1)$ to~$\oldJmatching (a_z^1)$.
    Thus, $a_z^1$ contributes at most $- w(a_z^1 ) = - 4$ to~$\Delta^*$.}
%    Agent~$b_u^1$ prefers~$\Jmatchingpp$ to $\Jmatching$ but prefers $\Jmatching$ to~$\Jmatchingprime$, implying that $b_u^1$ contributes $-2 w(b_u^1) = -8$ to $\Delta^*$. \ronecom{P.14,l.362: $b^1_u$ is matched to its best choice in $M_{J}$. How can $M''_J$ be preferred?}
 %   Agent~$a^1_z$ prefers~$\Jmatchingpp$ to~$\Jmatchingprime$, implying that $a^1_z$ contributes at most 0 to~$\Delta^*$.
    Finally, $\Jmatchingprime (a_z^1)$ can contribute at most $2 w(\neu{\oldJmatchingprime (a_z^1)}) = 2$ to~$\Delta^*$.
    Thus, we have $\Delta^* \le \neu{-w (a_z^1) + 2w(\oldJmatchingprime (a_z^1))  = -2}$.
\iffalse
    \begin{linenomath}
    \[
        \Delta_w (\Jmatching, \Jmatchingpp ) \le \Delta_w (\Jmatching, \Jmatchingprime)  - 2w (b_u^1) + 2w(\Jmatchingprime (a_z^1)) = \Delta_w (\new{\Jmatching, \Jmatchingprime}) -6 <0\,.
    \]
    \end{linenomath}
\fi
    }

    \new{\textbf{Case 2:}
    $\Jmatchingprime (b_u^1) = a_u^i$ for some~$i \in \{1,2\}$.

    Then there exists some~$j \in \{2,3\}$ such that $b_u^j$ is unmatched in~$\Jmatchingprime$.}
    The matching~$\Jmatchingpp$ arising from~$\Jmatchingprime$ by adding~$\{a_z^1, b_u^1\}$ \new{and $\{a_u^i, b_u^j\}$} to~$\Jmatchingprime$ (and possibly deleting an edge incident to~$a_z^1$) results in a matching more popular than~$\Jmatching$:
    Only $b_u^1$, $b_u^j$, $a_u^i$, $a_z^1$, and $\Jmatchingprime (a_z^1)$ may vote differently between~$\Jmatchingprime$ and~$\Jmatching$ compared to their vote between~$\Jmatchingpp$ and $\Jmatching$.
%    have $\vote^{M'}_a (M_{\mathcal{J}} (a)) \neq \vote^{M''}_a (M_{\mathcal{J}} (a))$.
    Agent~$b_u^1$ prefers~$\Jmatching$ to both~$\Jmatchingprime$ and $\Jmatchingpp$.
    \new{Thus, $b_u^1 \in V^+ (\Jmatching, \Jmatchingprime)$ if and only if $b_u^1 \in V^+ (\Jmatching, \Jmatchingpp)$ implying that $b_u^1$ contributes 0 to $\Delta^*$.}
    \new{
    Agent~$b_u^j$ is unmatched in~$\Jmatchingprime$ and therefore prefers both~$\Jmatchingpp$ and $\Jmatching$ to $\Jmatchingprime$.
    Thus, $b_u^j$ contributes at most~$0$ to $\Delta^*$.
    Agent~$a_u^j$ is matched to its last choice in~$\Jmatching$ and thus does not prefer~$\Jmatching$ to $\Jmatchingprime$ or~$\Jmatchingpp$.
    Thus, $a_u^j$ contributes at most $w(a_u^j)\le 2$ to~$\Delta^*$.
    }
%    Agent~$M' (b_u^1)$ \new{can have arbitrary preferences between $M'$, $M''$, and $M_\mathcal{J}$.
%    Therefore, $M' (b_u^1)$ contributes at most $2 w( M' (b_u^1))$ to $\Delta_w (M_{\mathcal{J}}, M'') - \Delta_w (M_{\mathcal{J}}, M')$.}
    Agent~$a_z^1$ prefers~$\Jmatchingpp$ to~$\Jmatching$.
    \new{Therefore, $a_z^1$ contributes $0$ to $\Delta^*$ if $a_z^1$  %is
    prefers~$\Jmatchingprime$ to $\Jmatching$, $-w(a_z^1)$ if $a_z^1$ is indifferent between $\Jmatchingprime $ and $\Jmatching$, and $-2 w(a_z^1)$ otherwise.} \neu{From here on, we consider two subcases.}

    \new{\textbf{Case 2.1:} $\{a_z^1, z\} \in \Jmatchingprime$.
    
    T}hen $a_z^1$ and $z $ are indifferent between~$\Jmatchingprime$ and $\Jmatching$.
    \new{Thus, $a_z^1 $ contributes $w(a_z^1)=-4$ to $\Delta^*$ and $z$ contributes $w(z) = 1$ to $\Delta^*$. Therefore, we have $\Delta^* \le 2 -4 + 1  = - 1$.}
\iffalse
%    , implying that
    \begin{linenomath}
    \[
        \Delta_w (\Jmatching, \Jmatchingpp ) \le \Delta_w (\Jmatching, \Jmatchingprime) \new{+w(a^j_u)} - w (a_z^1) \new{+w(z)} \le \Delta_w (\new{\Jmatching, \Jmatchingprime}) \new{-1}< 0
    \]
    \end{linenomath}
    implying that also $\Jmatchingpp$ is more popular than~$\Jmatching$.
\fi
    
    \new{\textbf{Case 2.2:} $\{a_z^1, z\} \notin \Jmatchingprime$.
    
    T}hen $a_z^1$ prefers~$\Jmatchingpp$ to~$\Jmatching$ \new{but prefers $\Jmatching$} to~$\Jmatchingprime$.
    \new{Thus, $a_z^1$ contributes $-2 w(a_z^1) = 8$ to~$\Delta^*$.}
 \new{Agent~$\Jmatchingprime (a_z^1)$ can have arbitrary preferences between~$\Jmatchingprime$, $\Jmatchingpp$, and~$\Jmatching$ and \neu{hence $M'_\mathcal{J} (a_1^z)$} contributes at most $2 w(\Jmatchingprime (a_z^1)) = 2$ to~$\Delta^*$.}
    Thus, we have \new{$\Delta^* \le w(a_u^j) - 2 w(a_z^1) + 2 w(\Jmatchingprime (a_z^1)) \le 2-8+2 = -4 $}.
%    Thus, only $M (b_u^1)$ may prefer~$M_{\mathcal{J}}$ to
%    but does not prefer~$M'$ to~$M_\mathcal{J}$ to~$M'$. Further, at most~$M' (a_z^1)$ and $M' (b_u^1)$ prefer~$M'$ to~$M_\mathcal{J}$ and $M_\mathcal{J}$ to $M''$.
%    Consequently, we have (note that the weight of~$v \in \{a_z^1, M' (a_z^1), M'(b_u^1)\}$ is multiplied by~2 as its votes may change from $+w(v)$ to $-w(v)$) 
\iffalse
\begin{linenomath}
    \begin{align*}
        \Delta_w (\Jmatching, \Jmatchingpp) & \le \Delta_w (\Jmatching, \Jmatchingprime) \new{+w(a_u^j)} - 2 w(a_z^1) + 2 w(\Jmatchingprime (a_z^1))\\
        & < \Delta_w (\new{\Jmatching, \Jmatchingprime}) < 0
    \end{align*}
    \end{linenomath}
    implying that also $\Jmatchingpp$ is more popular than~$\Jmatching$.
\fi
    
%    If $M'$ contains~$\{a_z^1, z\}$ but not $\{u, b_u^1\}$, then the matching~$M''$ arising from $M'$ by adding~$\{a_z^1, b_u^1\}$ to $M'$ (and deleting~$\{a_z^1, z\}$ as well as possibly an edge incident to~$b_u^1$) is also more popular than~$M_\mathcal{J}$ as~$a_z^1$ now votes for~$M''$ but prefers $M_\mathcal{J}$ to~$M'$ while at most~$M' (a_z^1) = z$ and $M' (b_u^1)$ prefer~$M_\mathcal{J}$ to $M''$ while preferring~$M'$ to $M_\mathcal{J}$.
    
%    Finally, we consider the case that
    \new{
    \textbf{Case 3:}
    $\Jmatchingprime$ contains~$\{u, b_u^1\}$.
    }

    Then the matching~$\Jmatchingpp$ arising from $\Jmatchingprime$ by adding~$\{a_z^1, b_u^1\}$ to $\Jmatchingprime$ (and deleting~$\{u, b_u^1\}$ as well as possibly an edge incident to~$a_z^1$) is also more popular than~$\Jmatching$:
    Agent~$b_u^1$ is indifferent between $\Jmatching$ and $\Jmatchingprime$ but prefers $\Jmatching$ to~$\Jmatchingpp$.
    \new{Thus, $b_u^1$ contributes $w(b_u^1)$ to $\Delta^*$.}
    Agent~$a_z^1$ prefers~$\Jmatchingpp$ to~$\Jmatching$ but prefers $\Jmatching$ to~$\Jmatchingprime$.
    \new{Thus, $a_z^1$ contributes $-2 w(a_z^1)$ to $\Delta^*$.}
    \new{Agent~$\Jmatchingprime(a_z^1)$ \neu{(respectively $\oldJmatchingprime (b_u^1) = u$)} contributes at most $2 w(\neu{\oldJmatchingprime (a_z^1)})$ \neu{(respectively $2 w(\neu{\oldJmatchingprime (b_u^1)}) = 2$)} to $\Delta^*$.}
    Consequently, we have \new{$\Delta^* \le - 2 w(a_z^1) + w(b_u^1) + 2 w(\Jmatchingprime (a_z^1)) + 2 w(\Jmatchingprime (b_u^1)) = 0$ (using $w(\Jmatchingprime(b_u^1)\neu{)} =1 $ as $\Jmatchingprime (b_u^1)  = u$ for the last equality).}
\iffalse
    \begin{linenomath}
    \begin{align*}
        \Delta_w (\Jmatching, \Jmatchingpp) & \le \Delta_w (\Jmatching, \Jmatchingprime) - 2 w(a_z^1) + w(b_u^1) + 2 w(\Jmatchingprime (a_z^1)) + 2 w(\Jmatchingprime (b_u^1)) \\
        & = \Delta_w (\new{\Jmatching, \Jmatchingprime}) < 0
    \end{align*}
    \end{linenomath}
    implying that also $\Jmatchingpp$ is more popular than~$\Jmatching$.
\fi
  \end{claimproof}

  We now show that ``projecting'' two matchings~$\Imatching$ and $\Imatchingprime$ from $\mathcal{I}$ to $\mathcal{J}$ via $\pi$ does not influence whether $\Imatching$ is more popular than $\Imatchingprime$ or not.
  This will be used to prove that for a popular matching~$\Jmatching$ in $\mathcal{J}$, matching~$\rho (\Jmatching)$ is popular in~$\mathcal{I}$.
  
  \begin{numberedclaim}\label{lem:f}
    For any two matchings $\Imatching$ and $\Imatchingprime$ in $\mathcal{I}$, we have $\Delta (\Imatching, \Imatchingprime) = \Delta_w (\pi(\Imatching), \pi(\Imatchingprime))$.
  \end{numberedclaim}

  \begin{claimproof}{lem:f}
    Consider a forced edge $\{u, z\}$ that appears in $\Imatching$, but not in $\Imatchingprime$.
    Then $\pi (\Imatching) \succ_{b_u^1} \pi (\Imatchingprime)$ and $\pi (\Imatchingprime) \succ_{a_z^1} \pi (\Imatching)$ hold.
    Similarly, if the forced edge $\{u, z\}$ is contained in $\Imatchingprime$, but not in~$\Imatching$,
    then $\pi (\Imatchingprime) \succ_{b_u^1} \pi (\Imatching)$ and $\pi (\Imatching) \succ_{a_z^1} \pi (\Imatchingprime)$ hold.

    Since $b_u^1 $ and $a_z^1$ have the same weight, and all other vertices are matched identically in~$\Imatching$ and~$\Imatchingprime$, the claim follows.
  \end{claimproof}

\textbf{Correctness.} We are now ready to prove the correctness of the reduction.
\begin{numberedclaim}\label{cl:Jpop}
  For each $\Imatching$ in $\mathcal{I}$ that is a popular matching containing the set~$F$ of forced edges, there is a popular matching in~$\mathcal{J}$.
\end{numberedclaim}
    \begin{claimproof}{cl:Jpop}
    Let $\Imatching$ be a popular matching in $\mathcal{I}$ containing the set~$F$ of forced edges.  We claim that $\Jmatching := \pi(\Imatching)$ is popular in $\mathcal{J}$.
    Assume for a contradiction that $\Jmatching$ \new{is less popular than} a matching~$\Jmatchingprime$.
    By \Cref{lem:g}, we may assume that for each forced edge $\{u, z\}$, matching~$\Jmatchingprime$ contains either edges~$\{u, b_u^1\} $ and $\{a_z^1, z\}$ or edge~$\{a_z^1, b_z^1\}$.
    We define $\Imatchingprime := \rho(\Jmatchingprime)$, and claim that $\Imatchingprime$ \new{is more popular than}~$\Imatching$, a contradiction to the popularity of~$\Imatching$.

    Since $\Jmatchingprime$ contains either edges~$\{u, b_u^1\}$ and $\{a_z^1, z\}$ or edge~$\{a_z^1, b_z^1\}$ for every forced edge~$\{u, z\}$, every vertex in $A \cup B$ votes between $\Imatching$ and~$\Imatchingprime$ the same as between $\Jmatching$ and $\Jmatchingprime$.
    Furthermore, the set~$S$ of vertices from the edge gadget do not prefer $\Jmatchingprime$ to $\Jmatching$ by \Cref{lem:g}.
    It follows that the vertices in~$V$ prefer $\Jmatchingprime$ to $\Jmatching$, and thus, $\Imatchingprime$ is more popular than $\Imatching$.
    \end{claimproof}

\begin{numberedclaim}\label{cl:Ipop}
  For each popular matching $\Jmatching$ in~$\mathcal{J}$, there is a popular matching in $\mathcal{I}$ that contains the set~$F$ of forced edges.
\end{numberedclaim}
\begin{claimproof}{cl:Ipop}
    Let $\Jmatching$ be a popular matching in $\mathcal{J}$. We first show by case distinction that for each forced edge $\{u, z\}\in F$, matching~$\Jmatching$ contains edges $\{u, b_u^1\}$ and~$\{a_z^1, z\}$.
    We assume for a contradiction that $\Jmatching$ does not contain these edges, and distinguish three cases.
    For each of the three cases, we construct a more popular matching, contradicting the popularity of~$\Jmatching$.
    \begin{labeling}{Case 3:}
     \item[Case 1:] Vertex $b_u^1$ is matched neither to $u$ nor to~$a_z^1$.\\
     In this case, we construct a more popular matching $\Jmatchingprime$ by replacing the edges $\{a_u^1, \Jmatching (a_u^1)\} $ and $\{a_u^2, \Jmatching(a_u^2)\}$ by the edges $\{a_u^1, \Jmatching(a_u^2)\}$ and $\{a_u^2, b_u^j\}$, where $b_u^j$ is unmatched in $\Jmatching$.
     The matching $\Jmatchingprime$ is more popular than~$\Jmatching$ as $b_u^j$ and $\Jmatching(a_u^2)$ prefer $\Jmatchingprime$ while at most $\Jmatching(a_u^1)$, %vertex~$a_u^1$, and $a_u^2$ 
     \neu{vertices $a_u^1$ and $a_u^2$} prefer $\Jmatching$ to $\Jmatchingprime$, and it holds that $w(b_u^j) + w(\Jmatching(a_u^2)) = 8 > w(\Jmatching(a_u^1)) + w(a_u^1) + w(a_u^2) = 7$.

     \item[Case 2:] Vertex $a_z^1$ is matched neither to $z$ nor to $b_u^1$.\\
     This case is symmetric to Case 1.

     \item[Case 3:] Vertex $b_u^1$ is matched to $a_z^1$.\\
     \new{In this case, we construct a matching $\Jmatchingprime$ \new{from $\Jmatching$} by matching $b_u^1$ to $a_u^2$, and $a_u^1 $ to~$\Jmatching(a_u^2)$. Matching $\Jmatchingprime$ is more popular than $\Jmatching$, because \begin{itemize}
     \item vertices $\Jmatching(a_u^2)$, $b_u^1$, and $a_u^2$ prefer $\Jmatchingprime$, while at most $a_z^1$, vertex~$b_u^{j}$ for at most one $j\in \{2, 3\}$, and~$a_u^1$ prefer $\Jmatching$,
     \item and $w(b_u^1 ) = w (a_z^1) = w(\Jmatching (a_u^2)) = w( b_u^{j})$, and $w(a_u^2) > w(a_u^1)$.
     \end{itemize}}
     %In this case, we construct a more popular matching $M'$ by matching $b_u^1$ to $a_u^2$, and~$a_u^1 $ to $M(a_u^2)$, as $M(a_u^2)$, vertex~$b_u^1$, and $a_u^2$ prefer $M'$, while at most $a_z^1$, vertex~$b_u^{j}$ for at most one $j\in \{2, 3\}$, and~$a_u^1$ prefer $M$, and $w(b_u^1 ) = w (a_z^1) = w(M (a_u^2)) = w( b_u^{j})$, and $w(a_u^2) > w(a_u^1)$.
    \end{labeling}
    
    Thus,~$\Imatching := \rho(\Jmatching)$ contains every forced edge.
    It remains to show that $\Imatching$ is popular.
    For a contradiction, assume that $\Imatchingprime$ is more popular than $\Imatching$.
    We claim that $\Jmatchingprime := \pi (\Imatchingprime)$ wins against~$\Jmatching$, contradicting the popularity of~$\Jmatching$.

    For every forced edge~$\{u, z\}$, matching~$\Jmatching$ has to contain the edges $\{a_u^2, b_u^2\}$ and $\{a_u^1, b_u^3\}$, as \new{otherwise} matching~$b_u^1$ and $b_u^2$ along these edges yields a more popular matching.
    \neu{Similarly, $\oldJmatching$ contains edges~$\{a_z^2, b_z^2\}$ and~$\{a_z^3, b_z^1\}$ or edges~$\{a_z^2, b_z^1\}$ and $\{a_z^3, b_z^2\}$ as otherwise $\oldJmatching$ would not be popular.
    In both cases, the votes of $a_z^2$, $a_z^3$, $b_z^1$, and $b_z^2$ sum up to zero.
    Thus, by the same arguments as in the proof of \Cref{lem:f}, we have $ 0 < \Delta_w (\oldImatchingprime, \oldImatching) = \Delta_w (\oldJmatchingprime, \oldJmatching)$ (using that $\oldImatchingprime$ is more popular than $\oldImatching$ for the inequality).
    This contradicts the popularity of $\oldJmatching$.}
%   Symmetrically, $\Jmatching$ contains edges~$\{a_z^2, b_z^2\}$ and~$\{a_z^3, b_z^1\}$. \ronecom{P.17,l.456: Why could $M_J$ not also contain the edges ${a3z, b2z}$ and ${a1z, b2z}$?}
%   Thus, we have $\Jmatching = \pi(\Imatching)$.
%   Since $\Imatchingprime$ is more popular than~$\Imatching$ and by \Cref{lem:f}, we have $0 < \Delta_w (\new{\Imatchingprime, \oldImatching) = \Delta_w (\oldJmatchingprime, \oldJmatching})$, \rcom{Line 457: $\Delta_w(M_\mathcal{I}, M'_\mathcal{I}) \rightarrow \Delta_w(M'_\mathcal{I}, M_\mathcal{I})$} contradicting the popularity of~$\Jmatching$.
  \end{claimproof}
  
  This finishes the proof of \Cref{thm:np-harda}.
  \end{proof}
  
%%%%%%%%%%%%%%%%

\subsection{Identical weights on each side}
\label{sec:np:identical-weights}

We now consider the variant that all agents on one side have weight one while all agents on the other side have weight $c$ for some~$1 < c \le 2$.
To show that also this variant of \textsc{Popular Matching with Weighted Voters} is \NP-complete, we reduce from \textsc{3-SAT}.
As usual, the reduction is based on variable and clause gadgets.
However, the reduction also uses another gadget which we call "6-path gadget" (one might view this gadget also as part of the variable gadgets).
The idea behind the 6-path gadget is to ensure that the two solutions from the variable gadgets do not weakly dominate each other (where weak dominance is defined as in \Cref{sec:witness}).

\begin{theorem}\label{thm:np-hardb}
    For any constant ${1 < c \le 2}$, \textsc{Popular Matching with\linebreak Weighted Voters} is \NP-complete even if $w (a) = c$ for all $a\in A$ and $w (b) = 1 $ for all $b\in B$.
\end{theorem}

%%%%%%%%%%%%%

%\nphardb*

  \begin{proof}
    Membership in \NP\ was shown in \Cref{cl:zero_weight}.
    Fix a constant $1 < c \le 2$.
    To show \NP-hardness, we reduce from \textsc{3-SAT}, the restriction of \textsc{Satisfiability} to instances where every clause contains exactly three literals. Let~$X$ be the set of variables.
    
  \textbf{Construction.} We first describe the 3 types of gadgets in our reduction.
  
  \textbf{\unnamedGadget.}
  The \unnamedGadget\ consists of a path on 6 vertices $a_1$, $a_2$, $a_3$, $b_1$, $b_2$, and~$b_3$.
  The preferences of the vertices are described in \Cref{fig:ug} (where vertices $a_x$ for every~$x \in X$ will be defined when describing variable gadgets).
  
 \begin{figure}
     \centering
     \begin{minipage}{0.4\textwidth}
		\[
		\begin{array}{rl}
    a_1 & : b_1  \\
    a_2 & : b_1 \succ b_2\\\vspace{4mm}
    a_3 &: b_2 \succ b_3\\
    b_1 &: a_1 \succ a_2\\
    b_2 & : a_2 \succ a_3\\
    b_3 &: \{a_x : x\in X\} \succ a_3\\
    &\\
        a_x & : b_x^t \succ b_x^f \succ b_3\\ \vspace{4mm}
        \bar a_x & : b_x^t \succ b_x^f \\ 
        b_x^t & : a_x \succ \bar a_x\\
        b_x^f & : a_x \succ \bar a_x
		\end{array}
		\]
	\end{minipage}
	\begin{minipage}{0.55\textwidth}	
     \begin{tikzpicture}
          \node[Avertex, label=180:$a_1$] (a1) at (-1,1) {};
          \node[Avertex, label=180:${a}_2$] (a2) at (-1,-1) {};
          \node[Avertex, label=180:${a}_3$] (a3) at (-1,-3) {};
          \node[Bvertex, label=0:$b_1$] (b1) at (1.5,1) {};
          \node[Bvertex, label=0:$b_2$] (b2) at (1.5,-1) {};
          \node[Bvertex, label=270:$b_3$] (b3) at (1.5,-3) {};

        \draw (a1) edge node[pos=0.2, fill=white, inner sep=2pt] {\scriptsize $1$}  node[pos=0.76, fill=white, inner sep=2pt] {\scriptsize $1$} (b1);
        
        \draw (a2) edge node[pos=0.2, fill=white, inner sep=2pt] {\scriptsize $1$}  node[pos=0.76, fill=white, inner sep=2pt] {\scriptsize $2$} (b1);
        \draw (a2) edge node[pos=0.2, fill=white, inner sep=2pt] {\scriptsize $2$}  node[pos=0.76, fill=white, inner sep=2pt] {\scriptsize $1$} (b2);

        \draw (a3) edge node[pos=0.2, fill=white, inner sep=2pt] {\scriptsize $1$}  node[pos=0.76, fill=white, inner sep=2pt] {\scriptsize $2$} (b2);
        \draw (a3) edge node[pos=0.2, fill=white, inner sep=2pt] {\scriptsize $2$}  node[pos=0.76, fill=white, inner sep=2pt] {\scriptsize $|X|+1$} (b3);
        
        \begin{scope}[xshift=4.5cm, yshift = -3 cm]
          \node[terminal, label=270:$a_x$] (xv2) at (-1,1) {};
          \node[terminal, label=90:$\bar{a}_x$] (xv3) at (-1,3) {};
          \node[vertex, label=0:$b_x^t$] (yv1) at (1.,1) {};
          \node[vertex, label=0:$b_x^f$] (yv2) at (1.,3) {};
        \end{scope}

        \draw (xv2) edge node[pos=0.2, fill=white, inner sep=2pt] {\scriptsize $1$}  node[pos=0.76, fill=white, inner sep=2pt] {\scriptsize $1$} (yv1);
        \draw (xv2) edge node[pos=0.2, fill=white, inner sep=2pt] {\scriptsize $2$}  node[pos=0.76, fill=white, inner sep=2pt] {\scriptsize $1$} (yv2);

        \draw (xv3) edge node[pos=0.2, fill=white, inner sep=2pt] {\scriptsize $1$}  node[pos=0.76, fill=white, inner sep=2pt] {\scriptsize $2$} (yv1);
        \draw (xv3) edge node[pos=0.2, fill=white, inner sep=2pt] {\scriptsize $2$}  node[pos=0.76, fill=white, inner sep=2pt] {\scriptsize $2$} (yv2);
        
        \draw (b3) edge node[pos=0.2, fill=white, inner sep=2pt] {\scriptsize $5$}  node[pos=0.76, fill=white, inner sep=2pt] {\scriptsize $3$} (xv2);
     \end{tikzpicture}
     \end{minipage}
     \caption{The \unnamedGadget\ (left) and the variable gadget for a variable~$x$ (right). Vertices in $A$ are squared.
     \new{Agent~$b_3$ ranks the agents~$a_x$ ($x \in X$) from the variable gadgets at rank $1$ to~$ |X|$. For concreteness, we assign rank 5 to the agent~$a_x$ in the above example.}}
     \label{fig:ug}\label{fig:vg}
 \end{figure}
  
  \textbf{Variable gadgets.} An example of a variable gadget is depicted in \Cref{fig:vg}.
    The variable gadget for a variable $x$ contains four vertices $a_x$, $\bar{a}_x$, $b_x^t$, and $b_x^f$.
    Vertex~$a_x$ as well as $\bar{a}_x$ prefers $b_x^t$ to $b_x^f$, and vertices~$b_x^t$ and $b_x^f$ prefer $a_x $ to $\bar{a}_x$. Intuitively, matching $a_x$ to $b_x^t$ (and therefore also $\bar{a}_x $ to $b_x^f$) corresponds to setting variable $x$ to $\LogicTRUE$, while matching $\bar{a}_x$ to $b_x^t$ (and therefore also $a_x$ to $b_x^f$) corresponds to setting variable $x$ to $\LogicFALSE$.
\begin{comment}
        \begin{figure}
      \centering
       \begin{minipage}{0.4\textwidth}
		\[
		\begin{array}{rl}
        a_x & : b_x^t \succ b_x^f \succ b_3\\ \vspace{4mm}
        \bar a_x & : b_x^t \succ b_x^f \\ 
        b_x^t & : a_x \succ \bar a_x\\
        b_x^f & : a_x \succ \bar a_x
		\end{array}
		\]
	\end{minipage}\begin{minipage}{0.4\textwidth}	
        \begin{tikzpicture}
          \node[terminal, label=180:$a_x$] (xv2) at (-1,1) {};
          \node[terminal, label=180:$\bar{a}_x$] (xv3) at (-1,3) {};
          \node[vertex, label=0:$b_x^t$] (yv1) at (1.,1) {};
          \node[vertex, label=0:$b_x^f$] (yv2) at (1.,3) {};

        \draw (xv2) edge node[pos=0.2, fill=white, inner sep=2pt] {\scriptsize $1$}  node[pos=0.76, fill=white, inner sep=2pt] {\scriptsize $1$} (yv1);
        \draw (xv2) edge node[pos=0.2, fill=white, inner sep=2pt] {\scriptsize $2$}  node[pos=0.76, fill=white, inner sep=2pt] {\scriptsize $1$} (yv2);

        \draw (xv3) edge node[pos=0.2, fill=white, inner sep=2pt] {\scriptsize $1$}  node[pos=0.76, fill=white, inner sep=2pt] {\scriptsize $2$} (yv1);
        \draw (xv3) edge node[pos=0.2, fill=white, inner sep=2pt] {\scriptsize $2$}  node[pos=0.76, fill=white, inner sep=2pt] {\scriptsize $2$} (yv2);
        \end{tikzpicture}
    \end{minipage}
      \caption{The variable gadget. Vertices in $A$ are squared.}
      \label{fig:vg}
    \end{figure}
\end{comment}
    
Every variable gadget is connected to the \unnamedGadget\ via edge~$\{a_x, b_3\}$, where $a_x$ prefers both~$b^f_x$ and $b^t_x$ to~$b_3$, and $b_3$ prefers $a_x$ to~$a_3$ (the preferences of $b_3$ between two vertices $a_x$ and $a_{x'}$ are arbitrary).
    
\textbf{Clause gadgets.} An example of a clause gadget is depicted in \Cref{fig:clause-gadget}.
A clause gadget for clause~$C$ contains twelve vertices $a^C_{k}$, $\hat a^C_{k}$, $b^C_{ k}$, and $\hat b^C_{ k}$ for all $k\in [3]$.
The preferences of these vertices over other vertices of the clause gadget are described in \Cref{fig:clause-gadget}.
There are several popular matchings inside the depicted gadget.
Our proof will use the following three popular matchings, indicating that the first, second, or third literal of the clause is satisfied.
%\begin{linenomath}
  \begin{align*}      
  M_1^C &=\left\{ \{\hat{a}^C_{1}, b^C_1\}, \{a^C_{1}, \hat{b}^C_1\}, \{\hat{a}^C_{2}, \hat{b}^C_2\}, \{a^C_{2}, b^C_2\}, \{\hat{a}^C_{3}, \hat{b}^C_3\}, \{a^C_{3}, b^C_3\}\right\}\\
  M_2^C &=\left\{ \{\hat{a}^C_{1}, \hat{b}^C_1\}, \{a^C_{1}, b^C_1\}, \{\hat{a}^C_{2}, b^C_2\}, \{a^C_{2}, \hat{b}^C_2\}, \{\hat{a}^C_{3}, \hat{b}^C_3\}, \{a^C_{3}, b^C_3\}\right\}\\
  M_3^C &=\left\{ \{\hat{a}^C_{1}, \hat{b}^C_1\}, \{a^C_{1}, b^C_1\}, \{\hat{a}^C_{2}, \hat{b}^C_2\}, \{a^C_{2}, b^C_2\}, \{\hat{a}^C_{3}, b^C_3\}, \{a^C_{3}, \hat{b}^C_3\}\right\}
  \end{align*}
%  \end{linenomath}
  In \Cref{fig:clause-gadget}, these are marked by dotted, thick, and gray edges, in this order. Their three witnesses are shown next to the respective vertices, in green, red, and gray, in this order for $M_1^C, M_2^C$, and~$M_3^C$. Intuitively, the $k$-th literal of clause~$C$ is selected to be satisfied if the popular matching contains edges $\{a^C_k, \hat b^C_k\}$ and $\{\hat a^C_k, b^C_k\}$, while a matching containing edges~$\{a^C_i, b^C_i\}$ and $\{\hat a^C_i, \hat b^C_i\}$ poses no condition on whether the $i$-th literal is satisfied. The clause gadget is designed in such a way that it selects an arbitrary literal of the clause to be $\LogicTRUE$.

\begin{figure}
\centering
 %      \begin{minipage}{0.3\textwidth}
%	\end{minipage}
%	\begin{minipage}{0.6\textwidth}
  \begin{tikzpicture}
          \node[terminal, label=270:$a^C_{1}$, label={[xshift=-20pt]180:\color{green}1, \color{red}$2-c$, \color{gray}$2-c$}] (a1) at (-1,1) {};
          \node[terminal, label=180:$\hat{a}^C_{1}$, label={[xshift=-20pt]180:\color{green}$-c$, \color{red}$1$, \color{gray}$1$}] (a1p) at (-1,3) {};
          \node[vertex, label={[xshift=-.2cm]275:$b^C_{ 1}$}, label={[xshift=20pt]0:\color{green}$c$, \color{red}$c-2$, \color{gray}$c-2$}] (b1) at (1.,1) {};
          \node[vertex, label=0:$\hat{b}^C_{1}$, label={[xshift=20pt]0:\color{green}$-1$, \color{red}$-1$, \color{gray}$-1$}] (b1p) at (1.,3) {};

        \node[terminal, label=180:$a^C_{2}$, label={[xshift=-25pt]180:\color{green}$-c$, \color{red}$1$, \color{gray}$-c$}] (a2) at ($(a1) + (0, -4)$) {};
        \node[terminal, label={[xshift=-5pt]180:$\hat{a}^C_{2}$}, label={[xshift=-20pt]180:\color{green}$-1$, \color{red}$-c$, \color{gray}$-1$}] (a2p) at ($(a2) + (0, 2)$) {};
        \node[vertex, label=0:$b^C_{ 2}$, label={[xshift=20pt]0:\color{green}$c$, \color{red}$c$, \color{gray}$c$}] (b2) at ($(a2) + (2, 0)$) {};
        \node[vertex, label={[xshift=-.1cm]0:$\hat{b}^C_{ 2}$}, label={[xshift=20pt]0:\color{green}$1$, \color{red}$-1$, \color{gray}$1$}] (b2p) at ($(b2) + (0, 2)$) {};

        \node[terminal, label=270:$a^C_{3}$, label={[xshift=-20pt]180:\color{green}$-c$, \color{red}$2-c$, \color{gray}$1$}] (a3) at ($(a2) + (0, -4)$) {};
        \node[terminal, label=270:$\hat{a}^C_{ 3}$, label={[xshift=-20pt]180:\color{green}$-1$, \color{red}$1$, \color{gray}$-c$}] (a3p) at ($(a3) + (0, 2)$) {};
        \node[vertex, label=0:$b^C_{3}$, label={[xshift=20pt]0:\color{green}$c$, \color{red}$c-2$, \color{gray}$c$}] (b3) at ($(a3) + (2, 0)$) {};
        \node[vertex, label=315:$\hat{b}^C_{3}$, label={[xshift=20pt]0:\color{green}$1$, \color{red}$-1$, \color{gray}$-1$}] (b3p) at ($(b3) + (0, 2)$) {};

        \begin{scope}[xshift=-8cm, yshift = -0 cm]
          \node[terminal, label=270:$a_x$] (xv2) at (-1,1) {};
          \node[terminal, label=90:$\bar{a}_x$] (xv3) at (-1,3) {};
          \node[vertex, label=0:$b_x^t$] (yv1) at (1.,1) {};
          \node[vertex, label=90:$b_x^f$] (yv2) at (1.,3) {};
        \end{scope}

        \draw (xv2) edge node[pos=0.2, fill=white, inner sep=2pt] {\scriptsize $1$}  node[pos=0.76, fill=white, inner sep=2pt] {\scriptsize $1$} (yv1);
        \draw (xv2) edge node[pos=0.2, fill=white, inner sep=2pt] {\scriptsize $2$}  node[pos=0.76, fill=white, inner sep=2pt] {\scriptsize $1$} (yv2);

        \draw (xv3) edge node[pos=0.2, fill=white, inner sep=2pt] {\scriptsize $1$}  node[pos=0.76, fill=white, inner sep=2pt] {\scriptsize $2$} (yv1);
        \draw (xv3) edge node[pos=0.2, fill=white, inner sep=2pt] {\scriptsize $2$}  node[pos=0.76, fill=white, inner sep=2pt] {\scriptsize $2$} (yv2);

        \begin{scope}[xshift=-8cm, yshift = -4 cm]
          \node[terminal, label=270:$a_\new{y}$] (va2) at (-1,1) {};
          \node[terminal, label=90:$\bar{a}_\new{y}$] (vba2) at (-1,3) {};
          \node[vertex, label=0:$b_\new{y}^t$] (bt2) at (1.,1) {};
          \node[vertex, label=90:$b_\new{y}^f$] (bf2) at (1.,3) {};
        \end{scope}

        \draw (va2) edge node[pos=0.2, fill=white, inner sep=2pt] {\scriptsize $1$}  node[pos=0.76, fill=white, inner sep=2pt] {\scriptsize $1$} (bt2);
        \draw (va2) edge node[pos=0.2, fill=white, inner sep=2pt] {\scriptsize $2$}  node[pos=0.76, fill=white, inner sep=2pt] {\scriptsize $1$} (bf2);

        \draw (vba2) edge node[pos=0.2, fill=white, inner sep=2pt] {\scriptsize $1$}  node[pos=0.76, fill=white, inner sep=2pt] {\scriptsize $2$} (bt2);
        \draw (vba2) edge node[pos=0.2, fill=white, inner sep=2pt] {\scriptsize $2$}  node[pos=0.76, fill=white, inner sep=2pt] {\scriptsize $3$} (bf2);

        \begin{scope}[xshift=-8cm, yshift = -8 cm]
          \node[terminal, label=270:$a_\new{z}$] (va3) at (-1,1) {};
          \node[terminal, label=90:$\bar{a}_\new{z}$] (vba3) at (-1,3) {};
          \node[vertex, label=0:$b_\new{z}^t$] (bt3) at (1.,1) {};
          \node[vertex, label=90:$b_\new{z}^f$] (bf3) at (1.,3) {};
        \end{scope}

        \draw (va3) edge node[pos=0.2, fill=white, inner sep=2pt] {\scriptsize $1$}  node[pos=0.76, fill=white, inner sep=2pt] {\scriptsize $1$} (bt3);
        \draw (va3) edge node[pos=0.2, fill=white, inner sep=2pt] {\scriptsize $2$}  node[pos=0.76, fill=white, inner sep=2pt] {\scriptsize $1$} (bf3);

        \draw (vba3) edge node[pos=0.2, fill=white, inner sep=2pt] {\scriptsize $1$}  node[pos=0.76, fill=white, inner sep=2pt] {\scriptsize $2$} (bt3);
        \draw (vba3) edge node[pos=0.2, fill=white, inner sep=2pt] {\scriptsize $2$}  node[pos=0.76, fill=white, inner sep=2pt] {\scriptsize $2$} (bf3);

        \draw (a1p) edge[bend left= 60] node[pos=0.13, fill=white, inner sep=2pt] {\scriptsize $2$}  node[pos=0.9, fill=white, inner sep=2pt] {\scriptsize $1$} (b2);
        
        \draw (a2p) edge[bend right = 70] node[pos=0.15, fill=white, inner sep=2pt] {\scriptsize $2$}  node[pos=0.85, fill=white, inner sep=2pt] {\scriptsize $3$} (b3);
        
        \draw (a3p) edge node[pos=0.2, fill=white, inner sep=2pt] {\scriptsize $2$}  node[pos=0.9, fill=white, inner sep=2pt] {\scriptsize $3$} (b1);

        \draw (a1) edge[ultra thick, gray] node[pos=0.2, fill=white, inner sep=2pt] {\scriptsize $1$}  node[pos=0.76, fill=white, inner sep=2pt] {\scriptsize $1$} (b1);
        \draw (a1) edge[densely dotted] node[pos=0.2, fill=white, inner sep=2pt] {\scriptsize $3$}  node[pos=0.76, fill=white, inner sep=2pt] {\scriptsize $1$} (b1p);

        \draw (a1p) edge[densely dotted] node[pos=0.2, fill=white, inner sep=2pt] {\scriptsize $1$}  node[pos=0.76, fill=white, inner sep=2pt] {\scriptsize $2$} (b1);
        \draw (a1p) edge[ultra thick, gray] node[pos=0.2, fill=white, inner sep=2pt] {\scriptsize $3$}  node[pos=0.76, fill=white, inner sep=2pt] {\scriptsize $2$} (b1p);

        \draw (a2) edge[densely dotted, gray] node[pos=0.2, fill=white, inner sep=2pt] {\scriptsize $1$}  node[pos=0.76, fill=white, inner sep=2pt] {\scriptsize $2$} (b2);
        \draw (a2) edge[ultra thick] node[pos=0.2, fill=white, inner sep=2pt] {\scriptsize $2$}  node[pos=0.76, fill=white, inner sep=2pt] {\scriptsize $1$} (b2p);

        \draw (a2p) edge[ultra thick] node[pos=0.2, fill=white, inner sep=2pt] {\scriptsize $1$}  node[pos=0.85, fill=white, inner sep=2pt] {\scriptsize $3$} (b2);
        \draw (a2p) edge[densely dotted, gray] node[pos=0.2, fill=white, inner sep=2pt] {\scriptsize $3$}  node[pos=0.76, fill=white, inner sep=2pt] {\scriptsize $2$} (b2p);

        \draw (a3) edge[densely dotted, ultra thick] node[pos=0.2, fill=white, inner sep=2pt] {\scriptsize $1$}  node[pos=0.76, fill=white, inner sep=2pt] {\scriptsize $1$} (b3);
        \draw (a3) edge[gray] node[pos=0.2, fill=white, inner sep=2pt] {\scriptsize $3$}  node[pos=0.76, fill=white, inner sep=2pt] {\scriptsize $1$} (b3p);

        \draw (a3p) edge[gray] node[pos=0.2, fill=white, inner sep=2pt] {\scriptsize $1$}  node[pos=0.76, fill=white, inner sep=2pt] {\scriptsize $2$} (b3);
        \draw (a3p) edge[densely dotted, ultra thick] node[pos=0.2, fill=white, inner sep=2pt] {\scriptsize $3$}  node[pos=0.76, fill=white, inner sep=2pt] {\scriptsize $2$} (b3p);

        \draw (a1) edge[] node[pos=0.2, fill=white, inner sep=2pt] {\scriptsize $2$}  node[pos=0.76, fill=white, inner sep=2pt] {\scriptsize $3$} (yv2);
        \draw (a2p) edge[bend right] node[pos=0.2, fill=white, inner sep=2pt] {\scriptsize $4$}  node[pos=0.76, fill=white, inner sep=2pt] {\scriptsize $2$} (bf2);
        \draw (a3) edge[] node[pos=0.2, fill=white, inner sep=2pt] {\scriptsize $2$}  node[pos=0.76, fill=white, inner sep=2pt] {\scriptsize $3$} (bf3);
  \end{tikzpicture}
%  \end{minipage}
		\[
		\begin{array}{rlrl}
\hat{a}^C_{1} & : b^C_{1} \succ b^C_{ 2}  \succ \hat{b}^C_{ 1} \qquad \qquad \qquad & \hat{b}^C_{ 1} &: a^C_{ 1} \succ \hat{a}^C_{ 1}\\
 a^C_{1} & : b^C_{1} \succ \hat{b}^C_{1} \qquad & b^C_{ 1} & : a^C_{ 1} \succ \hat{a}^C_1 \succ \hat{a}^C_{ 3}\\
 \hat{a}^C_2 &: b^C_{ 2} \succ b^C_{ 3} \succ \hat{b}^C_{ 2} \qquad &\hat{b}^C_{ 2} & : a^C_{ 2} \succ \hat{a}^C_{ 2}\\
 a^C_{ 2} & : b^C_{ 2} \succ \hat{b}^C_{ 2} \qquad & b^C_{ 2}  & : \hat{a}^C_{ 1} \succ a^C_{ 2} \succ \hat{a}^C_{ 2} \\
 \hat{a}^C_{ 3} & : b^C_{ 3} \succ b^C_{ 1} \succ \hat{b}^C_{ 3} \qquad & \hat{b}^C_{ 3} & : a^C_{ 3} \succ \hat{a}^C_{ 3}\\ 
 %\vspace{4mm}
 a^C_{3} & : b^C_{ 3} \succ \hat{b}^C_{ 3} \qquad & b^C_{ 3} & : a^C_{ 3} \succ \hat{a}^C_{ 3} \succ \hat{a}^C_{ 2}
 \end{array}
		\]
  \caption{An example of a clause gadget for a clause~$C = x \lor \neg y \lor z$.
  Vertices in~$A$ are squared.
  Matchings~$M_1$, $M_2$, and~$M_3$ are depicted by the dotted, bold, and gray edges, respectively.
  A witness for~$M_1$, $M_2$, and $M_3$ are given by the green, red, and gray numbers, respectively.}
  \label{fig:clause-gadget}
\end{figure}

We have that~$A = \{a_1, a_2, a_3\} \cup \{a_x, \bar a_x : x \in X\} \cup \{a_k^C, \hat a_k^C : k \in [3], C \text{ clause} \}$ and~$B = \{b_1, b_2, b_3\} \cup \{b_x^t, b_x^f : x \in X\} \cup \{b_k^C, \hat b_k^C : k \in [3], C \text{ clause} \}$ and set~$w (a) := c$ for every $a\in A$ and $w(b) = 1$ for every $b\in B$.

\textbf{Connection between variable and clause gadgets.} We now describe the connection between variable and clause gadgets.
\new{We process the clauses in arbitrary order.}
Consider a clause~$C$, and let $y$ be the $i$-th literal in~$C$.
\new{If $y$ is a non-negated variable~$x$,}
%If $y = x$ for some variable $x$, 
then we add edge~$\{a^C_{ i}, b^f_x\}$, where $a^C_{ i}$ \new{inserts} $b^f_x$ \new{directly after $b^C_i$ in} its preferences, and $b^f_x$ \new{appends} $a^C_{ i}$ \emph{at the end of its preferences}.
Otherwise \new{$y$ is the negation~$\neg x$ of} some variable~$x$.
Then we add edge~$\{\hat{a}^C_{ i}, b^f_x\}$, where $\hat{a}^C_{i}$ \new{appends} $b^f_x$ \new{at} the end of its preferences and $b^f_x$ \new{inserts} $\hat{a}^C_{i}$ \new{directly after~$a_x$ in its preferences}.
\new{
As an example, consider a variable~$x$ which appears in clauses~$C_1 = x \lor \dots$, $C_2 = \neg x \lor \dots$, and $C_3 = x' \lor \neg x \lor x''$.
Then the preferences of $b^f_x$ are $a_x \succ \hat{a}^{C_3}_2 \succ \hat a^{C_2}_1 \succ \bar a_x \succ a^{C_1}_1$.
}

Having described our construction, we now turn to proving the correctness of the reduction.

\textbf{Satisfying assignment implies popular matching.} Let $f: X \rightarrow \{\LogicTRUE, \LogicFALSE\}$ be a satisfying assignment.
We construct a popular matching~$M$ as follows.
First, $M$ contains edges $\{a_i, b_i\}$ for $i\in [3]$.
For every variable~$x \in X$ with $f (x) = \LogicTRUE$, we add edges $\{a_x, b_x^t\}$ and $\{\bar a_x , b_x^f\}$ to $M$, while for all other variables, we add edges~$\{a_x, b_x^f\}$ and $\{\bar a_x, b_x^t\}$.
For every clause~$C$, there is at least one literal that is satisfied by $f$.
We fix one such literal~$x_{\new{C}}$ and define $k_C \in [3]$ to be the number such that $x_{\new{C}}$ is the $k_C$-th literal in $C$.
We add~$M_{k_C}^C$ to~$M$, i.e.\ we add edges $\{a^C_{ k_C}, \hat{b}^C_{ k_C}\}$ and $\{\hat{a}^C_{k_C}, b^C_{k_C}\}$ as well as edges $\{a^C_{j}, b^C_{ j}\}$ and $\{\hat  a^C_{ j} ,\hat{b}^C_{ j}\}$ for $j \in [3]\setminus \{k_C\}$ to $M$.

\begin{numberedclaim}\label{lem:backward}
  $M$ is popular.
\end{numberedclaim}

\begin{claimproof}{lem:backward}
  \begin{table}
      \centering
      \begin{tabular}{c | c || c | c | c}
           Vertex & Value of $\witness$ & Vertex & Value of $\witness$ & Condition\\
           \hline
           $a_1$ & $-c$ & $b_1$ & $c$ &\\
           $a_2$ & $-1$ & $b_2$ & $1$\\
           $a_3$ & $c-2$ & $b_3$ & $2-c$\\
           \hline
           \hline
           $a_x$ & $-1$ & $b_x^t$ & $1$ & \multirow{2}*{$f(x) = \LogicTRUE$}\\
           $\bar a_x$ & $c-2$ & $b_x^f$ & $2-c$ & \\
           \hline
           $a_x$ & $1$ & $b_x^t$ & $c$ & \multirow{2}*{$f(x) = \LogicFALSE$}\\
           $\bar a_x$ & $-c$ & $b_x^f$ & $-1$ & \\
           \hline
           \hline
           $\hat a^C_1$ & $-c$ & $\hat b^C_1$ & $-1$&  \multirow{6}*{$k_C = 1$}\\
           $a^C_1$ & $1$ & $b^C_1$ & $c$ &\\
           $\hat a^C_2$ & $-1$ & $\hat b^C_2$ & $1$& \\
           $a^C_2$ & $-c$ & $b^C_2$ & $c$ & \\
           $\hat a^C_3$ & $-1$ & $\hat b^C_3$ & $1$& \\
           $a^C_3$ & $-c$ & $b^C_3$ & $c $ & \\           
           \hline
           $\hat a^C_1$ & $1$ & $\hat b^C_1$ & $-1$& \multirow{6}*{$k_C = 2$}\\
           $a^C_1$ & $2-c$ & $b^C_1$ & $c-2$ & \\
           $\hat a^C_2$ & $-c$ & $\hat b^C_2$ & $-1$& \\
           $a^C_2$ & $1$ & $b^C_2$ & $c$ & \\
           $\hat a^C_3$ & $1$ & $\hat b^C_3$ & $-1$& \\
           $a^C_3$ & $2-c$ & $b^C_3$ & $c -2$ & \\
           \hline
           $\hat a^C_1$ & $1$ & $\hat b^C_1$ & $-1$& \multirow{6}*{$k_C = 3$}\\
           $a^C_1$ & $2-c$ & $b^C_1$ & $c-2$ & \\
           $\hat a^C_2$ & $-1$ & $\hat b^C_2$ & $1$& \\
           $a^C_2$ & $-c$ & $b^C_2$ & $c$ & \\
           $\hat a^C_3$ & $-c$ & $\hat b^C_3$ & $-1$& \\
           $a^C_3$ & $1$ & $b^C_3$ & $c $ &
      \end{tabular}
      \caption{A witness of $M$.}
      \label{fig:witness}
  \end{table}

  We give a witness~$\witness$ of $M$ in \Cref{fig:witness}. It remains to show that this is indeed a feasible witness.
  It is straightforward to verify that there is no conflicting edge inside any of the gadgets.
  For an edge~$e=\{a_x, b_3\}$ between a variable and the \unnamedGadget, we have $\weightedvote^M (e) = 1-c$.
  Since $\wof{a_x} \ge -1$, we have $\wof{a_x} + \wof{b_3} \ge -1 + 2-c = 1-c = \weightedvote^M (e)$, and thus $e$ is not conflicting.
  For an edge~$e$ between a variable and a clause gadget, we make a case distinction.
  
  \textbf{Case 1:} $ e= \{a^C_{ k}, b_x^f\}$ for some clause $C$, $k \in [3]$, and $x \in X$.
  
  In this case, the $k$-th literal of $C$ is~$x$.
  
  \textbf{Case 1.1:} $k = k_C$ \new{ (and thus $x = x_C$)}.
  
  Then $\weightedvote^M (e) = c-1$ and $\wof{a^C_{ k}} = 1$.
  By the definition of $k_C$, we have $f(x) = \LogicTRUE$ and thus $\wof{b_x^f} = 2-c$.
  Thus, $\wof{a^C_{ k}}+ \wof{b_x^f} = 3- c \ge c-1 = \weightedvote^M (e)$ using $c \le 2 $ for the inequality.
  
  \textbf{Case 1.2:} $k \neq k_C$.
  
  Then we have $M(a^C_{ k}) = b^C_{k}$ and thus $\weightedvote^M (e) = -c -1$.
  Since $\wof{a^C_{k}} \ge -c $ and $\wof{b_x^f} \ge -1$, it follows that $\weightedvote^M (e) \le \wof{a^C_{ k}} + \wof{b_x^f}$.
  
  \textbf{Case 2:} $e = \{ \hat a^C_{k}, b_x^f\}$ for some clause~$C$, $k \in [3]$, and $x \in X$.
  
  In this case, the $k$-th literal of~$C$ is $\neg x$.
  
  \textbf{Case 2.1:} $k = k_C$ \new{ (and thus $x = x_C$)}.
  
  By the definition of $k_C$, we have $f (x) = \LogicFALSE$ and therefore $M(b^f_x) = a_x$.
  Thus, we have $\weightedvote^M (e) = -c -1$.
  Since $\wof{\hat a^C_{k}} \ge -c $ and $\wof{b_x^f} \ge -1$, we have $\wof{\hat a^C_{k}} + \wof{b_x^f} \ge \weightedvote^M (e)$.
  
  \textbf{Case 2.2:} $ k \neq k_C$.
  
  Then we have $\wof{\hat a^C_{ k}} \ge -1$.
  If $f(x) = \LogicTRUE$, then $M (b_x^f) = \hat a_x$ and $\wof{b_x^f} = 2-c$.
  Thus, we have $\wof{ \hat a^C_{ k}} + \wof{b_x^f} \ge -1 + 2-c = 1-c = \weightedvote^M (e)$.
  If $f(x) = \LogicFALSE$, then $\weightedvote^M (e) = -c -1 \le \wof{\hat a^C_{ k}} + \wof{b_x^f}$.
\end{claimproof}

%\subparagraph*{Proof of the forward direction.}
\textbf{Popular matching implies satisfying assignment.}
Let $M$ be a popular matching. We start with a technical observation we will frequently rely on in our proof.

\begin{observation}\label{obs:first-choice}
In instances where $w(a) > 0$ for each $a \in A$ and $w(b_1) = w(b_2)$ for each $b_1, b_2 \in B$, each popular matching $M$ covers every vertex~$b\in B$ that is the first choice of some $a\in A$.
\end{observation}
\begin{claimproof}{obs:first-choice}
  If $M$ leaves $b$ unmatched, then $\bigl ( M \setminus \{\{a,M(a)\}\}\bigr) \cup \{ \{a, b\}\}$ is more popular than $M$, a contradiction to the popularity of~$M$.
  \end{claimproof}
  Notice that in our construction, the vertex weights are identical and positive on each side, and thus, each first-choice vertex must be matched in all popular matchings.
\begin{numberedclaim}\label{lem:unnamedGadget}
  Matching~$M$ contains edges $\{a_k, b_k\}$ for every $k\in [3]$.
\end{numberedclaim}

\begin{claimproof}{lem:unnamedGadget}
  \Cref{obs:first-choice} implies that $b_1$ and $b_2$ are matched in~$M$.
  \new{More specifically, $M$ contains $\{a_1, b_1\}$ as otherwise~$\bigl ( M \setminus \{\{a_2 , b_1\}\}\bigr) \cup \{\{a_1, b_1\}\}$ would be more popular than~$M$.
  Further, $M$ contains $\{a_2, b_2\}$ as otherwise~$\bigl ( M \setminus \{\{a_3 , b_2\}\}\bigr) \cup \{\{a_2, b_2\}\}$ would be more popular than~$M$.}
%  Thus, $M$ contains $\{a_1, b_1\}$ and $\{a_2, b_2\}$.
  Assume for a contradiction that $M$ does not contain $\{a_3, b_3\}$. Then $a_3$ is unmatched.
  If $b_3$ was also unmatched, then $M \cup \{\{a_3, b_3\}\}$ would be more popular than $M$, so we must have $\{a_x, b_3\} \in M$ for some variable $x\in X$.
  Then $\bigl(M \setminus \{ \{a_x, b_3\}, \{\bar a_x, b_x^t\}\}\bigr) \cup \{\{a_3, b_3\}, \{a_x, b_x^t\}\}$ (note that $b_x^t$ is either unmatched or matched to $\bar a_x$) is more popular than $M$, a contradiction.
  Thus, $M$ contains $\{a_3, b_3\}$.
  \end{claimproof}

\begin{numberedclaim}\label{lem:vg}
  $M$ contains edges $\{a_x, b_x^t\}$ and $\{\bar a_x, b_x^f\}$ or edges $\{a_x, b_x^f\}$ and $\{\bar a_x, b_x^t\}$ for every variable~${x \in X}$.
\end{numberedclaim}

\begin{claimproof}{lem:vg}
  Assume for a contradiction that the claim does not hold for a variable~$x\in X$.
  Due to the maximality of $M$ and to \Cref{lem:unnamedGadget}, $M$ contains edge~$\{a^C_{k}, b_x^f\}$ or $\{\hat a^C_{k}, b_x^f\}$ for some clause~$C$ and $k \in [3]$.
  \Cref{obs:first-choice} implies that $a_x$ is matched as it is the first choice of $b_x^t$.
  \new{By \Cref{lem:unnamedGadget}, $a_x$ is not matched to~$b_3$ and as $b_x^f$ is matched to $a^C_k$ or $\hat a^C_k$ for some clause~$C$ and $k \in [3]$, it follows that $a_x$ is matched to~$b_x^t$.
  Consequently, $\bar a_x$ is unmatched as it is matched neither to $b_x^t$ nor to $b_x^f$.}
  We will now show that there exists an alternating path starting in~$\bar a_x$ such that augmenting~$M$ with this path results in a matching more popular than~$M$.
  If $M$ contains $\{a^C_{k}, b_x^f\}$, then $\bigl(M \setminus \{\{a^C_{ k}, b_x^f\}\} \bigr) \cup \{\{\bar a_x, b_x^f\}\}$ is more popular than~$M$, contradicting the popularity of~$M$.
  Thus, $M$ contains $\{\hat a^C_{k}, b_x^f\}$.
  If $b^C_{ k}$ or $\hat b^C_{ k}$ was unmatched (call this unmatched vertex~$b$), then $\bigl (M\setminus \{\{\hat a^C_{k}, b_x^f\}\}\bigr) \cup \{\{a^C_{k}, b\}\}$ would be more popular than $M$, contradicting the popularity of~$M$.
  Thus, $M$ contains edges $\{a^C_{ k}, \hat b^C_{ k}\}$ and $\{\hat a^C_{ k-1}, b^C_{ k}\}$ (interpreting $k-1$ modulo 3).
  By \Cref{obs:first-choice}, $b^C_{ k-1}$ is not unmatched. 
  If $\{a^C_{k-1}, b^C_{ k-1}\}\in M$, then 
  \begin{align*}
  \bigl (M \setminus \{\{\hat a^C_{ k}, b_x^f\}, \{\hat a^C_{ k-1}, b^C_{k}\}, \{a^C_{ k-1}, b^C_{k-1}\}\}\bigr)\cup \\ 
  \{\{\bar a_x, b_x^f\}, \{\hat a^C_{ k}, b^C_{ k}\} , \{\hat a^C_{k-1}, b^C_{ k-1}\}, \{a^C_{k-1} , \hat b^C_{k-1}\}\}
  \end{align*}
  is more popular than $M$, contradicting the popularity of~$M$.
  Thus, we have $\{\hat a^C_{ k-2}, b^C_{k-1}\}\in M$ (also interpreting $k-2 $ modulo 3).
  Then one of $b^C_{k-2} $ and $\hat b^C_{k-2}$ is unmatched and we call the unmatched vertex~$b$.
  Then 
  \begin{align*}
  \bigl (M \setminus \{\{\hat a^C_{ k}, b_x^f\}, \{\hat a^C_{k-1}, b^C_{ k}\}, \{\hat a^C_{ k-2}, b^C_{ k-1}\}\}\bigr)\cup \\
  \{\{\bar a_x, b_x^f\}, \{\hat a^C_{ k}, b^C_{ k}\} , \{\hat a^C_{ k-1}, b^C_{k-1}\}, \{\hat a^C_{k-2} , b\}\}
  \end{align*}
  is more popular than $M$, contradicting the popularity of~$M$.
\end{claimproof}

\begin{numberedclaim}\label{lem:ha1b2}
  $M$ does not contain edge $\{\hat a^C_{k}, b^C_{k+1}\}$ for any clause~$C$ and any $k\in [3]$ (where $k+1 $ is taken modulo $3$).
\end{numberedclaim}

\begin{claimproof}{lem:ha1b2}
  Assume for a contradiction that~$M$ contains $\{\hat a^C_{ k}, b^C_{ k+1}\}$ for some $k \in [3]$.
  If $M$ contains $\{\hat a^C_i, b^C_{i+ 1}\}$ for every $i \in [3]$ (considering $i+1$ modulo 3), then $\bigl ( M\setminus \{\{\hat a^C_{ k}, b^C_{ k+1}\},\{\hat a^C_{ k-1}, b^C_{ k}\}, \{ \hat a^C_{ k+1}, b^C_{ k-1}\}\}\bigr) \cup \{\{\hat a^C_{ i}, b^C_{ i}\} : i \in [3]\}$ is more popular than $M$, a contradiction.
  Thus, we may assume \new{without \neu{loss} of generality} that $M$ does not contain $\{\hat a^C_{k-1}, b^C_k\}$.
  By \Cref{obs:first-choice}, vertex $b^C_{k}$ is not unmatched.
  Thus, we have $\{a^C_{ k}, b^C_{ k}\} \in M$ and $\hat b^C_{k}$ is unmatched.
  If $a^C_{ k+1}$ or $\hat a^C_{ k+1}$ was unmatched (call this unmatched vertex~$a$), then $\bigl ( M \setminus \{\{\hat a^C_{k}, b^C_{k+1}\}, \{a^C_{ k}, b^C_{k}\}\}\bigr) \cup \{\{a, b^C_{ k + 1}\}, \{\hat a^C_{k}, b^C_{ k}\}, \{a^C_{k}, \hat b^C_{ k}\}\}$ would be more popular than $M$, contradicting the popularity of~$M$.
  Thus, using \Cref{lem:vg}, $M$ contains edges $\{a^C_{ k+1}, \hat b^C_{ k+1}\}$ and $\{\hat a^C_{ k+1}, b^C_{ k-1}\}$.
  Using \Cref{lem:vg}, one of $a^C_{ k - 1}$ or $\hat a^C_{ k - 1}$ is unmatched (call this unmatched vertex~$a^*$).
  Then 
  \begin{align*}
  \bigl ( M \setminus \{\{\hat a^C_{ k}, b^C_{ k+1}\}, \{a^C_{ k}, b^C_{ k}\}, \{\hat a^C_{ k+1}, b^C_{ k-1}\}\}\bigr) \cup \\
  \{\{a^*, b^C_{ k-1}\}, \allowbreak\{\hat a^C_{ k+1}, b^C_{ k + 1}\}, \{\hat a^C_{ k}, b^C_{ k}\}, \{a^C_{ k}, \hat b^C_{ k}\}\}
  \end{align*}
  is more popular than $M$, a contradiction to the popularity of~$M$.
  \end{claimproof}

  \begin{numberedclaim}\label{lem:cg}
    For every clause~$C$, there exists some $k\in [3]$ such that matching~$M$ contains edges~$\{a^C_{k }, \hat b^C_{ k}\}$ and $\{\hat a^C_{ k}, b^C_{ k}\}$.
  \end{numberedclaim}

  \begin{claimproof}{lem:cg}
    From the fact that every popular matching is maximal and from \Cref{lem:vg,lem:ha1b2} follows that $M$ contains either edges $\{a^C_{ i}, b^C_{ i}\}$ and $\{\hat a^C_{ i}, \hat b^C_{i}\}$ or edges $\{a^C_{ i}, \hat b^C_{ i}\}$ and $\{\hat a^C_{ i}, b^C_{ i}\}$ for every~${i \in [3]}$.
    So assume for a contradiction that $M$ contains $\{a^C_{ i}, b^C_{ i}\}$ and $\{\hat a^C_{ i}, \hat b^C_{ i}\}$ for every~$i \in [3]$.
    Then 
    \begin{align*}
    \bigl ( M \setminus \{\{a^C_{ i}, b^C_{ i}\}, \{\hat a^C_{ i}, \hat b^C_{ i}\} : i \in [3]\} \bigr) \cup \\
    \{\{\hat{a}^C_{1}, b^C_{ 2}\},\allowbreak \{a^C_{ 1}, \hat{b}^C_{ 1}\}, \{\hat{a}^C_{ 2}, b^C_{ 3}\}, \{a^C_{ 2}, \hat{b}^C_{ 2}\},\allowbreak \{\hat{a}^C_{ 3}, b^C_{ 1}\},\allowbreak \{a^C_{ 3}, \hat{b}^C_{ 3}\}\}
    \end{align*}
    is more popular than $M$, a contradiction to the popularity of~$M$.
  \end{claimproof}

  \begin{numberedclaim}\label{lem:vg-cg}
    Assume that $M$ contains edges $\{a^C_{ k}, \hat b^C_{ k}\}$ and $\{\hat a^C_{k}, b^C_{ k}\}$ for some clause $C $ and some $k\in [3]$.
    If the $k$-th literal of $C$ is $x$ for a variable $x\in X$, then $M$ contains~$\{a_x, b_x^t\}$ and~$\{\bar a_x, b_x^f\}$.
    If the $k$-th literal of $C$ is $\neg x$ for a variable $x\in X$, then $M$ contains~$\{a_x, b_x^f\}$ and~$\{\bar a_x, b_x^t\}$.
  \end{numberedclaim}

  \begin{claimproof}{lem:vg-cg}
    First assume that the $k$-th literal of $C$ is $x$ but $M$ does not contain edges $\{a_x, b_x^t\}$ and $\{\bar a_x, b_x^f\}$.
    By \Cref{lem:vg}, $M$ contains edges $\{a_x, b_x^f\}$ and $\{\bar a_x, b_x^t\}$.
    Then $\bigl ( M \setminus \{\{\bar a_x, b_x^t\}, \{a_x, b_x^f\}, \allowbreak \{a^C_{ k}, \hat b^C_{ k}\}\}\bigr) \cup \{\{a_x, b_x^t\}, \{a^C_{ k}, b_x^f\}\}$ is more popular than $M$, a contradiction to the popularity of~$M$.
    
    Now assume that the $k$-th literal of $C$ is $\neg x$ but $M$ does not contain edges $\{a_x, b_x^f\}$ and $\{\bar a_x, b_x^t\}$.
    By \Cref{lem:vg}, $M$ contains edges $\{a_x, b_x^t\}$ and $\{\bar a_x, b_x^f\}$.
    Then $\bigl ( M \setminus \{\{a_1, b_1\}, \{a_2, b_2\},\allowbreak \{a_3, b_3\}, \{a_x, b_x^t\}, \{\bar a_x, b_x^f\}, \{a^C_{ k}, \hat b^C_{ k}\}, \{\bar a^C_{ k}, b^C_{ k}\}\bigr) \cup\allowbreak \{\{a_2, b_1\},\allowbreak \{a_3, b_2\},\allowbreak \{a_x, b_3\}, \{\bar a_x, b_x^t\}, \{\hat a^C_{ k}, b_x^f\}, \{a^C_{ k}, b^C_{ k}\}\}$ is more popular than $M$, a contradiction to the popularity of~$M$.
    \end{claimproof}

  \begin{numberedclaim}\label{lem:forward}
    Any popular matching implies a satisfying assignment.
  \end{numberedclaim}

  \begin{claimproof}{lem:forward}
  Let $M$ be a popular matching.
   By \Cref{lem:vg}, for each variable~$x$, matching~$M$ contains either $\{a_x, b_x^t\}$ and $\{\bar{a}_x, b_x^f\}$, or the edges $\{\bar{a}_x, b_x^t\}$ and $\{a_x, b_x^f\}$.
   In the former case, we set the variable to $\LogicTRUE$, while we set the variable to $\LogicFALSE$ in the latter case.
   It remains to show that every clause is satisfied by one variable.
   By \Cref{lem:cg}, for every clause~$C$, there is some~$k \in [3]$ such that $\{a^C_{ k}, \new{\hat b^C_{ k}}\}$ and $\{\neu{\hat a^C_{ k}, b^C_{k}}\}$ are in~$M$.
   By \Cref{lem:vg-cg}, it follows that the $k$-th \new{literal} of $C$ is \new{satisfied}.
   %, while all other variables appearing in this clause are set to $\LogicFALSE$.
  \end{claimproof}
With this we have finished the proof of \Cref{thm:np-hardb}.
  \end{proof}
  
Note that except for Case 1.1 in the proof of \Cref{lem:backward}, the whole proof works for all~$\new{1 < } c \le 3$.
The condition $c \le 3$ is used in the \unnamedGadget: if $c > 3$, then the \unnamedGadget\ does not admit a popular matching and thus the whole constructed instance does not have a popular matching (the given witness~$\witness$ would not be feasible because $\wof{b_3} = 2-c < -1$ for $c > 3$).
  
\begin{remark}
    Reducing from a restricted version of \textsc{3-Sat} where every variable appears at most three times and making one copy the \unnamedGadget\ for every variable gadget, the hardness also extends to graphs of maximum degree five.
\end{remark}

%%%%%%%%%%%%%

\subsection{Maximum-Utility popular matchings}
\label{sec:np:weighted}

In \Cref{sec:alg}, we will show that \textsc{Popular Matching with Weighted Voters} can be solved in linear time if all agents from~$A$ have weight~$c$ for some $c > 3$ while all agents from~$B $ have weight 1.
%For such weight functions, we have a strong inapproximability result for finding an optimal popular matching.
In this problem, the input additionally contains a function~$\omega: E \rightarrow \mathbb{Q}_{\geq 0}$ on the edges, and the goal is to find a popular matching maximizing $\sum_{e \in M}\omega(e)$ among all popular matchings.
We show that the resulting problem is hard even to approximate by a reduction from \textsc{Independent Set}. Note that this is in sharp contrast to the 2-approximation for the unit-vertex-weight case~\cite{FaenzaKPZ19} and also with the optimal house allocation case with weighted voters, for which the results of Mestre~\cite{DBLP:journals/talg/Mestre14} and McDermid and Irving~\cite{MI11} imply that an optimal popular matching can be computed in polynomial time.
%Mestre's reduction~\cite{DBLP:journals/talg/Mestre14} to a house allocation instance without vertex weights and the polynomial-time algorithm~\cite{MI11} for finding an optimal popular matching in such instances imply that an optimal popular matching can be computed in polynomial time.

\begin{theorem}\label{thm:np-hardc}
    Even if  $w (a) = c > 3$ for all~$a\in A$, $w (b) = 1 $ for all $b\in B$, and every edge~$e\in E$ has weight~0 or~1, approximating a maximum edge-weight popular matching by a factor of~$O(n^{1 - \varepsilon})$ is \NP-hard for any~$\epsilon > 0$.
\end{theorem}

%%%%%%%%%%%%%%%%

\begin{proof}
    Fix a constant $c > 3$.
    We reduce from \textsc{Independent Set}. The input $(G, \ell)$ consists of a graph $G$ and an integer~$\ell$, while the goal is to decide whether $G$ has an independent vertex set of size~$\ell$. Zuckerman~\cite{Zuc07} showed that this problem cannot be approximated within $O(|V(G)|^{1-\varepsilon})$ for any $\varepsilon  > 0$, unless \P = \NP.
  To simplify notation in our reduction, we fix an arbitrary orientation for each edge in~$G$.
  
  \textbf{Construction.}  For each $v\in V(G)$, we add a 4-cycle containing vertices $a_v, \hat a_v, b_v$, and $\hat b_v$---not unlike in the variable gadget in \Cref{thm:np-hardb}.
  For each edge~$e =\{u, z\}\in E(G)$, we fix an orientation~$(u, z)$ and add the edge~$\{\hat a_z, \hat b_u\}$.
  The preferences are depicted in \Cref{fi:indep}.
  We have~$A = \{a_v, \hat a_v : v\in V( G) \} $ and $B = \{b_v, \hat b_v : v\in V(G)\}$.
  We set~$w (a) = c$ for every~$ a \in A$ and $w(b) = 1$ for every $b \in B$.
  For every~$v \in V(G)$, we set $\omega (\{ a_v, \hat b_v\}) := 1$, while for all other edges~$e$ we set $\omega (e) = 0$.
\begin{figure}
      \centering  
         \begin{minipage}{0.45\textwidth}
         \centering 
		\[
		\begin{array}{rl}
    a_v & : b_v \succ \hat b_v\\ 
    \hat a_v & : b_v  \succ \hat b_v \succ [\hat b_z ]_{(z, v)\in \delta^- (v)} \\
    b_v & : a_v \succ \hat a_v\\
    \hat b_v & : [\hat a_z]_{(v, z)\in \delta^+(v)} \succ a_v \succ \hat a_v\\
		\end{array}
		\]
		\vspace{20mm}
	\begin{tikzpicture}
	\node[vertex, label=below:$u$] (v1) at (0, 0) {};
    \node[vertex, label=below:$v$] (v2) at (2, 0) {};
    \node[vertex, label=above:$z$] (v3) at (1, 1.7) {};

    \draw [->, very thick, color=blue] (v2) -- (v1);
    \draw [->, very thick, color=blue] (v3) -- (v2);
    \draw [->, very thick, color=blue] (v1) -- (v3);

\draw[line width=1mm,-implies,double, double distance=1mm] (3,1) -- (4,1);
  \end{tikzpicture}
  \vspace{20mm}
	\end{minipage}\begin{minipage}{0.45\textwidth}
	\centering
  \begin{tikzpicture}
          \node[terminal, label=180:$\hat{a}_u$] (a1) at (-1,1) {};
          \node[terminal, label=180:$a_u$] (a1p) at (-1,3) {};
          \node[vertex, label={[xshift=-.2cm]275:$\hat{b}_u$}] (b1) at (1.,1) {};
          \node[vertex, label=0:$b_u$] (b1p) at (1.,3) {};

        \node[terminal, label=180:$\hat{a}_v$] (a2) at ($(a1) + (0, -4)$) {};
        \node[terminal, label=180:$a_v$] (a2p) at ($(a2) + (0, 2)$) {};
        \node[vertex, label=0:$\hat{b}_v$] (b2) at ($(a2) + (2, 0)$) {};
        \node[vertex, label={[xshift=-.1cm]0:$b_v$}] (b2p) at ($(b2) + (0, 2)$) {};

        \node[terminal, label=180:$\hat{a}_z$] (a3) at ($(a2) + (0, -4)$) {};
        \node[terminal, label=180:$a_z$] (a3p) at ($(a3) + (0, 2)$) {};
        \node[vertex, label=0:$\hat{b}_z$] (b3) at ($(a3) + (2, 0)$) {};
        \node[vertex, label=90:$b_z$] (b3p) at ($(b3) + (0, 2)$) {};

        \draw (a1) edge[bend left= 0] node[pos=0.13, fill=white, inner sep=2pt] {\scriptsize $3$} 
        node[pos=0.87, fill=white, inner sep=2pt] {\scriptsize $1$} (b2);
        
        \draw (a2) edge[bend right = 0] node[pos=0.13, fill=white, inner sep=2pt] {\scriptsize $3$}
        node[pos=0.87, fill=white, inner sep=2pt] {\scriptsize $1$} (b3);
        
        \draw (a3) edge[out=20,in=0,looseness=1] node[pos=0.08, fill=white, inner sep=2pt]       {\scriptsize $3$} node[pos=0.9, fill=white, inner sep=2pt] {\scriptsize $1$} (b1);

        \draw (a1) edge[ultra thick] node[pos=0.2, fill=white, inner sep=2pt] {\scriptsize $2$}  node[pos=0.76, fill=white, inner sep=2pt] {\scriptsize $3$} (b1);
        \draw (a1) edge[] node[pos=0.2, fill=white, inner sep=2pt] {\scriptsize $1$}  node[pos=0.76, fill=white, inner sep=2pt] {\scriptsize $2$} (b1p);

        \draw (a1p) edge[] node[pos=0.2, fill=white, inner sep=2pt] {\scriptsize $2$}  node[pos=0.76, fill=white, inner sep=2pt] {\scriptsize $2$} (b1);
        \draw (a1p) edge[ultra thick] node[pos=0.2, fill=white, inner sep=2pt] {\scriptsize $1$}  node[pos=0.76, fill=white, inner sep=2pt] {\scriptsize $1$} (b1p);

        \draw (a2) edge[] node[pos=0.2, fill=white, inner sep=2pt] {\scriptsize $2$}  node[pos=0.76, fill=white, inner sep=2pt] {\scriptsize $3$} (b2);
        \draw (a2) edge[ultra thick] node[pos=0.2, fill=white, inner sep=2pt] {\scriptsize $1$}  node[pos=0.76, fill=white, inner sep=2pt] {\scriptsize $2$} (b2p);

        \draw (a2p) edge[ultra thick] node[pos=0.2, fill=white, inner sep=2pt] {\scriptsize $2$}  node[pos=0.85, fill=white, inner sep=2pt] {\scriptsize $2$} (b2);
        \draw (a2p) edge[] node[pos=0.2, fill=white, inner sep=2pt] {\scriptsize $1$}  node[pos=0.76, fill=white, inner sep=2pt] {\scriptsize $1$} (b2p);

        \draw (a3) edge[ultra thick] node[pos=0.2, fill=white, inner sep=2pt] {\scriptsize $2$}  node[pos=0.76, fill=white, inner sep=2pt] {\scriptsize $3$} (b3);
        \draw (a3) edge[] node[pos=0.2, fill=white, inner sep=2pt] {\scriptsize $1$}  node[pos=0.76, fill=white, inner sep=2pt] {\scriptsize $2$} (b3p);

        \draw (a3p) edge[] node[pos=0.2, fill=white, inner sep=2pt] {\scriptsize $2$}  node[pos=0.76, fill=white, inner sep=2pt] {\scriptsize $2$} (b3);
        \draw (a3p) edge[ultra thick] node[pos=0.2, fill=white, inner sep=2pt] {\scriptsize $1$}  node[pos=0.76, fill=white, inner sep=2pt] {\scriptsize $1$} (b3p);
  \end{tikzpicture}
  \end{minipage}
  \caption{An example instance constructed from the triangle on the left in the proof of \Cref{thm:np-hardc}.
  Squared brackets denote an arbitrary ordering of the corresponding vertices.
  The set of edges leaving $v$ is denoted by $\delta^-(v)$, while the set of edges entering $v$ is denoted by $\delta^+(v)$. Thick edges mark the popular matching corresponding to the independent vertex set $\{v\}$.}
  \label{fi:indep}
  \end{figure}
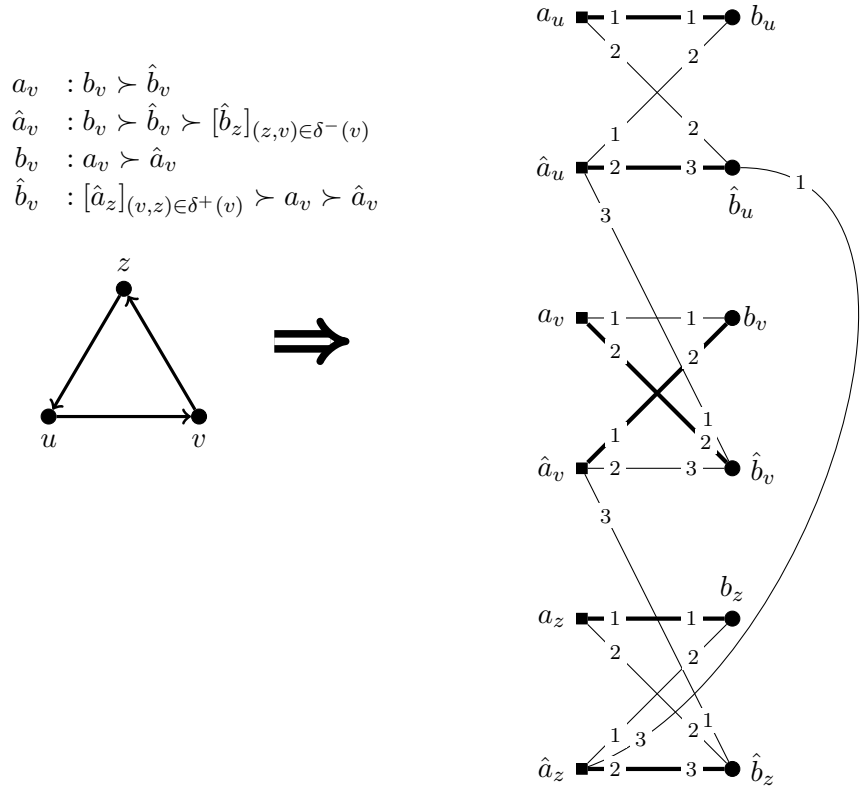
  
  \textbf{Correctness.} We now prove the correctness of the above reduction.
  \begin{numberedclaim}\label{lem:is->pm}
    If $G$ has an independent set of size $\ell$, then there exists a popular matching~$M$ with $\omega (M) = \sum_{e\in M} \omega (e) \ge \ell$.
  \end{numberedclaim}

  \begin{claimproof}{lem:is->pm}
    Let $X$ be an independent set of size $\ell$.
    Consider matching $M:= \{\{a_v, b_v\}, \{\hat a_v, \hat b_v\} : v\in V(G) \setminus X\} \cup \{\{a_v, \hat b_v\}, \{\hat a_v, b_v\}: v\in X\}$.
    Clearly, $\omega (M) = \ell$, so it remains to show that $M$ is popular.
    We define a witness~$\witness$ of $M$ in \Cref{fig:witness-weighted}.
      \begin{table}[htb]
      \centering
      \begin{tabular}{c | c || c | c | c}
           Vertex & Value of $\witness$ & Vertex & Value of $\witness$ & Condition\\
           \hline\
           $a_v$ & $-c$ & $b_v$ & $c$ & \multirow{2}*{$v \in V(G) \setminus X$}\\
           $\hat a_v$ & $-1$ & $\hat b_v$ & $1$ & \\
           \hline
           $a_v$ & $1$ & $b_v$ & $c$ & \multirow{2}*{$v\in X$}\\
           $\hat a_v$ & $-c$ & $\hat b_v$ & $-1$ & \\
      \end{tabular}
      \caption{A witness of $M$.}
      \label{fig:witness-weighted}
  \end{table}
  
  Consider an edge $e = \{\hat a_{v}, \hat b_z\}$ with~$v \neq z$.
  Then $\weightedvote^M (e) = 1-c$.
  If $v \notin X$, then $\wof{\hat a_{v}} + \wof{\hat b_v} \ge -1 -  1 = -2 \ge \weightedvote^M (e)$ as $c > 3$.
  Otherwise we have $z\notin X$ (as $X$ is an independent set) and consequently, $\wof{a_{v}} + \wof{\hat b_z} = -c + 1 = \weightedvote^M(e)$.
  Thus, $e$ is not conflicting.
  It is straightforward to verify that for every $v\in V(G)$, none of the edges $\{a_v, b_v\}$, $\{a_v, \hat b_v\}$, $\{\hat a_v, b_v\}$, and $\{\hat a_v, \hat b_v\}$ is conflicting.
  Thus, $M$ is a popular matching.
  \end{claimproof}

  \begin{numberedclaim}\label{lem:no-ae-hbv}
    No popular matching contains an edge $\{\hat a_v, \hat b_z\}$ for $z \neq v$.
  \end{numberedclaim}

  \begin{claimproof}{lem:no-ae-hbv}
    For every $v \in V(G)$, we have that $f(\hat a_v) = b_v$ and $s(\hat a_v) = \hat b_v$.
    \Cref{lem:popular-with-0,th:abraham} imply that for any vertex~$v \in V$, any popular matching contains~$ \{\hat a_v, b_v\}$ or $\{\hat a_v, \hat b_v\}$.
    Consequently, no popular matching can contain edge~$\{\hat a_v, \hat b_z\}$ for $z\neq v$.
    \end{claimproof}

  \begin{numberedclaim}\label{lem:pm->is}
    If there exists a popular matching~$M$ with $\omega(M) = \sum_{e \in M} \omega(e) \geq \ell$, then $G$ admits an independent set of size $\ell$.
  \end{numberedclaim}

  \begin{claimproof}{lem:pm->is}
    Let $X:= \{v \in V(G): \{a_v, \hat b_v\}\in M\}$.
    Clearly $|X| \ge \ell$, so it is enough to show that $X$ is an independent set.
    Assume for a contradiction that there exists an edge $e = (v, z) \in E(G)$ with $v \in X$ and $z\in X$.
    By \Cref{lem:no-ae-hbv} and the fact that every popular matching is maximal, matching~$M$ contains edges $\{\hat a_v, b_v\}$ and $\{\hat a_z, b_z\}$.
    Then 
    $$M' := \bigl ( M\setminus \{ \{\hat a_v, b_v\}, \{a_v, \hat b_v\}, \{a_z, \hat b_z\}, \{\hat a_z, b_z\}\}\bigr) \cup \{ \{a_v, b_v\}, \{\hat a_{z}, \hat b_v\}, \{a_z, b_{z}\}\}$$
    is more popular than $M$ (because $a_v$, $a_z$, $b_v$, $b_z$, and $\hat b_v$ prefer $M'$ while only $\hat a_v$, $\hat a_z$, and $\hat b_z$ prefer~$M$), a contradiction to the popularity of~$M$.
    \end{claimproof}

    The correctness of the reduction follows from \Cref{lem:pm->is,lem:is->pm} and the corresponding hardness result for \textsc{Independent Set}~\cite{Zuc07}. This finishes the proof of \Cref{thm:np-hardc}.
  \end{proof}

%%%%%%%%%%%%%%%%

  \section{Algorithm for different weights for~$A$ and~$B$}
  \label{sec:alg}
  
  Following up on \Cref{thm:np-hardb,thm:np-hardc}, we now consider the case that all vertices from one side of the bipartition have a weight $c>3$ while all vertices from the other side of the bipartition have weight~1.
  The main result of this section is that this restricted variant of \textsc{Popular Matching with Weighted Voters} is solvable in polynomial time if $c>3$. Before we describe the algorithm, we show some structural properties for the witnesses of popular matchings.
  
  \subsection{Structural properties of witnesses}
  \label{sec:structural-properties}
  
    We now restrict the value domain of possible witnesses. This technique has been used by Kavitha~\cite{Kav19} upon defining truly popular matchings for non-bipartite instances.
%    \accom{Truly popular matchings have a witness consisting of -1,0,1, and all odd sets are 0 in it. I don't think we want to go into these details.}
    
    Given an edge $ e= \{a, b\} \in E$, we call a witness~$\witness$ of a popular matching~$M$ \emph{tight on $e$} if and only if $\wof{a} + \wof{b} = \weightedvote^{M} (e)$.
    It follows from complementary slackness that every witness is tight on every edge contained in some popular matching.
    
    \begin{lemma}\label{lem:tight}
      Let $e$ be an edge which is contained in a popular matching~$M^*$.
      For any popular matching~$M$ with witness~$\witness$, we have that $\witness $ is tight on~$e$.
%     Assume that there exist two popular matchings $M_1$ and $M_2$.
%     Let~$C$ be a connected component of $M_1 \triangle M_2$.
%     Then every witness $\witness^1$ of $M_1$ and $\witness^2$ of~$M_2$ is tight on every edge of $C$.
    \end{lemma}
    
    %%%%%%%%%%%
    
        \begin{proof}
     Let $P$ and $D$ be the primal LP~(\ref{lp:primal}) and dual LP~(\ref{lp:dual}) from \cref{lem:witness} for $M$.
     Because both~$M$ and $M^*$ are popular, it holds that $\Delta_w (M, M^*) = 0$.
     Thus, we have that $M^*$ is an optimal solution for~$P$.
     The statement now follows from complementary slackness (complementary slackness states that for every optimal solutions~$\bm{x}$ to a primal LP, in any optimal solution to the dual LP the inequality corresponding to any non-zero variable in~$\bm{x}$ must be tight, i.e.\ fulfilled with equality; see e.g.~\cite[Equation~$5.13$]{Schrijver}).
    \end{proof}
    
    %%%%%%%%%%%%%

\new{
    We observe the following direct consequences of \Cref{lem:tight}:
    \begin{observation}
        \label{obs:witness-bounds}
        Let $\witness$ be a witness of a popular matching~$M$.
        Then $\wof{a} = - \wof{b}$ for every~$\{a, b\} \in M$.
        Further, $\wof{a} \le 1$ for every $ a\in A$ and $\wof{b} \le c$ for every~$b\in B$.
        For each unmatched agent~$v$, we have $\wof{v} = 0$.
    \end{observation}

    \begin{proof}
        For any edge~$\{a, b\} \in M$, applying \Cref{lem:tight} to the inequality $\wof{a} + \wof{b} \ge \weightedvote^M (\{a, b\}) = 0$ implies $\wof{a} = - \wof{b}$.
        Using this, the inequalities $\wof{a} \ge -c$ and $\wof{b} \ge -1$ now imply $\wof{b} \le c$ and $\wof{a} \le 1$.
        It remains to consider unmatched agents.
        Note that $0 \ge \sum_{v \in V} \wof{v} = \sum_{\{a, b\} \in M} \bigl(\wof{a} + \wof{b}\bigr) + \sum_{v \in V : M(v) = \bot} \wof{v} = \sum_{ v\in V: M(v) = \bot} \wof{v}$.
        As $\wof{v} \ge 0$ for every unmatched~$v \in V$ (as $\weightedvote^M (\{v, v\}) = 0$), it follows that $\wof{v} = 0$ for every unmatched~$ v\in V$.
    \end{proof}
}
  
    Next, we prove that for each popular matching, there exists a witness taking one of only six possible values at each vertex.
    More precisely, we can find a witness~$\witness$ with $\witness_a \in R^a := \{-c, \allowbreak {1-c},\allowbreak {2-c}, -1, 0, 1\}$ for all $a\in A$ and $\witness_b \in R^b := \{-1, 0, 1, c-2, c-1, c\}$ for all $b\in B$. 
    We call such a witness \emph{nice}.
    The idea behind the proof of the existence of such a nice witness is that, given a witness~$\witness$, we can either increase $\wof{a}$ by some $\varepsilon > 0$ for all $a\in A$ with $\wof{a}$ being not nice (i.e.\ $\wof{a} \notin R^a$) and decrease $\wof{b}$ by some $\varepsilon>0$ for all $b\in B$ with $\wof{b}\notin R^b$, or decrease~$\wof{a}$ and increase~$\wof{b}$, which results in another feasible witness that takes ``nice'' values at more vertices.
  
    \begin{lemma}\label{lem:nice}
     Let $M$ be a popular matching.
     Then there exists a nice witness of~$M$.
    \end{lemma}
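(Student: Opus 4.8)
The plan is to start from an arbitrary witness $\witness$ of $M$ (which exists by \Cref{lem:witness}) and repeatedly ``fix'' it so that it agrees with the allowed value sets $R^a$ and $R^b$ at strictly more vertices, until all values are nice. Associate to $\witness$ the potential $\Phi(\witness) := |\{v \in V : \wof v \text{ is not nice}\}|$; I will show that if $\Phi(\witness) > 0$ then we can produce another witness $\wprime$ of $M$ with $\Phi(\wprime) < \Phi(\witness)$, and since $\Phi$ is a nonnegative integer this terminates with a nice witness. (A subtlety: I should argue that among all witnesses with the current value of $\Phi$, the one I pick can be chosen so that $\sum_{v : \wof v \text{ not nice}} \wof v$ is, say, maximal or minimal — this helps guarantee that a perturbation step actually turns a non-nice coordinate nice rather than merely shuffling fractional values around. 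So more precisely I do a lexicographic induction on $(\Phi(\witness), \text{something})$.)

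The core of the argument is the perturbation step. Let $A' := \{a \in A : \wof a \notin R^a\}$ and $B' := \{b \in B : \wof b \notin R^b\}$, and suppose $A' \cup B' \neq \emptyset$. For $\varepsilon \in \mathbb{R}$ define $\witness^\varepsilon$ by $\witness^\varepsilon_a := \wof a + \varepsilon$ for $a \in A'$, $\witness^\varepsilon_b := \wof b - \varepsilon$ for $b \in B'$, and $\witness^\varepsilon_v := \wof v$ otherwise. Note $\sum_v \witness^\varepsilon_v = \sum_v \wof v + (|A'| - |B'|)\varepsilon$, so the zero-sum condition is only preserved automatically when $|A'| = |B'|$; in general I must also consider the opposite perturbation $\witness^{-\varepsilon}$ (decrease on $A'$, increase on $B'$), and the claim will be that \emph{one of the two directions} works — this is exactly the ``either/or'' phrased in the paragraph preceding the lemma. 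I then check feasibility of $\witness^\varepsilon$ for small $|\varepsilon|$: the box constraints $\witness_a \ge -c$, $\witness_b \ge -1$, $\witness_v \ge 0$ for $M$-unmatched $v$ are preserved for small $\varepsilon$ unless a non-nice coordinate is sitting exactly on its lower bound — but the lower bounds $-c$, $-1$, $0$ all lie in the respective nice sets, so a non-nice coordinate is never at its bound, hence these constraints stay slack. The edge constraints $\wof a + \wof b \ge \vote^M(\{a,b\})$ are preserved in direction $+\varepsilon$ on edges where not exactly one endpoint is in $A'$-decreasing / $B'$-increasing position; the only dangerous edges are those $\{a,b\}$ with $a \in A \setminus A'$, $b \in B'$ (the sum drops by $\varepsilon$) or $a \in A'$... wait, in the $+\varepsilon$ direction the sum $\witness_a + \witness_b$ drops exactly when $b \in B'$ and $a \notin A'$, or is unchanged otherwise; so the dangerous edges are $\{a,b\}$, $a\in A\setminus A'$, $b\in B'$. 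Symmetrically, in the $-\varepsilon$ direction the dangerous edges are $\{a,b\}$ with $a \in A'$, $b \in B \setminus B'$.

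Here is the key combinatorial observation that makes one direction always safe. If an edge $\{a,b\}$ with $a \in A \setminus A'$ (so $\wof a \in R^a \subseteq \frac12\mathbb{Z}$ with denominator $1$, i.e. $\wof a \in \mathbb{Z}$... actually $R^a \subseteq \mathbb{Z} \cup (\mathbb{Z} - c)$) and $b \in B'$ is \emph{tight}, i.e. $\wof a + \wof b = \vote^M(\{a,b\})$, then since $\vote^M(\{a,b\}) \in \{-c-1, -c+1, -1-c,\dots\}$ — more carefully, $\vote^M(\{a,b\}) = \vote_a^M(b) + \vote_b^M(a)$ where $\vote_a^M(b) \in \{0, \pm c\}$ and $\vote_b^M(a) \in \{0, \pm 1\}$, so $\vote^M(\{a,b\}) \in \{0,\pm 1,\pm c, \pm(c-1),\pm(c+1)\}$ — and $\wof a \in R^a$, tightness forces $\wof b = \vote^M(\{a,b\}) - \wof a$ to lie in a specific set; one checks this set is contained in $R^b$, contradicting $b \in B'$. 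Hence \emph{no} $+\varepsilon$-dangerous edge is tight, so we may increase $\varepsilon$ from $0$ until either some non-nice coordinate reaches the boundary of its nice interval (becoming nice) or some previously-slack dangerous edge becomes tight. By the observation the latter cannot be a $+\varepsilon$-dangerous edge of the above type — but it could be an edge with $a\in A\setminus A'$, $b\in B'$ that was slack and now tight at a value where $\wof b$ became nice anyway, or an edge internal to $A'\times B'$ (sum unchanged, never an issue), or... I need to rule out that we get stuck with some coordinate moving into a non-nice region. This is where I'd push: show that the first ``event'' as $|\varepsilon|$ grows in the good direction is always a non-nice coordinate landing in $R^a$ or $R^b$. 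The main obstacle is precisely this case analysis — verifying that the nice value sets $R^a, R^b$ are ``closed enough'' under the tightness relations coming from the six possible $\vote^M$ values that some perturbation always legally reaches the next grid point — and handling the zero-sum bookkeeping when $|A'| \neq |B'|$ by committing to whichever of $\pm\varepsilon$ both preserves feasibility and, after the move, does not increase $\Phi$. Once the step is established, induction on $\Phi$ finishes the proof.
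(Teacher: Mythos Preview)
Your perturbation strategy is exactly the paper's, but you are missing the one observation that makes it go through cleanly. Since $M$ itself is an optimal primal solution of the LP in \Cref{lem:witness} (with value $0$), complementary slackness forces every witness~$\witness$ to be tight on the edges and loops of~$\tilde M$: $\wof a + \wof{M(a)} = 0$ for every matched edge and $\wof u = 0$ for every unmatched vertex. Two consequences follow immediately. First, every unmatched vertex is already nice, and a matched vertex $a$ is nice iff $M(a)$ is (because $R^b = -R^a$); hence $|A'| = |B'|$ always, and your zero-sum worry evaporates. Second, any edge with exactly one non-nice endpoint cannot lie in $M$, so on such an edge $\vote^M(\{a,b\}) \in \{-c-1,\,1-c,\,c-1,\,c+1\}$; your nine-element list is too large.

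Without $|A'| = |B'|$ your proposed fix does not work: both $+\varepsilon$ and $-\varepsilon$ change the sum by $\pm(|A'|-|B'|)\varepsilon$, so neither direction preserves zero-sum. With the equality in hand, the paper does something simpler than your continuous-$\varepsilon$ argument: it picks the single value $x$ that carries the worst non-nice coordinate to its nearest point of $R^v$, shifts all of $A'$ by $+x$ and all of $B'$ by $-x$ in one shot, and checks that every mixed edge already had slack at least $|x|$ (using the short list of $\vote^M$ values and the bounds $\wof a \le 1$, $\wof b \le c$ that also come from tightness on $M$). No lexicographic induction or ``first event'' analysis is needed.
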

    
    %%%%%%%%%%%%%%
    
     \begin{proof}
     For a vertex $v$, let $R^v$ be the set of possible values of a nice witness at $v$, i.e.\ $R^v = \{-c, \allowbreak {1- c},\allowbreak 2-c, -1, 0, 1\}$ if $v\in A$ and $R^v = \{-1, 0, 1, c-2, c-1, c\}$ if $v\in B$.
     Let $\witness$ be a witness  of $M$, and let $S(\witness)$ be the set of vertices $v$ such that $\witness_v \notin R^v$.
     We assume that $\witness$ is a witness minimizing the cardinality of $S(\witness)$.
     Assume for a contradiction that $S(\witness) \neq \emptyset$.
     Let $v\in S(\witness)$ and $r\in R^v$ such that $|y_v - r|$ is minimized.
     Set $x:= r - \witness_v$.
     
     First assume that $v\in A$.
     We construct a new witness $\witness'$ as follows:
     For each vertex~$a\in A\cap S(\witness)$, we set $\witness'_a := \witness_a + x$, and for each vertex~$b\in B\cap S(\witness)$, we set $\witness'_b := \witness_b - x$.
     Since $-c \in R^a$ for every~$a\in A$ and $-1 \in R^b$ for every~$b\in B$, the definition of~$x$ implies that $\wof{v}' \ge \neu{-}w(v)$ for every~$v\in V$.
     \new{By \Cref{obs:witness-bounds},}
%     Note that for each unmatched vertex~$u$, 
     we have~$\wof{u} = 0\in R^u$ \new{for each unmatched vertex~$u$}.
     \new{Again by \Cref{obs:witness-bounds},}
%     Thus, 
     a \new{matched} vertex $u$ fulfills $\wof{u} \in R^u$ if and only if $\witness_{M(u)} \in R^{ M(u)}$, and therefore, $|A \cap S(\witness) | = |B \cap S(\witness)|$.
     Thus, we have $\sum_{v\in A\cup B} \wof{v}' = \sum_{v\in A \cup B} \wof{v}+ x \cdot |A \cap S(\witness)| - x \cdot |B \cap S(\witness)| = 0$.
     For any edge~$\{a, b\}$ with either $a, b\in S(\witness)$ or $a, b\notin S(\witness)$, we have $\wof{a}' + \wof{b}' = \wof{a}+\wof{b}$, and thus, $\witness'$ also satisfies $\wof{a}' + \wof{b}' \ge \weightedvote^M (\{a, b\})$.
     So assume that only one end vertex of edge~$\{a, b\}$ is in~$S(\witness)$.
     This implies~$\{a, b\} \notin M$. % by \cref{lem:tight}.
     Thus, we have $\weightedvote^M (\{a, b\}) \in \{-c-1, 1-c, c-1, 1 + c\}$.
     If \neu{$a \notin S( \witness)$, i.e.\ } $\witness_a \in R^a$, then (since $\weightedvote^M (\{a, b\}) - \witness_a \le \witness_b \le c$ \new{using \Cref{obs:witness-bounds} for the last inequality}) it follows that $\weightedvote^M (\{a, b\}) - \witness_a$ \new{is either at most \neu{$-1$} or contained in $\{0, 1, c-2, c- 1, c\}$.
     By the definition of~$x$ and because $b \in S(\witness)$, for any $r \in R^b$ with $\witness_b \ge r$, we have $\wof{b} \ge r + |x|$.
     Since $-1, 0, 1, c-2, c-1$, and $c$ are all contained in $R^b$, it follows from $\witness_b \ge \weightedvote^M (\{a, b\}) - \wof{a}$ that also $\witness_b \ge \weightedvote^M (\{a, b\}) - \wof{a} + |x|$.
     Since $|\witness_a + \witness_b - (\witness'_a + \witness'_b) | = |x|$, we have $\witness'_a + \witness'_b \ge \weightedvote^M (\{a, b\})$.}
     If \neu{$b \notin S (\witness)$, i.e.\ } $\witness_b \in R^b$, then (since $\weightedvote^M (\{a, b\}) - \witness_b \le \witness_a \le 1$ \new{using \Cref{obs:witness-bounds} for the last inequality}) it follows that \new{$\weightedvote^M (\{a, b\}) - \witness_b$ is at most $-c$ or contained in $\{1-c, 2-c, -1, 0, 1\}$.
     Since $-c, 1-c, 2-c, -1, 0$, and $1$ are all contained in $R^a$, it follows from $\witness_a \ge \weightedvote^M (\{a, b\}) - \wof{b}$ that also $\witness_a \ge \weightedvote^M (\{a, b\}) - \wof{b} + |x|$.
     Since $|\witness_a + \witness_b - (\witness'_a + \witness'_b) | = |x|$, we have $\witness'_a + \witness'_b \ge \weightedvote^M (\{a, b\})$.
     We have shown that} $\witness'$ is a witness  of~$M$.
     Clearly, $S (\witness') \subseteq S(\witness)$, and since $v\notin S(\witness')$, it follows that $|S(\witness') |< |S(\witness)|$, a contradiction.
     
     The case $v\in B$ is symmetric.
    \end{proof}
    
    %%%%%%%%%%

    We further classify nice witnesses depending on which values they attain to be \emph{\even} or \emph{\odd}.

    \begin{definition}
     A witness containing only the values $-c, 2-c, -1, 1, c -2$, and $c$ is \emph{\odd}.
     A witness containing only the values $1-c, 0$, and $c-1$ is \emph{\even}.
    \end{definition}

    \begin{lemma}\label{lem:even-odd}
     Let $M_1$ and $M_2$ be popular matchings and $C$ a connected component of~$M_1 \triangle M_2$.
     Then every nice witness~$\witness$ of $M_1$ on $C$ is either \even\ or \odd.
    \end{lemma}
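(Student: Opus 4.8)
The plan is to combine \Cref{lem:tight} with the short list of values a nice witness may take at a vertex. Fix the witness $\witness$ and let $M\in\{M_1,M_2\}$ be the popular matching it certifies. By \Cref{lem:tight}, $\witness$ is tight on every edge $e=\{a,b\}$ of $C$, that is $\witness_a+\witness_b=\vote^M(e)$, and by niceness $\witness_a\in R^a=\{-c,1-c,2-c,-1,0,1\}$ and $\witness_b\in R^b=\{-1,0,1,c-2,c-1,c\}$. Write $\mathcal{O}:=\{-c,2-c,-1,1,c-2,c\}$ for the ``odd'' values and $\mathcal{E}:=\{1-c,0,c-1\}$ for the ``even'' ones, so that (using $c>3$) $\mathcal{O}\cap\mathcal{E}=\emptyset$ and $R^a\cup R^b\subseteq\mathcal{O}\cup\mathcal{E}$. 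The key step I aim to prove is: along every edge of $C$, the pair $(\witness_a,\witness_b)$ has both coordinates in $\mathcal{O}$ or both in $\mathcal{E}$. Granting this, the connectivity of $C$ lets one walk from any vertex along edges and conclude that either all vertices of $C$ get values in $\mathcal{O}$, so $\witness|_C$ is \emph{\odd}, or all get values in $\mathcal{E}$, so $\witness|_C$ is \emph{\even}.

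The per-edge claim reduces to a finite check. If $e\in M$ then $\vote^M(e)=0$; if $e\notin M$ then $\vote^M_a(b)\in\{c,-c\}$ (since $w(a)=c$ and $e\notin M$) and $\vote^M_b(a)\in\{1,-1\}$, so $\vote^M(e)\in\{-c-1,\,1-c,\,c-1,\,1+c\}$. It therefore suffices to list all solutions of $\witness_a+\witness_b=s$ over $R^a\times R^b$ for $s\in\{0,-c-1,1-c,c-1,1+c\}$. Here $c>3$ is exactly what keeps the lists short: a sum with $\witness_a,\witness_b\in\{-1,0,1\}$ lies in $[-2,2]$, and a sum of a large-negative $\witness_a\in\{-c,1-c,2-c\}$ with a large-positive $\witness_b\in\{c-2,c-1,c\}$ also lies in $[-2,2]$, so whenever $|s|>2$ one summand must be small and the other large. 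A routine enumeration then gives: for $s=0$, the solutions are $(-c,c),(1-c,c-1),(2-c,c-2),(-1,1),(0,0),(1,-1)$; for $s=1+c$, only $(1,c)$; for $s=-c-1$, only $(-c,-1)$; for $s=c-1$, only $(0,c-1)$ and $(1,c-2)$; for $s=1-c$, only $(1-c,0)$ and $(-c,1)$. Inspecting these pairs one sees that in each of them both coordinates lie in $\mathcal{O}$, or both lie in $\mathcal{E}$.

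The only thing that needs care here is bookkeeping: confirming $\mathcal{O}\cap\mathcal{E}=\emptyset$ and $R^a\cup R^b\subseteq\mathcal{O}\cup\mathcal{E}$ for $c>3$, double-checking the five enumerations, and making sure the hypothesis $c>3$ is genuinely invoked to rule out the small-plus-small and large-plus-large combinations when the target is $\pm(c-1)$ or $\pm(c+1)$. I do not expect any conceptual difficulty beyond this; the structural input (tightness on every edge of a component of the symmetric difference) is furnished entirely by \Cref{lem:tight}.
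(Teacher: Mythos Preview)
Your approach is exactly the paper's: use tightness from \Cref{lem:tight} on every edge of~$C$, observe that $\vote^M(e)\in\{0,-c-1,1-c,c-1,1+c\}$, and check that every solution $(\witness_a,\witness_b)\in R^a\times R^b$ of $\witness_a+\witness_b=s$ has both entries in~$\mathcal{O}$ or both in~$\mathcal{E}$, then propagate by connectivity. The paper states this propagation without writing out the pairs; you are simply more explicit.

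However, your enumeration is incomplete. For $s=c-1$ you missed $(-1,c)$, since $-1+c=c-1$ with $-1\in R^a$ and $c\in R^b$; for $s=1-c$ you missed $(2-c,-1)$, since $(2-c)+(-1)=1-c$ with $2-c\in R^a$ and $-1\in R^b$. Both omitted pairs lie in $\mathcal{O}\times\mathcal{O}$, so the per-edge claim and hence the lemma remain true, but the ``routine enumeration'' as written is wrong. The slip is precisely in your heuristic that when $|s|>2$ ``one summand must be small and the other large'': here $|c-1|>2$, yet $(-1,c)$ pairs a small $\witness_a$ with a large $\witness_b$ and $(1,c-2)$ does too, so the large/small split does not by itself narrow the list to two pairs. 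Fix the two lists and the argument is complete.
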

    
    %%%%%%%%%%%
    
\begin{proof}
     Since the witness is nice, it can only take values 
     $$-c, 1-c, 2-c, -1 , 0, 1, c-2, c-1, c.$$
     For any edge $\{a, b\} \in E(C)$ with $\{a, b\} \notin M_{i}$, we have that $\weightedvote^{M_i} (\{a, b\}) \in \{-c-1, 1-c, c-1, 1+ c\}$.
     Fix a vertex $v\in V(C)$.
     If $\witness_v  \in \{-c, 2-c, -1, 1, c-2, c\}$, then \Cref{lem:tight} implies that $\witness_{u} \in \{-c, 2-c, -1, 1, c-2, c\}$ for all neighbors $u \in N_C (v)$.
     By iterating this argument, it follows that $\witness $ is \odd.
     Otherwise we have $\witness_v \in \{1-c, 0, c-1\}$.
     Then \Cref{lem:tight} implies that $\witness_u \in \{1-c, 0, c-1\}$ for all neighbors $u$ of $v$, implying that $\witness$ is \even.
    \end{proof}
    
    %%%%%%%%%%%%%%%%

  \subsection{The algorithm}
  Having shown structural properties of witnesses, we can now describe the algorithm.
%  Note that while the structural properties of witnesses in \Cref{sec:structural-properties} hold for every~$c \ge 1$, the results in this section only hold for~$c> 3$

  %The general approach is as follows:
  Our algorithm consist of the following three stages. Each of these stages is illustrated on an example in \ref{sec:example}.
  \begin{enumerate}
      \item \textbf{Pruning non-popular edges}\\
      We reduce the set of edges that may appear in a popular matching to two incident edges per vertex. This allows us to decompose the graph containing the edges that may appear in a popular matching into a disjoint union of paths and cycles.
      \item \textbf{Computing ``local'' witnesses}\\
      For each path or cycle component, we show that we can restrict ourselves to up to four possible ``local'' witnesses of popularity. 
      \item \textbf{Constructing a ``global'' witness}\\
      Our algorithm greedily recognizes local witnesses that cannot be part of a global witness. At termination, we have either constructed a global witness and thus found a popular matching, or dismissed all local witnesses for one component. In the latter case, we conclude that no popular matching exists.
  \end{enumerate}

  %The general approach is as follows:  We reduce the set of edges that may appear in a popular matching to two incident edges per vertex. Thus, we can compose the graph containing the edges that may appear in a popular matching into a disjoint union of paths and cycles.  For each such path or cycle, we can show that we can restrict ourselves to up to four possible witnesses of popularity.  Our algorithm then greedily recognizes ``local'' witnesses that cannot be part of a ``global witness'' until we either found a popular matching, or dismissed all ``local'' witnesses for one path or cycle.  In the latter case, we conclude that no popular matching exists. Each step of our algorithm is illustrated on an example at the end of the appendix.
  
  \subsubsection{Pruning non-popular edges}
  \label{sec:non-pop}
  
  We start by computing a subgraph of degree at most two at each vertex that contains every popular edge.
  First, we show that for every vertex~$a\in A$, there are at most two other vertices to which $a$ can be matched in a popular matching, namely $f(a)$ and~$s(a)$.
  To do so, given an instance~$\mathcal{I}$, we show that each popular matching is also popular in the instance~$\mathcal{I}'$ where the weight of each vertex in~$B$ is reduced to~0, while the rest of $\mathcal{I}'$ equals~$\mathcal{I}$. Since $w(a) = c$ for each $a \in A$, this modified instance is equivalent to a house allocation instance, for which there is a characterization of popular matchings that relies on $f(a)$ and $s(a)$ vertices, as presented in \Cref{th:abraham}.
  We remark that this key lemma does not hold anymore when vertices from~$A$ may have different weights or weight smaller or equal to three.
  
  \begin{lemma}\label{lem:popular-with-0}
   Let $\mathcal{I}$ be an instance of \textsc{Popular Matching with Weight\-ed Voters} such that $w (a) = c > 3$ for all $a\in A$ and $w (b) = 1 $ for all $b\in B$.
   Any popular matching~$M$ in the instance~$\mathcal{I}$ is also popular in the instance $\mathcal{I}'$ that arises from $\mathcal{I}$ by setting $w (b) = 0$ for all~${b\in B}$.
  \end{lemma}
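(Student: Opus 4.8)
The statement to prove is \Cref{lem:popular-with-0}: if $M$ is popular in $\mathcal{I}$ (with $w(a)=c>3$ for $a\in A$ and $w(b)=1$ for $b\in B$), then $M$ is also popular in $\mathcal{I}'$, which is the same instance but with $w(b)=0$ for all $b\in B$. The natural strategy is to use the witness characterization from \Cref{lem:witness}. A witness $\witness$ of $M$ in $\mathcal{I}$ satisfies $\sum_v \wof{v}=0$, the edge inequalities $\wof{a}+\wof{b}\ge\vote^M(\{a,b\})$, and the lower bounds $\wof{v}\ge -w(v)$ (and $\wof{v}\ge 0$ if $v$ is unmatched). Note that the vote function $\vote^M$ depends on $w$, so passing from $\mathcal{I}$ to $\mathcal{I}'$ changes the votes on the $B$-side: in $\mathcal{I}'$, every $b\in B$ contributes $0$ to $\vote^M(\{a,b\})$, so $\vote^{M,\mathcal{I}'}(\{a,b\}) = \vote^{M,\mathcal{I}}_a(b)$ (only the $A$-endpoint's vote survives), and similarly $\vote^{M,\mathcal{I}'}(\{b,b\}) = 0$ for all $b$. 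The plan is therefore to take a witness $\witness$ of $M$ in $\mathcal{I}$ — by \Cref{lem:nice} we may even take it nice, so $\wof{b}\in\{-1,0,1,c-2,c-1,c\}$ — and modify it into a witness $\wprime$ valid for $\mathcal{I}'$, whose only change in the constraints is that the $B$-side lower bounds relax from $\wof{b}\ge -1$ to $\wprime_b\ge 0$, while the edge constraints now involve the smaller right-hand side $\vote^{M,\mathcal{I}'}(\{a,b\})$.

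The key step is to show we can push all negative $B$-witness values up to $0$ while compensating on the $A$-side. Concretely, for each $b\in B$ with $\wof{b}<0$ (so $\wof{b}=-1$ since $\witness$ is nice), raise it to $0$; since $M$ leaves unmatched only vertices with nonnegative witness and $b$ is first-choice-matched issues aside, $b$ is matched to some $a=M(b)$, and we lower $\wof{a}$ by $1$ to keep the sum zero. We must check: (i) the lower bound $\wprime_a \ge -c$ still holds — true because $\wof{a}\ge -c$ would need to become $-c-1$, so we need the extra argument that if $\wof{b}=-1$ then $\wof{a}>-c$; this should follow from the tightness/structure, e.g.\ if $\wof{a}=-c$ and $\wof{b}=-1$ then the matching edge $\{a,b\}$ has $\vote^M(\{a,b\})\le \wof a+\wof b = -c-1$, but $\vote^M$ on a matching edge is $0$ (both endpoints are indifferent), contradiction — so in fact $\wof a + \wof b\ge 0$ on every matching edge, ruling out $\wof a=-c,\wof b=-1$ simultaneously. (ii) The edge inequality for non-matching edges $\{a,b\}$ still holds after lowering $\wof a$: here the right-hand side also dropped, from $\vote^{M,\mathcal{I}}(\{a,b\}) = \vote_a^M(b)+\vote_b^M(a)$ (with $\vote_b^M(a)\in\{1,-1\}$ in $\mathcal{I}$, i.e.\ contributing $\pm 1$) to $\vote_a^M(b)$ in $\mathcal{I}'$, a decrease of at least $-1$ when $b$ would have voted $+1$; one matches the bookkeeping so that lowering $\wof a$ by at most the total relaxation works out. (iii) Loop constraints $\wprime_b\ge\vote^{M,\mathcal{I}'}(\{b,b\})=0$ now read $\wprime_b\ge0$, which holds by construction, and the $A$-loop constraints are unaffected in form.

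I expect the main obstacle to be the careful case analysis on the non-matching edges incident to a vertex $b$ whose witness was lowered-on-the-$A$-side (i.e.\ edges $\{a',b'\}$ where $a'=M(b)$ was decreased): one must verify that the simultaneous drop of the right-hand side $\vote^M(\{a',b'\})$ by exactly $1$ (because $b'$'s vote of $+1$ in $\mathcal{I}$ becomes $0$ in $\mathcal{I}'$) or by $0$ — wait, that's backwards, one must be careful that $b'$'s vote could have been $-1$ in $\mathcal{I}$, in which case the right side rose by $1$ and lowering $\wof{a'}$ breaks the inequality. The resolution must use that we do the adjustment globally and coherently, or observe that the relevant edges $\{a',b'\}$ that become problematic cannot exist because $a'$'s old witness plus $b'$'s old witness was already slack by at least $1$ (which follows from niceness: the gap between consecutive nice values, combined with $\vote^M$ on non-matching edges lying in $\{-c-1,1-c,c-1,1+c\}$, forces integrality-style slack). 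So the cleanest writeup fixes a nice witness, defines $\wprime_b=\max(\wof b,0)$ for $b\in B$, defines $\wprime_a=\wof a - |\{b: M(b)=a,\ \wof b<0\}|$, and then checks the three constraint families using niceness of $\witness$ and the formula relating $\vote^{M,\mathcal{I}'}$ to $\vote^{M,\mathcal{I}}$; the sum-zero condition is immediate since each raised $b$ is paired with its match $a=M(b)$. Once $\wprime$ is verified as a witness of $M$ in $\mathcal{I}'$, \Cref{lem:witness} applied to $\mathcal{I}'$ gives that $M$ is popular there.
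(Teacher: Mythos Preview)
Your approach via witness transformation is genuinely different from the paper's, but the transformation you propose does not work, and the resolution you sketch is incorrect. The problematic case is an edge~$\{a',b'\}$ with $a'\succ_{a'}$'s match (so $\vote_{a'}^M(b')=c$) while $b'$ prefers $M(b')$ to $a'$ (so $\vote_{b'}^{M,\mathcal{I}}(a')=-1$); here the right-hand side \emph{increases} by~$1$ when passing to~$\mathcal{I}'$, and if additionally $\wof{M(a')}=-1$ you also lower the left-hand side by~$1$. Your claim that niceness forces slack of at least~$1$ on such edges is false: take $c>3$, $A=\{a_1,a_2,a_3\}$, $B=\{b_1,b_2,b_3\}$, preferences $a_1\!:\,b_1\succ b_3\succ b_2$, $a_2\!:\,b_1\succ b_3\succ b_2$, $a_3\!:\,b_3\succ b_2$, $b_1\!:\,a_2\succ a_1$, $b_2\!:\,a_2\succ a_1\succ a_3$, $b_3\!:\,a_1\succ a_3\succ a_2$, and $M=\{\{a_1,b_1\},\{a_2,b_2\},\{a_3,b_3\}\}$. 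Then $\witness=(-c,\,1,\,2{-}c\mid c,\,-1,\,c{-}2)$ is a nice witness of~$M$ in~$\mathcal{I}$, the edge $\{a_2,b_3\}$ is tight ($1+(c{-}2)=c{-}1=\vote^{M,\mathcal{I}}$), and your transformation yields $\wprime_{a_2}=0$, $\wprime_{b_3}=c{-}2$, violating $\wprime_{a_2}+\wprime_{b_3}\ge c$ in~$\mathcal{I}'$. (A \emph{different} nice witness, with $\wof{a_3}=-c$, $\wof{b_3}=c$, does survive your transformation---so the method is witness-dependent, and you give no mechanism for selecting a workable one.)

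The paper's proof avoids witnesses entirely. It argues by contraposition: if $M$ is not popular in~$\mathcal{I}'$, then by \Cref{th:abraham} the matching~$M$ violates one of the two structural conditions (some $f$-post is unmatched or matched to a non-$f$-holder, or some $a$ is matched outside $\{f(a),s(a)\}$). In each case one exhibits a short augmenting modification in which at most one vertex of~$A$ and at most three vertices of~$B$ prefer~$M$, while two vertices of~$A$ prefer the modification; since $2c>c+3$ exactly when $c>3$, this beats~$M$ in~$\mathcal{I}$. That is where the hypothesis $c>3$ is actually used, and your plan never invokes it in an essential way---a further sign that the argument is incomplete.
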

  
  \begin{proof}
   Let $M$ be a popular matching in $\mathcal{I}$, and assume for a contradiction that there is a matching that is more popular than $M$ in $\mathcal{I}'$.
   Then, by \Cref{th:abraham}, there exists a vertex~$a\in A$ not matched to its $f$-post while its $f$-post is unmatched, a vertex~$a\in A$ not matched to its $f$-post~$f(a)$ while $f(a)$ is matched to a vertex~$a'$ with $f(a') \neq f(a)$, or a vertex $a\in A$ matched neither to its $f$-post nor its $s$-post.
   In the first case, matching $a$ to its $f$-post clearly results in a more popular matching.
   In the second case, matching~$a$ to its $f$-post~$f(a)$ and $a'$ to its $f$-post~$f(a')$ (possibly breaking up the matching of $f(a')$) results in a more popular matching~$M'$ since $a $ and $a'$ prefer $M'$ while at most one vertex in $A$, namely $M(f(a'))$, and three vertices in $B$, namely $M(a)$, $f(a)$, and~$f(a')$, can prefer~$M$.
   \new{Consequently, the weight of agents preferring~$M'$ to~$M$ is $2c$ which is larger than the weight of agents preferring~$M$ to~$M'$ which is at most~$c+3$ (here we use $c > 3$).}
   So assume that the third case applies and neither of the first two cases applies for any vertex.
   Note that this implies that~$a$ prefers~$s(a)$ to $M (a)$.
   We construct a matching~$M^*$ more popular than~$M$ in~$\mathcal{I}$ from~$M$ by matching $a$ to its $s$-post~$s(a)$, and if $s(a)$ is matched by~$M$ to a vertex $a'$, then also matching $a'$ to its $f$-post~$f(a')$ (possibly breaking up the matching of $f(a')$).
   Note that $f(a') \neq s(a) $ by the definition of an $s$-post.
   Matching~$M^*$ is more popular matching~$M$ since $a$ and $a'$ prefer $M$ while at most one vertex in $A$, namely $M(f(a'))$, and three vertices in $B$, namely $M(a)$, $s(a)$, and $f(a')$, can prefer~$M$.
   \new{Consequently, the weight of agents preferring~$M'$ to~$M$ is $2c$ which is larger than the weight of agents preferring~$M$ to~$M'$ which is at most~$c+3$ (here we use $c > 3$).}
  \end{proof}  

  We define a subgraph $H^{\deg (A) \le 2}$ of $G$ that contains all edges of the form $\{a, f(a)\}$ or $\{a, s(a)\}$ for some $a\in A$---no other edge can appear in a popular matching by \Cref{th:abraham} and \Cref{lem:popular-with-0}.
     Our goal is to also restrict the degree of each vertex in $B$ to at most two.
\new{We first briefly sketch the proof before giving the formal reduction and correctness proof.}
%     For every cycle~$C$ in~$H^{\deg (A) \le 2}$, \Cref{lem:popular-with-0} implies that every vertex in~$C$ has to be matched along an edge in~$C$:
     Consider a vertex~$b \in B$ which is contained in a cycle~$C$.
     Each vertex~$a \in A\cap V(C)$ has to be matched along an edge from~$C$ by \Cref{lem:popular-with-0}, and since $|B \cap V(C)| = |A \cap V(C)|$, it follows that every vertex from~$b \in B \cap V(C)$ is also matched along an edge of~$C$.
     Consequently, we can delete every edge incident to a vertex in~$C$ if this edge is not in~$E(C)$.
     This reduces~$H^{\deg (A) \le 2}$ to a graph~$\hct$ that is a disjoint union of cycles and trees.
     For every tree~$T$, one can show that for every vertex~$b\in B$, there are at most two edges incident to~$b$ that appear in a popular matching (see \Cref{lem:deg-le-2} 
     %in the appendix 
     for the proof).
    For every component of~$\hct$ that is a tree, we can compute the set of edges contained in a popular matching (in the instance restricted to solely this tree) via bottom-up induction, resulting in a graph~$\hcp$, which contains every popular edge and whose maximum degree is two by \cref{lem:deg-le-2}. Therefore, $\hcp$ is a disjoint union of paths and cycles, and furthermore, every connected component of $\hcp $ consists of two disjoint popular matchings. \new{This finishes the proof sketch.}
    
    \begin{lemma}\label{lem:b_two}
      We can compute in $O(n + m)$ time a subgraph $\hcp$ of $G$ such that $\hcp$ contains every popular edge, $\hcp $ is a disjoint union of paths and cycles, and every edge in~$\hcp$ is contained in a popular matching in~$\hcp$.
    \end{lemma}
    
    %%%%%%%%%%%%
    
 The proof of \Cref{lem:b_two} consists of several steps. We start by reducing~$H^{\deg (A) \le 2}$ to a subgraph~$\hct$ that contains every popular edge and whose connected components are trees or cycles.
  The basic idea here is that any edge incident to a cycle can be deleted as such an edge cannot be contained in a popular matching.
  
    \begin{numberedclaim}\label{lem:cycle}
   Let $H$ be the graph with edges $\{a, f(a)\}$ and $\{a, s(a)\}$ for all $a\in A$.
   For every cycle~$C$ in $H$, each popular matching~$M$ matches all vertices in $C$ along edges of $C$, i.e.\ for every $v \in V(C)$ with $\{v, v_1\}$ and $\{v, v_2\} \in E(C)$, we have $M(v ) = v_1 $ or $M(v ) = v_2$.
  \end{numberedclaim}

  \begin{proof}
   If the statement is not true, then there exists a cycle $C$ containing a vertex $v$ that is unmatched or is \neu{matched via} \new{an edge~$\{v, M(v)\} \notin E(C)$}.
   \Cref{th:abraham} and \Cref{lem:popular-with-0} imply that $M(a) \in V(C)$ for every $a\in A\cap V(C)$.
   Since $|A\cap V(C) | = |B \cap V(C)|$, it follows that every vertex $b \in B\cap V(C)$ is matched to a vertex $a\in A \cap V(C)$.
   Since no edge $\{a, b\}\in E(G) \setminus E(C)$ with $a\in A\cap V(C)$ and $b\in B \cap V(C)$ can be part of a \new{popular} matching by \Cref{th:abraham} and \Cref{lem:popular-with-0}, the claim follows.
  \end{proof}
  
  Thus, we define $\hct$ to be the graph arising from $H^{\deg(A)\le 2}$ by exhaustively deleting for every cycle $C\in H^{\deg (A) \le 2}$ all edges incident to a vertex in~$V(C)$ if they are not in~$E(C)$.
  Note that $\hct$ consists of a disjoint union of cycles and trees.
  As our goal is to reach a graph that has maximum degree two, we now take care of high-degree vertices in~$\hct$.

    \begin{numberedclaim}\label{lem:deg-le-2}
     For any vertex $v\in V (\hct)$, there are at most two edges incident to $v$ that can appear in a popular matching in~$\hct$.
    \end{numberedclaim}

    \begin{proof}
     As every vertex in~$A$ has degree at most two, the lemma holds for every vertex in~$A$.

     So consider some $b\in B$.
     Assume for a contradiction that there are three popular matchings $M_1$, $M_2$, and $M_3$ such that $b$ prefers $a_1:= M_1 (b)$ to $a_2 := M_2 (b)$\new{, and prefers $a_2$} to $a_3 := M_3 (b)$.
     Let $\witness^i$ be a nice witness of $M_i$ for~$i\in \{1,2,3\}$.
     We distinguish two cases:

     \textbf{Case 1:} $ b = f(a) $ for some $a\in A$.

     \Cref{th:abraham} and \Cref{lem:popular-with-0} imply that $f(a_1) = b$.
     Thus, for every~$i \in \{2,3 \}$, we have $\weightedvote^{M_i} (\{a_1, b\}) = c + 1$, and therefore \new{(as $\wof{a} \le 1$ and $\wof{b'} \le c$ for every~$a \in A$, $b' \in B$, and witness~$\witness$ by \Cref{obs:witness-bounds})}, $\wtwo_b = c = \witness^3_b$.
     \new{Therefore, \Cref{lem:even-odd} implies that} both $\wtwo$ and~$\witness^3$ are \odd\ on the connected component~$C$ of $M_2 \triangle M_3$ that contains~$b$.
     However, since $M_2 \triangle M_3$ is not a cycle (otherwise $\{a_1, b\}$ would have been deleted from $\hct$), it follows that there is a vertex~$u \in V(C)$ that is unmatched in~$M_i$ for some $i\in \{2, 3\}$.
     This implies that $\witness^i_u = 0$ \new{by \Cref{obs:witness-bounds}}.
     By \Cref{lem:even-odd}, it follows that $\witness^i$ is \even\ on $C$, a contradiction to $\witness^i$ being \odd\ on~$C$.

     \textbf{Case 2:} $b = s(a)$ for some $a\in A$.

     By \Cref{th:abraham} and \Cref{lem:popular-with-0}, it follows that $M_1 (a_3) = f(a_3) = M_2 (a_3)$.
     Thus, we have $\weightedvote^{M_i} (\{ a_3, b\}) = -c-1$ for every~$i \in \{\new{1,2}\}$.
     By \Cref{lem:tight}, \new{we have $\woneof{a_3} + \woneof{b} = -c -1 = \wtwoof{a_3} + \wtwoof{b}$, implying} $\wone_b  = -1 = \wtwo_b$ \new{(and $\woneof{a_3} = -c = \wtwoof{c}$)}.
     \new{Therefore, \Cref{lem:even-odd} implies that} both $\wone$ and $\wtwo$ are \odd\ on the connected component~$C$ of~\new{$M_1 \triangle M_2$ }containing~$b$.
     However, since $M_1 \triangle M_2$ is not a cycle (otherwise $\{a_3, b\}$ would have been deleted from $H$), it follows that there is a vertex~$u$ in~$C$ that is unmatched in~$M_i$ for some $i\in \{1,2\}$.
     This implies that $\witness^i_u = 0$ \new{by \Cref{obs:witness-bounds}}.
     By \Cref{lem:even-odd}, it follows that $\witness^i$ is \even\ on~$C$, a contradiction to~$\witness^i$ being \odd\ on~$C$.
     \end{proof}

    Let now $\hcp$ be the graph whose edge set is the disjoint union of popular matchings in each component of $\hct$.
    We now show that $\hcp$ can be computed in polynomial time, starting with the connected components consisting of cycles.
    
    \begin{observation}\label{obs:popular-on-cycles}
      Let $C$ be a cycle in $\hct$.
      Then we can compute the set of popular matchings in~$C$ in linear time.
    \end{observation}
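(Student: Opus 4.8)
The plan is to reduce the whole question to at most two popularity tests, each of which becomes a linear-time maximum-weight matching computation on the cycle itself. First I would observe that $C$, being a cycle in a bipartite graph, is an even cycle and hence has exactly two perfect matchings $M_1,M_2$ (its two alternating edge sets). By \Cref{lem:cycle} — whose argument, via \Cref{th:abraham} and \Cref{lem:popular-with-0}, carries over to the instance induced by $C$ because each $a\in V(C)\cap A$ has exactly its two cycle neighbours $f(a)$ and $s(a)$ available there — every popular matching of the instance induced by $C$ matches every vertex of $C$ within $C$, so it coincides with $M_1$ or with $M_2$. Hence it only remains to decide, for a given perfect matching $M\in\{M_1,M_2\}$ of $C$, whether $M$ is popular, in linear time; running that test on $M_1$ and $M_2$ and returning those that pass gives the set of popular matchings of $C$. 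Restricting attention to the standalone instance induced by $C$ is legitimate, since all edges incident to a vertex of $C$ that are not in $E(C)$ were already deleted when forming $\hct$.

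To test popularity of $M$ I would invoke \Cref{cl:zero_weight}: $M$ is popular iff $\vote^M(\tilde M')\le 0$ for every matching $\tilde M'$ of $\tilde C$ (the cycle with a loop added at each vertex). Since $M$ is perfect on $C$, every loop satisfies $\vote^M(\{v,v\})=-w(v)$, so for any matching $M'$ of $C$ we get $\vote^M(\tilde M')=\sum_{e\in M'}\vote^M(e)-\sum_{v\notin V(M')}w(v)=\sum_{\{u,v\}\in M'}\bigl(\vote^M(\{u,v\})+w(u)+w(v)\bigr)-\sum_{v\in V(C)}w(v)$. Because every edge of $C$ joins an $A$-vertex of weight $c$ to a $B$-vertex of weight $1$, we have $w(u)+w(v)=c+1$, and since $\vote^M(e)\ge -c-1$ always, the edge weights $w'(\{u,v\}):=\vote^M(\{u,v\})+w(u)+w(v)$ are non-negative. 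A maximum-weight matching of the cycle $C$ under $w'$ is computable in linear time by the standard dynamic program (fix an edge $e_0$ of $C$; either $e_0\notin M'$, leaving a path to solve by the usual $O(|V|)$ path DP, or $e_0\in M'$, leaving a shorter path). As $M$ itself gives $\vote^M(\tilde M)=0$, this maximum is always $\ge 0$, and $M$ is popular iff it equals $0$, i.e.\ iff the maximum $w'$-weight matching has weight $\sum_{v\in V(C)}w(v)$.

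The main obstacle is getting the bookkeeping in \Cref{cl:zero_weight} exactly right — in particular the negative contribution of the loops at vertices left unmatched by the challenger $M'$, and the resulting reformulation as an ordinary non-negative-weight maximum-weight matching — together with checking that the two alternating matchings really are the only candidates. An alternative route, closer in spirit to the surrounding structural lemmas, is to test popularity through the witness characterization \Cref{lem:witness}: by \Cref{lem:nice} and \Cref{lem:even-odd} a witness of $M$, if one exists, may be taken \even{} or \odd{} on $C$, and \Cref{lem:tight} forces it to be tight on every edge of $M$; propagating a single sign choice around the cycle then leaves only a constant number of candidate value patterns, each verifiable against the linearly many edge inequalities. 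Either approach yields the claimed linear-time bound.
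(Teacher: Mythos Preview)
Your proposal is correct and follows essentially the same route as the paper: reduce to the two perfect matchings of the even cycle via \Cref{th:abraham} and \Cref{lem:popular-with-0}, then test each candidate for popularity through \Cref{cl:zero_weight}, which becomes a maximum-weight matching problem on the cycle solvable by a standard linear-time dynamic program. The paper's own proof is terser---it does not spell out the loop bookkeeping or the reformulation with the shifted weights $w'(\{u,v\})=\vote^M(\{u,v\})+w(u)+w(v)$ that you make explicit---but the underlying argument is identical, and your added detail is sound.
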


    \begin{proof}
      \Cref{th:abraham} and \Cref{lem:popular-with-0} imply that every popular matching must match every vertex in~$V(C)$.
      Thus, there are only two different candidates for a popular matching, and these can be checked for popularity in \new{linear} time by \Cref{cl:zero_weight}.
      Note that on cycles, a simple dynamic program allows to find a maximum-weight matching in linear time.
      \end{proof}

    The next claim shows how to compute the set of edges in a popular matching for a tree, implying that $\hcp$ can be computed in \new{linear} time. \neu{We remark that this result holds for arbitrary trees, not only for trees in~$\hct$.}
    
    \begin{numberedclaim}\label{lem:trees}
      If $T$ is a tree, then the set of edges that appear in at least one popular matching in~$T$ can be computed in linear time.
    \end{numberedclaim}%\ronecom{ P.35,l.909: Make it clear $T$ is a tree in $H^{C+T}$}

    \begin{proof}
\newcommand{\setX}{X}
\newcommand{\inX}{x}
\newcommand{\noDPentry}{\square}
%\khcom{Could you please go over the DP again and see whether the write-up can be improved? Reviewer 3 complained about it.}\accom{I read through this part again very carefully. I corrected some typos (all marked) and added some comments. Nothing essential though. Don't know how to please R3.}
      We compute the set of edges contained in a popular matching via dynamic programming.
      Fix an arbitrary vertex~$r \in V(T) $ to be the root of~\new{$T$.} %the tree.
      For a vertex $v\in V(T)$, we denote by~$T_v$ the subtree of~$T$ rooted at~$v$\new{; so $T_r = T$}.
      Let~$\setX:= \{-c, 1-c, 2-c , -1 , 0, 1, c-2, c-1, c\}$ be the set of values a nice witness can attain. \new{In this proof, we first define edge set %s $F^{e , \inX}_v$ and 
    $\neu{F}^{e , \inX}_v$, capturing the set of edges\neu{~$f$ such that there is some popular matching~$M$ in~$T_v$ with $e, f \in M$ that fulfills the additional constraint that there is a witness~$\witness$ of~$M$ with~$\wof{v} = x$\neu{.} If $e = \bot$, then we drop the condition that $e \in M$.}
    %which fulfill the additional constraints that $e \in M$ (if $e = \bot$, then this constraint is replaced by the constraint that $v$ is unmatched) and there is a witness~$\witness$ of~$M$ with~$\wof{v} = x$\neu{.}
    Then we present our dynamic program and 
 prove that it indeed computes $\neu{F}^{e , \inX}_v$ and also outputs the set of popular edges for the root node. Finally, we elaborate on the running time.}
      
      \paragraph{\new{General approach of the dynamic program}} The dynamic program performs bottom-up induction on~$T$\new{: starting from the leaves, it decides at each vertex~$v$ whether a matching on~$T_v$ together with a witness~$\witness$ can potentially be extended to a popular matching and a witness on~$T$. To make this decision, it is enough to know the value of~$\wof{v}$ and the vertex \neu{to which} $v$ is matched to (if any)}.
      %Note that to decide whether a matching on~$T_v$ together with a witness~$\witness$ can be extended to a popular matching on~$T$, it is enough to know to which vertex (if any) $v$ is matched and the value of~$\wof{v}$.
      Thus, 
      for each vertex $v\in V(T)$, every~$e\in \delta_T (v) \cup \{\bot\}$ (representing the edge with which $v$ is matched, where $\bot $ indicates that $v$ remains unmatched), and every \new{possible witness} $\inX\in \setX$, the dynamic program \new{creates} %contains
      one \new{edge} \new{set}~$\new{\neu{F}}^{e, \inX}_v$.
      \new{This set shall consist of the edges that appear in any popular matching together with $e$ on~$T_v$ %which contain the edge~$e$ (or leave $v$ unmatched if $e = \bot$)
      and have a witness~$\witness$ with~$\wof{v} = \inX$.
      
      \paragraph{Defining $\neu{F}^{e , \inX}_v$} If $e = \{v, z\}$ is the upward edge incident to~$v$ \new{in $T$}, then any matching containing~$e$ is not contained in~$T_v$, but only in~$T_v \cup \{z\}$\new{. T}o capture this, we define~$T_v^e := T_v \cup \{z\}$ if $e = \{v, z\}$ is the upward edge at~$v$, and $T_v^{e'} := T_v$ for all~$e' \in \bigr(\delta (v) \setminus \{e\}\bigl) \cup \{\bot\}$.}
      If $e = \{v, z\} \in \delta_T (v)$, then define~\new{$\neu{F}^{e, x}_v$} to be the union \new{of all edges appearing in some} popular matching~$M$ on~$\new{T_v^e}$ with~$e \in M$ and having a witness~$\witness$ with~$\wof{v} = \inX$ assuming that at least one such popular matching exists; if no such matching exists, then we set~\new{$\neu{F}^{e, x}_v = \noDPentry$}.
%      \new{We remark that for all edges~$e= \{v, z\} \in \delta_T (v)$ except for the unique upward edge, we have $T_v \cup \{z \} = T_v$.}
      If $e = \bot$, then we define \neu{$\neu{F}^{e, x}_v$} %$G^{e, \new{x}}$
      to be the union \new{of all edges appearing in some} \neu{popular} matching~$M$ on~$T_v$ with~$v$ being unmatched and having a witness~$\witness $ with~$\wof{v} = \inX$ assuming that at least one such popular matching exists; if no such matching exists, then we set~$\neu{F}^{e, \inX}_v = \noDPentry$.
      \medskip
      
      We will now \new{present} a dynamic program computing \new{sets}~$\new{\Ghat}^{e, \inX}_v$ \neu{(we will later show that $\neu{F}^{e, \inX}_v = \Ghat^{e, \inX}_v$ for every $e, \inX$, and $v$)}.
%      for which we will later show that~$F^{e , \inX}_v = G^{e, \inX}_v$.
%      This entry~$F^{e, \inX}_v$ contains the union over all popular matchings~$M$ on~$T_v \cup V_e$, where $V_e = \{z \}$ if $e= \{v, z\}$ and $V_e= \emptyset$ otherwise (i.e.\ $e = \bot$), such that $e \in M$ if $e \in \delta_T (v)$ and $v$ is unmatched otherwise (i.e.\ $e= \bot$) and there is a witness~$\witness$ of~$M$ with~$\wof{v} = \inX$.
%      If no such popular matching exists, then we set $F^{e, \inX}_v := \bot$.
      
       \paragraph{\new{Computing $\Ghat^{e , \inX}_v$ for leaves}} The dynamic program computes the sets~$\Ghat^{e,\inX}_v$ via bottom-up induction (i.e.\ processing the vertices of the tree starting with the leaves and then ``going up'' the tree towards the root).
      For a leaf~$v$, this set can be computed in constant time \new{as follows.
      First note that~$T_v^\bot$ contains only a single vertex (namely~$v$), and thus the empty matching is the unique popular matching (with unique witness~$\wof{v} = 0$ by \Cref{obs:witness-bounds}).
      Consequently,
      we set $\Ghat^{\bot, \inX}_v = \emptyset $ if $\inX = 0$ and %$F^{\bot, \inX}_v = \noDPentry$
      \neu{$\Ghat^{\bot, \inX}_v = \noDPentry$}
      for all $\inX \in X \setminus\{0\}$.
      We also need to compute~$\Ghat^{e, \inX}_v$ for the unique edge~$e = \{v, z\}$ incident to~$v$.
      The graph~$T_v^e$ admits a unique popular matching, namely~$M := \{e\}$.
      Each feasible witness~$\witness$ for~$M$ fulfills~$\wof{z} = - \wof{v} $ (by \Cref{obs:witness-bounds}) and $-w(v) \le \wof{v} \le w(z)$ (and indeed any~$\witness$ fulfilling these two conditions is a witness for~$M$).
      Thus, we set $\Ghat^{e, \inX}_v = \{ e\}$ if $-w(v) \le \inX \le w( z)$, and $\Ghat^{e, \inX}_v = \noDPentry$ for all other values of~$\inX$.
      From \neu{these} observations it follows that $\Ghat_v^{e, x}$ is computed correctly (i.e.\ it coincides with the above defin\neu{i}tion of~$F_v^{e, x}$\neu{)} for each leaf~$v$.}
%      $F_v^{e, x} = G_v^{e, x}$) for each leaf~$v$.}
%      If~$e = \{v, z\}$, then \new{$\witness $ is a witness of the popularity of $\{e\}$ in $T[\{v, z\}]$ if and only if $\wof{v} = - \wof{z} $ (by \Cref{lem:tight}), $\wof{v} \ge - w(v)$, and $\wof{z} \ge - w (z)$.
%      Thus, we have} $F^{e, \inX}_v = \{\{v, z\}\}$ if $\inX \ge - w(v)$ and $- \inX \ge -w(z)$.
%      Otherwise, we have $F^{e, \inX}_v = \noDPentry$.

      \paragraph{\new{Deriving the set of popular edges from the dynamic program}}
      For the root~$r$, the union of%~$F^{e, \inX}_r$
      ~\neu{$\Ghat^{e, \inX}_r$}
      over every~$e\in \delta_T (r) \cup \{\bot\}$ and every~$\inX \in \setX$ is the set of popular edges.
      
      \paragraph{\new{Computing $\Ghat^{e , \inX}_v$ for parent nodes}} \new{We continue by showing} how to compute~$\new{\Ghat}^{e, \inX}_v$ for a vertex~$v$ from the sets~$\new{\Ghat}^{e_d, \inX_d}_d$ for every child $d$ of~$v$, every~$e_d \in \delta_T (d) \cup \{ \bot\}$, and $\inX_d \in \setX$.
      Denote the set of children of $v$ by~$C (v)$. %\accom{This notation is used very sporadically, although $d \in C(v)$ is written out in words many times. I don't know if we should switch to $d \in C(v)$.}\khcom{I am indifferent, do as you prefer.}
      For every child~$d \in C(v)$ \new{such that $d$ is not an endpoint of~$e$}, we define~$S_d^{e, \inX}$ to be the set containing every pair~$(e_d, \inX_d)$ with $\new{\Ghat}^{e_d, \inX_d}_d \neq \noDPentry$ such that $\{d, v\}$ does not violate the condition of a witness.
      More precisely, $S_d^{e, \inX}$ contains every pair~$(e_d, \inX_d)$ with $e_d \in \delta_T (d) \cup \{\bot\}$ such that $\new{\Ghat}^{e_d, \inX_d}_d \neq \noDPentry $ and $\weightedvote^{e_d}_{d} (v) + \weightedvote^{e}_v (d) \le \inX + \inX_d$.
      For the child~$d^*$ with $e = \{v, d^*\}$ \new{(if such a child exists; this is not the case when $e $ is the upward edge or $e = \bot$)}, we set~$S_{d^*}^{e, \inX} := \{(e, -\inX)\}$.
      \neu{An example of~$S^{e,x}_d$ and the computation of~$\Ghat^{e, \inX}_v$ can be seen in \Cref{fig:example-G-S}.}
      \new{The intuition here is that in order to extend the edge~$e$ to a popular matching with a witness~$\witness$ with $\wof{v} = \inX$, this extension needs to contain, for each child~$d$ that is not an endpoint of~$e$, a popular matching~$M_d$ on~$T_d$ together with a witness~$\witness^d$ such that $\weightedvote^{M_d}_d (v) + \weightedvote^{e}_v (d) \le \witness^d + \inX $ (and in fact, this condition is also sufficient).
      The set~$S_d^{e, \inX}$ now precisely contains the \neu{pairs~$(e_d, x_d)$} that fulfill this condition.
      Consequently, the set~$\new{\Ghat}^{e, \inX}_v$ is computed as follows:}
      If $S_d^{e, \inX} = \neu{\emptyset} $ for some child $d \in C(v)$, then \neu{there is no popular matching which contains~$e$ and fulfills $\wof{v} = \inX$ and so} we set $\new{\Ghat}^{e, \inX}_v:= \noDPentry$.
      Otherwise, \neu{taking the union over an arbitrary popular matching in~$T_d$ containing~$e_d$ and having a witness with $\wof{d} = x_d$ for some~$(e_d, x_d) \in S^{e,x}_d$ for every child~$d$ and the edge~$e$ results in a popular matching in~$T_v$, and so} we set $\new{\Ghat}^{e, \inX}_v := \bigl(\{e \}\setminus\{\noDPentry\}\bigr) \cup \bigcup_{d \in C(\new{v})} \bigcup_{(e_d, \inX_d) \in S_d^{e, \inX}} \new{\neu{F}}^{e_d, \inX_d}_d$.

\neu{
\begin{figure}
    \centering
\begin{tikzpicture}[yscale = 0.85]
          \node[terminal, label=90:$v$] (v) at (0,1) {};
          \node[vertex, label={[xshift=-.0cm]180:$c_1$}] (c1) at (-2.,-1) {};
          \node[vertex, label=0:$c_2$] (c2) at (0.,-1) {};
          \node[vertex, label=0:$c_3$] (c3) at (2.,-1) {};

          \node[terminal, label=0:$w_1$] (w1) at (-4, -3) {};
          \node[terminal, label=0:$w_2$] (w2) at (-2, -3) {};
          
          \node[terminal, label=0:$w_3$] (w3) at (-0, -3) {};

          \node[terminal, label=0:$w_4$] (w4) at (2, -3) {};
          \node[terminal, label=0:$w_5$] (w5) at (4, -3) {};
        
          \node[vertex, label={[xshift=-.0cm]180:$x_1$}] (x1) at (-4.,-5) {};
          \node[vertex, label=0:$x_2$] (x2) at (.,-5) {};
          \node[vertex, label=0:$x_3$] (x3) at (2.,-5) {};
          
        \draw (v) edge[] node[pos=0.2, fill=white, inner sep=2pt] {\scriptsize $1$}  node[pos=0.76, fill=white, inner sep=2pt] {\scriptsize $3$} (c1);
        \draw (v) edge[red, ultra thick] node[pos=0.2, fill=white, inner sep=2pt] {\scriptsize $2$}  node[pos=0.76, fill=white, inner sep=2pt] {\scriptsize $1$} (c2);
        \draw (v) edge[] node[pos=0.2, fill=white, inner sep=2pt] {\scriptsize $3$}  node[pos=0.76, fill=white, inner sep=2pt] {\scriptsize $1$} (c3);

        \draw (c1) edge[] node[pos=0.2, fill=white, inner sep=2pt] {\scriptsize $1$}  node[pos=0.76, fill=white, inner sep=2pt] {\scriptsize $1$} (w1);
        \draw (c1) edge[] node[pos=0.2, fill=white, inner sep=2pt] {\scriptsize $2$}  node[pos=0.76, fill=white, inner sep=2pt] {\scriptsize $1$} (w2);
        
        \draw (c2) edge[] node[pos=0.2, fill=white, inner sep=2pt] {\scriptsize $2$}  node[pos=0.76, fill=white, inner sep=2pt] {\scriptsize $2$} (w3);
        
        \draw (c3) edge[] node[pos=0.2, fill=white, inner sep=2pt] {\scriptsize $2$}  node[pos=0.76, fill=white, inner sep=2pt] {\scriptsize $1$} (w4);
        \draw (c3) edge[] node[pos=0.2, fill=white, inner sep=2pt] {\scriptsize $3$}  node[pos=0.76, fill=white, inner sep=2pt] {\scriptsize $1$} (w5);
        
        \draw (w1) edge[] node[pos=0.2, fill=white, inner sep=2pt] {\scriptsize $2$}  node[pos=0.76, fill=white, inner sep=2pt] {\scriptsize $1$} (x1);
        
        \draw (w3) edge[] node[pos=0.2, fill=white, inner sep=2pt] {\scriptsize $1$}  node[pos=0.76, fill=white, inner sep=2pt] {\scriptsize $1$} (x2);
        
        \draw (w4) edge[] node[pos=0.2, fill=white, inner sep=2pt] {\scriptsize $2$}  node[pos=0.76, fill=white, inner sep=2pt] {\scriptsize $1$} (x3);
\end{tikzpicture}
  \captionsetup{singlelinecheck=off}
\caption[]{\neu{An example for the computation of $\Ghat^{e, \inX}_v$ where $e = \{v, c_2\}$ and $\inX = -1$.
The edge~$e$ is depicted in red.
The sets~$S^{e,x}_{c_i}$ for $i \in [3]$ are the following:
\begin{itemize}
  \item[$c_1$:] There are two popular matchings in~$T_{c_1}$, namely $M_{c_1}^1 := \{\{w_1,  c_1\}\}$ and $M_{c_2}^2 := \{\{w_1, x_1\}, \{w_2, c_1\}\}$.
The unique witness~$\wone$ for~$M_{c_1}^1$ fulfills $\woneof{c_1} = c-1$ but we have $\woneof{c_1} + \inX  = c-2 < \weightedvote_{c_1}^{M_{c_1}^1} (v) + \weightedvote_v^{e} (c_1) = -1 + c$.
Thus $(\{w_1, c_1\}, x') \notin S^{e, x}_{c_1}$ for every~$x' \in X$.
Matching~$M_{c_1}^2$ has a unique witness~$\wone$ with~$\woneof{c_1} = c$ and we have $\woneof{c_1} + \inX  = c-1 = \weightedvote_{c_1}^{M_{c_1}} (v) + \weightedvote_v^{e} (c_1)$.
Consequently, $S^{e, x}_{c_1} = \{(\{w_2, c_1\}, c)\}$.
    \item[$c_2$:] The only popular matching in~$T^{e}_{c_2}$ is $\{e, \{w_3, x_2\}\}$, and this matching has a feasible witness~$\wtwo$ with $\wtwoof{c_2} = - \inX$, so we have $S^{e, \inX}_{c_2} = \{(e, - \inX)\}$.
\item[$c_3$:] There are two popular matchings in~$T_{c_3}$, namely $M_{c_3}^1 := \{ \{ w_4, c_3\}\}$ and $M_{c_3}^2 := \{\{w_5, c_3\}, \{w_4, x_3\}\}$.
The unique witness~$\wthree$ for $M_{c_3}^3$ fulfills $\woneof{c_3} = c-1$ and we have $\woneof{c_3} + \inX = c- 2 \ge \weightedvote_{c_3}^{M_{c_3}^1} (v) + \weightedvote_v^{e} (c_3) = 1- c$.
Thus, $(\{w_1, c_1\}, c-1) \in S^{e, x}_{c_3}$.
Matching~$M_{c_3}^2$ has a unique witness~$\wthree$ with~$\wthreeof{c_3} = c$ and we have $\woneof{c_3} + \inX  = c-1 > \weightedvote_{c_3}^{M_{c_3}} (v) + \weightedvote_v^{e} (c_3) = 1-c$.
Thus, $(\{w_5, c_3\}, c) \in S^{e, x}_{c_3}$.
Consequently, we have $S^{e, x}_{c_3} = \{(\{w_4, c_1\}, c-1), (\{w_5, c_1\}, c)\}$.
\end{itemize}
From the sets~$S^{e,x}_{c_i}$ for~$i\in [3]$, it follows that $\Ghat^{e, x}_v = \{\{w_2, c_1\}, \{w_1, x_1\} \} \cup \{\{v, c_2\}, \{w_3, x_2\}\} \cup \{\{w_4, c_3\}, \{w_5, c_3\}, \{w_4, x_3\}\}$.
}
}
    \label{fig:example-G-S}
\end{figure}
}%\accom{A little overuse of 'we have' in the fig caption.}\khcom{Removed roughly half of them.}

%      For each vertex~$v\in V(G)$, the dynamic program takes $O (n \cdot \deg (v))$ time.
%      Thus, the whole dynamic program takes quadratic time.
      \paragraph{\new{$\new{\neu{F}}^{e, \inX}_v$ is computed correctly}} We now prove by bottom-up induction that~$\new{\neu{F}}^{e, \inX}_v$ is computed correctly\neu{, i.e.\ we have $\neu{F}^{e, \inX}_v = \Ghat^{e, \inX}_v$}.
%      \new{In order to do so, we denote by~$\Ghat_v^{e, x}$ the value computed by the dynamic program, while $G^{e, \inX}_v$ (the set we aim to compute) is defined as above.}
      For the leaves, \new{we argued \neu{that $\neu{F}^{e, \inX}_v = \Ghat^{e, \inX}_v$} already when describing how to compute $\Ghat^{e, \inX}_v$}.
      So fix a non-leaf node~$v$, an element~$e \in \delta_T (v) \cup \{\new{\bot}\}$, and some~$\inX \in \setX$.
      \new{We will show $\Ghat^{e, \inX}_v = \neu{F}^{e, \inX}_v$ in two steps: first we show that $\neu{F}^{e, \inX}_v \subseteq \Ghat^{e, \inX}_v$ and then we show that $\Ghat^{e, \inX}_v \subseteq \neu{F}^{e, \inX}_v$.}
%      Let~$F^*$ be the set which~$F^{e, \inX}_v$ shall compute, i.e.\ $F^*$ contains the union over all popular matchings~$M$ on~$T_v^e$ with~$e \in M$ if~$e = \{v, z\}$ (on $T_v$ if $ e= \bot$) and having a witness~$\witness$ with~$\wof{v} = \inX$ if such a matching exists and $F^* = \bot$ otherwise.
      
      \subparagraph{\new{Showing $\neu{F}^{e, \inX}_v \subseteq \Ghat_v^{e, \inX}$.}}
      First we show that $\neu{F}^{e, \inX}_v \subseteq \Ghat^{e, \inX}_v$.
      Consider an edge~$f\in \neu{F}^{e, \inX}_v$.
      By the definition of~$\neu{F}^{e, \inX}_v$\new{,} there is a popular matching~$M$ on~\new{$T_v^e$} with $e \in M$ if $e \in \delta (v)$ and with $v$ being unmatched in~$M$ if $e = \bot$ such that there exists a witness~$\witness$ of~$M$ with $\wof{v} = \inX$.
      This implies that for every child $d \in C(v)$, we have that~$\neu{F}^{\{d, M(d)\}, \wof{d}}_{\neu{d}} \neq \noDPentry$.
      By \new{the} induction hypothesis, we have~$\Ghat^{\{d, M(d)\}, \wof{d}}_{\new{d}} = \neu{F}^{\{d, M(d)\}, \wof{d}}_{\new{d}} \neq \noDPentry$.
      Thus, we have that~$\Ghat^{e, \inX}_v \neq \noDPentry$.
      If $f = \{v, p\}  = e$ where $p$ is the parent of~$v$, then clearly $f \in \Ghat^{e, \inX}_v$.
      Otherwise $f$ is contained in~$E(T_d^{\new{e_d}})$ for some child $d \in C(v)$. 
      By induction, we have that~$f\in \Ghat^{\{d, M(d)\}, \wof{d}}_d$, implying that~\neu{$f\in \Ghat^{e, \inX}_v$}.
      
\subparagraph{\new{Showing $\Ghat^{e, \inX}_v \subseteq \neu{F}^{e, \inX}_v$.}} Next, we show that $\Ghat^{e, \inX}_v \subseteq \neu{F}^{e, \inX}_v$.
      Consider an edge~$f \in \Ghat^{e, \inX}_v$.
      We will show that there exists a popular matching~$M$ containing~$f$ on~$T_v^\new{ e}$ with $e \in M$ if $e \in \delta_T (v)$ and with $v$ being unmatched in~$M$ if $e = \bot$ such that there exists a witness~$\witness$ of~$M$ with $\wof{v} = \inX$.
      Since~$f \in \Ghat^{e, \inX}_v$ there exist $(e_d, \inX_d) \in S_d^{e, \inX}$ for every $d\in C (v)$ such that $f \in \bigl(\{e\}\setminus \{\noDPentry\}\bigr) \cup \bigcup_{d\in C(v) } \Ghat^{e_d, \inX_d}_d$ and~$\Ghat^{e_d, \inX_d}_d \neq \noDPentry$.
      By induction, we have $\Ghat^{e_d, \inX_d}_d = \neu{F}^{e_d, \inX_d}_d$ for every child~$d$.
      Consequently, for every child~$d$, there exists a popular matching~$M_d$ with witness~$\witness^d$ such that~$\wof{d}^d = \inX_d$ and~$M_d (d) \neq v$, and if $f\in E(T_v)$, then we can \neu{construct}%pick
      ~$M_d$ such that it contains~$f$.
      Then~$\bigl(\{e\} \setminus \{\bot\}\bigr) \cup \bigcup_{d \in C(v)} M_d$ is a popular matching containing~$f$ \new{(if $e= f$, then this follows by the definition of the new matching; otherwise $f$ is contained in~$M_d$ for some child~$d \in C(v)$)}, of which a witness can be constructed by setting~$\wof{ u} := \wof{u}^d$ for every~$u\in T_d$ and every child~$d \in C(v)$ and $\wof{v}:= \inX$.
      No conflicting edge can be contained in $T_d$ for some~$d$ because $\witness^d$ is a feasible witness, and $\{v, d\}$ is not conflicting for every child~$d$ because~$\weightedvote_d^{e_d} (v) + \weightedvote_v^e (d) \le \inX + \inX_d$.
      
      \paragraph{\new{Running time}} It remains to analyze the running time.
      Consider a vertex~$v \in V(G)$ and fix some~$\inX \in \setX$.
      We process the edges from~$\delta (v)$ in order according to the preferences of~$v$.
      Thus, when turning from one edge~$e = \{v, c\}$ to the next edge~$e' = \{v, c'\}$, sets~$S_d^{e, \inX}$ and $S_{d}^{e', \inX}$ can only change for~$d \in \{c, c'\}$.
      Consequently, the computation of~$\Ghat^{e, \inX}_v$ can be done in amortized constant time when \new{only the corresponding backtracking information is stored instead of the set of edges appearing in a popular matching.}
      %not storing all edges contained in a popular matching, but only the corresponding backtracking information instead).
      It follows that we need $O(\deg (v))$ time for vertex~$v$, and consequently $O (n)$ time in total.
%\end{comment}
\end{proof}

    We are now ready to prove \Cref{lem:b_two}.
    
    \begin{proof}[Proof of \Cref{lem:b_two}]
     By \Cref{lem:popular-with-0} and \Cref{th:abraham} we can compute a graph $H^{\deg (A) \le 2}$ containing every popular edge and fulfilling~$\deg (a) \le 2$ for every $a\in A$.
     Applying \Cref{lem:cycle} results in a graph~$\hct$ such that $\hct$ contains every popular edge and every connected component of $\hct$ is a cycle or a tree.
     Applying \Cref{lem:trees,lem:deg-le-2}, we compute from~$\hct$ a subgraph~$\hletwo$ such that $\hletwo$ contains every popular edge and has maximum degree two.
     For every connected component~$C$ of~$\hletwo$, we apply
     \Cref{obs:popular-on-cycles} or \Cref{lem:trees} to exhaustively delete every edge not contained in a popular matching in~$C$.
     The resulting graph~$\hcp$ fulfills the requirements of the lemma.
     
     The running time follows from the observation that $H^{\deg (A) \le 2}$ can be computed in $O( n+ m)$ time.
%     , and all subsequent graphs have $O(n)$ edges.
    \end{proof}
    %%%%%%%%%%%%%
    
    While restricting the set of edges which can appear in a popular matching to disjoint paths and cycles as in \Cref{lem:b_two} is a severe restriction, we remark that this does not immediately imply tractability:
    For example, for the instance constructed in the reduction in \Cref{thm:np-hardb}, we know that a set of disjoint cycles (one for each variable and one for each clause gadget) together with three edges from the \unnamedGadget\ such that any popular matching is a subset of these edges, but still deciding the existence of a popular matching is \NP-hard.
    
    \subsubsection{Computing ``local'' witnesses}
    \label{sec:witness}
    
    %Having restricted the set of edges to a graph~$\hcp$ of maximum degree two, w
    We first restrict for each connected component of~$\hcp$ the set of possible nice witnesses, and afterwards dismiss possible witnesses one by one, until we either found a popular matching or have no witness remaining for a connected component and conclude that no popular matching exists.
    The idea here is to order the nice witnesses (and the matching whose popularity the witness proves) by how ``good'' they are on the vertices in~$B$.
    Given such an order, we can apply the following algorithm:
    At the beginning, we assign to each connected component~$C$ the witness that is optimal for~$B\cap V(C)$.
    This induces a matching~$M$ by taking for every connected component the matching whose popularity is verified by the corresponding witness.
    Whenever a conflicting edge, i.e.\ an edge~$\{a, b\}$ with $\weightedvote^M (a) + \weightedvote^M (b) > \wof{a}+ \wof{b}$, occurs, we know that the only way to eliminate this conflict is to change the witness (and possibly also the matching) on the connected component~$C_a$ containing~$a$, or, in one special case, the witness on the connected component containing~$b$.
    Therefore, we dismiss the current witness for~$C_a$ and replace it with the next witness in the order we set up on them.
    This procedure eventually terminates with a feasible witness (and thus also a popular matching), or dismisses all witnesses for a connected component of~$\hcp$.
    In the latter case, we know that no popular matching exists.
    
    We now describe how we order the witnesses for one connected component of~$\hcp$.
    Consider a connected component~$C$ of~$\hcp$, two popular matchings~$M_1$ and $M_2$, and two witnesses~$ \wone$ and~$\wtwo$ of them.
    If we have that (i) $\woneof{b} \ge \wtwoof{b} + 2$ or (ii) $\woneof{b} \ge \wtwoof{b}$ and $M_1 \succeq_b M_2$, then $\wone$ is clearly better than~$\wtwo$ for~$b$, i.e.\ any conflicting edge~$\{a, b\}$ for~$\wone$ will also be conflicting for~$\wtwo$.
    This leads to the following definition:
    \begin{definition}
     Let~$C$ be a connected component of~$\hcp$.
     A witness~$\wone$ of a popular matching \new{$M_1$} on~$C$ \emph{(strongly) dominates} a witness $\wtwo$ of a popular matching \new{$M_2$} on~$C$ \emph{at a vertex~$b \in B\cap V(C)$} if (i) $\woneof{b} \ge \wtwoof{b} + 2$ or (ii) $\woneof{b} \ge \wtwoof{b}$ and $M_1 \succeq_b M_2$.
     \neu{Witness~$\wone$ strongly dominates~$\wtwo$ if $\wone$ strongly dominates~$\wtwo$ for each vertex~$b \in B \cap C$.
     An \odd\ witness~$\wone$ \emph{weakly dominates} a witness~$\wtwo$ if (i) $\wone$ strongly dominates $\wtwo$ or (ii) $\wtwo$ is \even\ and $\wone$ strongly dominates~$\wtwo - \bm{1}$.}
    \end{definition}
%    In this case, we also say that~$\wone$ \emph{dominates}~$\wtwo$ at~$b$.
%    If $\wone$ dominates $\wtwo$ at every~$b\in B \cap C$, then we say that $\wone $ dominates~$\wtwo$.
    From now on, we will refer ``strong dominance" just by ``dominance".
    \neu{Note that only an odd witness can weakly dominate another witness; for even witnesses, weak dominance is not defined.
    We will show} 
    %in the appendix 
    (\Cref{lem:path,lem:dom-comp}) that domination at a vertex~$b\in B$ already implies domination on the whole connected component of~$\hcp$.
    \neu{T}his dominance relationship \neu{then induces an order of the} \odd\ witnesses.
    However, there may be pairs of \odd\ and \even\ witnesses that do not dominate each other, e.g.\ it may happen that $\wone$ is \even, $\wtwo$ is \odd, $\woneof{b} = \wtwoof{b} + 1$ and $M_2 \succ_b M_1$.
    When considering only conflicting edges~$\{a, b\}$ (note that these conflicting edges may be between different components of $\hcp$) for which the witness in the component containing~$a$ is \odd,
    it is irrelevant whether~$\wof{b} = \woneof{b}$ or $\wof{b} = \wtwoof{b}$ holds because $\weightedvote_a^M (b) + \weightedvote_b^M (a) \in \{-1-c, 1-c, c-1, 1+c\}$ for every matching~$M$.
    This idea is captured by weak dominance.
%    Therefore, if $\wone$ is \even\ and $\wtwo $ dominates~$\wone - \bm{1}$, then we say that $\wtwo$ \emph{weakly dominates} $\wone$ at~$b$.
    We now formalize the implications of the (weak) dominance relation.
    
    \begin{lemma}
    \label{lem:dominated-witnesses}
     Let $C$ and $D$ be two connected components of $\hcp$, and let $\witness^C$ and $\witness^D$ be nice witnesses for two popular matchings~$M^C$ and~$M^D$ on~$C$ and~$D$.
     Assume that there is a conflicting edge~$\{a, b\}$ for matching~$M^C \cup M^D$ with $a \in V(C) \cap A$ and $b \in V(D) \cap B$.
     
     If $\witness^C$ is \odd, then $\{a, b\}$ is also conflicting when exchanging~$\witness^D$ for a witness that~$\witness^D$ weakly dominates.
    \end{lemma}
    
    \begin{proof}
     Let~$\wDhat$ be a witness of a popular matching~$M$ on~$D$ that is dominated by~$\witness^D$.
     If $ \wof{b}^D  \ge \wDhatof{b} + 2$, then $\wof{a}^C + \wDhatof{b} \le \wof{a}^{\neu{C}} + \wof{b}^D -2 < \weightedvote_a^{M^C} (b) + \weightedvote_b^{M^D} (a) -2 \le \weightedvote_a^{M^C} (b) + \weightedvote_b^{M} (a) + 2-2 = \weightedvote_a^{M^C} (b) + \weightedvote_b^{M} (a)$ \new{(the strict inequality holds because $\{a,b\} $ is conflicting and the last inequality because~$\weightedvote_{b'}^{M'} (a')\in \{-1,0,1\}$ for every matching~$M'$, vertex~$a' \in A$, $b' \in B$)} and consequently $\{a, b\}$ is also conflicting when exchanging~$\witness^D$ for $\wDhat$.
     
     If $\wof{b}^D \ge \wDhatof{b}$ and $b $ does not prefer $M(b) $ to $M^D (b)$, then $\wof{a}^C + \wDhatof{b} \le \wof{a}^C + \wof{b}^D < \weightedvote_a^{M^C} (b) + \weightedvote_b^{M^D} (a) \le \weightedvote_a^{M^C} (b) + \weightedvote_b^{M} (a) $ and consequently $\{a, b\}$ is also conflicting when exchanging~$\witness^D$ for $\wDhat$.
     
     If $\witness^D$ weakly dominates $\wDhat$ but $\witness^D$ does not dominate~$\wDhat$, then $\wDhat $ is \even\ and $\witness^D + \bm{1}$ dominates~$\wDhat$.
     The above arguments yield $\wof{a}^C + \wDhatof{b} < \weightedvote_a^{M^C} (b) + \weightedvote_b^{M} (a) + 1$.
     Since $\witness^C$ is \odd\ and $\wDhat$ is \even,  we have $\wof{a}^C \in \{-c, 2-c, -1, 1\}$ and $\wDhatof{b} \in \{0, c-1\}$.
     It follows that $\wof{a}^C+ \wDhatof{b} \in \{-c, 2-c, -1, 1, c-2, c\}$, while $\weightedvote_a^{M^C} (b) + \weightedvote_b^{M}  (a) \in \{-1-c, 1-c, c-1, 1+c\}$.
     \new{Using $c > 3$, a case distinction on the possible values which $\wof{a}^C + \wDhatof{b}$ can attain shows} that even $\wof{a}^C + \wDhatof{b} < \weightedvote_a^{M^C} (b) + \weightedvote_b^{M} (a)$ holds, proving the lemma.
    \end{proof}
    
    Next, we show that we can restrict the set of possible nice witnesses for every connected component in such a way that the witnesses are in a weak dominance relation to each other.
    Note that we use here that for every connected component~$C$ of~$\hcp$, each edge in~$C$ is contained in at least one popular matching in~$C$.

        \begin{lemma}\label{lem:fourwitnesses}
        For each connected component~$C$ of $\hcp$, we can compute in linear time a set~$Y$ of up to four nice witnesses with the properties below (depending on whether $C$ contains a single edge, a path, or a cycle), such that for any feasible witness $\witness$ of a popular matching~$M$, there exists a witness~$\wprime \in Y$ that can replace the values~$\wof{v}$ for every~$v\in V(C)$.
        \begin{itemize}
            \item If $C$ contains only a single edge~$e$, then
        \begin{itemize}
            \item $Y = \{\weone, \wetwo, \weeven\}$, where $\weone$ and $\wetwo$ are \odd\ and $\weeven$ is \even, and
            \item $\weone$ dominates $\wetwo$ and $\wetwo$ weakly dominates $\weeven$.
        \end{itemize}
        
            \item If $C$ is a path containing at least three \new{vertices}, then 
            \begin{itemize}
                \item $Y = \{\wpone, \wptwo\}$ where $\wpone $ is \odd\ and $\wptwo$ is \even, and
                \item $\wpone$ weakly dominates~$\wptwo$.
            \end{itemize}
        
        \item If $C$ is a cycle, then
        \begin{itemize}
            \item $Y = \{\wcone, \wcthree\}$ or $Y = \{\wcone, \wcthree, \wcmtwo, \wceven\}$ where $\wcmtwo$, $\wcone$, and $\wcthree$ are \odd\ and $\wceven$ is \even, and
            \item $\wcone $ dominates $\wcthree$ or $\wcone$ dominates $\wcmtwo$, witness~$\wcmtwo $ dominates $\wcthree$, and $\wcthree$ weakly dominates~$\wceven$.
        \end{itemize}
        \end{itemize}
    \end{lemma}

    The proof of \Cref{lem:fourwitnesses} is split into three parts (\Cref{obs:edge,lem:path,lem:cycle-witness}), where \Cref{obs:edge} shows the statement for components consisting of a single edge, \Cref{lem:path} shows the statement for paths with at least three vertices, and \Cref{lem:cycle-witness} shows the statement for cycles.
    
    We first consider components consisting of a single edge.
    \begin{numberedclaim}\label{obs:edge}
     Let $P$ be a connected component of $\hcp$ consisting of a single edge~$ e= \{a, b\}$, and assume that~$G$ admits at least one popular matching.
     
     Then there exist three witnesses~$\weone$, $\wetwo$, and $\weeven$ such that $\weone $ dominates $\wetwo$, witness~$\wetwo$ weakly dominates $\weeven$, and for every popular matching~$M$ in~$G$ with nice witness~$\witness$, there exists some $\witness^e \in \{\weone,\wetwo, \weeven\}$ such that $\witness'$ defined by $\wof{a}' := \wof{a}^e$, $\wof{b}' := \wof{b}^e$, and $\wof{v}' := \wof{v}$ is a witness of~$M$.
    \end{numberedclaim}

    \begin{proof}
     Let $M$ be a popular matching.
     By \Cref{lem:nice}, there exists a nice witness~$\witness$.
     Furthermore, $M$ contains~$e$:
     \new{If $M$ would not contain~$e$, then $a$ and $b$ are unmatched (since $\hcp$ contains every popular edge and $e$ is a connected component of~$\hcp$), implying that $M \cup \{e\}$ is more popular than~$M$, a contradiction.
     By \Cref{obs:witness-bounds}, it follows that}~$\wof{a} = - \wof{b}$.
     
     We first consider the case that $b = f(a)$.
     We define
    \begin{itemize}
        \item $\weone[a]:= -c$ and $\weone[b] := c$,
        \item $\wetwo[a] := 2-c$ and $\wetwo[b]:= c-2$, and
        \item $\weeven[a] := 1-c$ and $\weeven[b] := c-1$.
    \end{itemize}
    Clearly, $\weone$ dominates $\wetwo$ and $\wetwo$ weakly dominates $\weeven$.
     It remains to show that, given a popular matching~$M$ with nice witness~$\witness$, there exists some~$\witness^e \in \{\weone, \wetwo, \weeven\}$ such that $\witness'$ defined by $\wof{a}' := \wof{a}^e$, $\wof{b}' := \wof{b}^e$, and $\wof{v}' := \wof{v}$ \new{for $v \in V(G) \setminus\{a, b\}$} is a witness of~$M$.
     If $\wof{a} \le 2-c$, then \new{\Cref{lem:nice} implies that} $\wof{a} = \wof{a}^e$ and $\wof{b}= \wof{b}^e$ for some $\witness^e \in \{\weone, \wetwo, \weeven\}$, and this $\witness^e$ fulfills the claim.
     Thus, we assume $\wof{a} \ge -1$\new{, which implies $\wof{b} = - \wof{a} < \wetwo[b]$}.
     We claim that we can choose~$\witness^e := \wetwo$ in this case.
     We assume for a contradiction that there exists a conflicting edge~$\{a', b'\}$.
     Note that $\witness'_v \ge \witness_v $ for all $v\neq a$, and thus we have $a = a'$.
     Since $b = f(a)$, we have $\weightedvote^M_a (\{a, b'\}) = -c$ and consequently $\weightedvote^M (\{a, b'\}) \le 1-c$.
     Because $\witness'_{b'} \ge \neu{-}\new{w(b\neu{'}) =}-1$, it follows that $\{a', b'\}$ is not conflicting, a contradiction.
     
     We now handle the case $b = s(a)$.
     We define
    \begin{itemize}
        \item $\weone[a]:= -1$ and $\weone[b] := 1$, 
        \item $\wetwo[a] := 1$ and $\wetwo[b]:= -1$, and 
        \item $\weeven[a] := 0$ and $\weeven[b] := 0$.
    \end{itemize}
    Again, $\weone$ clearly dominates $\wetwo$ and $\wetwo$ weakly dominates $\weeven$.
     It remains to show that, given a popular matching~$M$ with nice witness~$\witness$, there exists some~$\witness^e \in \{\weone, \wetwo, \weeven\}$ such that $\witness'$ defined by $\wof{a}' := \wof{a}^e$, $\wof{b}' := \wof{b}^e$, and $\wof{v}' := \wof{v}$ \new{for every $v\in V(G) \setminus \{a, b\}$} is a witness of~$M$.
     If $\wof{a} \ge -1$, then \new{\Cref{lem:nice} implies that} $\wof{a} = \wof{a}^e$ and $\wof{b}= \wof{b}^e$ for some $\witness^e \in \{\weone, \wetwo, \weeven\}$, and this $\witness^e$ fulfills the claim.
     Thus, we assume that $\wof{a} \le 2-c$.
     Then $\weightedvote^M (\{a, f(a)\}) \ge c -1 > 2  = 2 - c + c \ge \wof{a} + \wof{f(a)}$\neu{. T}\new{%t
     he first inequality holds as $a$ prefers~$f(a)$ to $M(a) = s(a)$ by definition of~$f(a)$, and the last inequality follows from $\wof{b'} \le c$ for every $b' \in B$ by \Cref{obs:witness-bounds}}, a contradiction to $\witness$ being a witness of~$M$.
     \end{proof}
    
    We now turn to paths with at least three vertices.
    We start with a helpful claim.
      
    \begin{numberedclaim}\label{lem:even-paths}
     Let $M_1 $ and $M_2$ be two popular matchings.
     Then every path $P$ in $M_1 \triangle M_2$ has one end vertex in~$A$ and the other one in~$B$.
    \end{numberedclaim}

    \begin{proof}
     Assume for a contradiction that $P$ has both end vertices in~$A$, and let $a_1$ and $a_2$ be these end vertices.
     Without loss of generality $a_1$ is matched in $M_1$ and $a_2 $ is matched in $M_2$.
     Let $\wone $ and $\wtwo$ be nice witnesses of $M_1 $ and $M_2$.
     Since $a_1$ is unmatched in $M_2$, it follows that $\wtwoof{a_1} = 0$ and thus, $\wtwo$ is \even\ on~$P$.
     Since $a_2$ is unmatched in $M_1$, it follows that $\woneof{a_2} = 0$ and thus, $\wone $ is \even\ on~$P$.
     Let $b:= M_2 (a_2)$, $a:= M_1 (b)$, and $b' := M_2 (a)$.
     We have $\weightedvote_{a_2}^{M_1} (b) = \new{c}$.
     \new{Further, because $\woneof{a_2} = 0$ and $\wof{b} \le c$ (as $\wof{b^*} \le c$ for every~$b\in B$ by \Cref{obs:witness-bounds}), we have} $\woneof{a_2} + \woneof{b} \le c$\new{.
     I}t follows \new{from} \Cref{lem:tight} that $\weightedvote^{M_1} (\{a_2, b\}) = \woneof{b} = \new{c - 1} = -\woneof{a}$.
     Thus, $a \succ_b a_2$.
     Again \Cref{lem:tight} \new{(this time applied to $\{a, b'\}$)} implies that $\weightedvote^{M_1} (\{a, b'\}) = 1-c$.
     Consequently, $b \succ_a b'$ holds.
     Thus, we have $\weightedvote^{M_2} (\{a, b\}) = c+1$, and consequently, $\wtwoof{a} =1 $ and $\wtwoof{b} = c$.
     However, this implies that $\wtwo$ is \odd, a contradiction to $\wtwo$ being \even.
     
     The case that both end vertices of $P$ are in~$B$ leads to a contradiction by symmetric arguments.
    \end{proof}

    We now describe the witnesses for path components.

    \begin{numberedclaim}\label{lem:path}
     Let $P $ be a connected component of $\hcp$ that is a path with at least three vertices and let $M_1$ and~$M_2$ be the two popular matchings on~$P$ such that $M_2$ leaves at least one vertex in~$P$ unmatched.
     
     Then there is a unique nice witness~$\wpone$ for $M_1$ and a unique nice witness~$\wptwo$ for~$M_2$, and $\wpone $ weakly dominates $\wptwo$.
    \end{numberedclaim}

    \begin{proof}
     Let $M_1^P$ and $M_2^P$ be the two popular matchings in~$P$.
     By \Cref{lem:even-paths}, one end vertex~$a_1$ of $P$ is in~$A$ and the other end vertex is contained in~$B$.
     Let $P = \langle a_1, b_1, a_2, b_2, a_3, b_3, \dots, a_k, b_k \rangle$.
     We assume without loss of generality that $a_1$ (and $b_k$) are unmatched in \new{$M_2^P$}.
     First, we prove by induction on $i$ that for every $i\le \lceil \frac{k}{2} \rceil$,
     \begin{itemize}
         \item $f(a_{2i}) = b_{2i-1} = f(a_{2i - 1})$ and $s(a_{2i}) = b_{2i} = s(a_{2i+1})$ (in particular, $a_{2i}$ and $b_{2i}$ exist),
         \item $ a_{2i} \succ_{b_{2i - 1}}a_{2i - 1}$ and if $2i < k$, then $a_{2i} \succ_{b_{2i}} a_{2i+1}$,
         \item every witness~$\wpone$ for~$M_1^P$ fulfills $\wpone_{a_{2i -1}} = -c$, $\wpone_{a_{2i}} = 1$, $\wpone_{b_{2i-1}}  = c$, and $\wpone_{b_{2i}}  = -1$, and
         \item every witness~$\wptwo$ for $M_2^P$ fulfills $\wptwo_{a_{2i -1}} = 0$, $\wptwo_{a_{2i}} = 1-c$, $\wptwo_{b_{2i-1}}  = c - 1$, and $\wptwo_{b_{2i}}  = 0$.
     \end{itemize}
     From this, one easily observes that $\wpone$ weakly dominates $\wptwo$ and the lemma follows.

    \new{
    We first show the first bullet point.
    By \Cref{lem:popular-with-0,th:abraham}, $a_1$ being unmatched in~$M_2^P$ implies that $b_1 = M_1^P (a_1) = f(a_1)$.
    Note that $b_1 = f(a_1) $ implies $b_1 \neq s(a_2)$ by the definition of~$s (a_2)$.
    Applying \Cref{lem:popular-with-0,th:abraham} to $a_2$ implies that $f(a_2) = b_1$ and $ s (a_2) = b_2$ (note that $a_2$ needs to exists as $b_1 = f(a_1)$ cannot be unmatched in a popular matching by \Cref{lem:popular-with-0,th:abraham}).
    Iterating these arguments now implies the first bullet point:
    Because $b_{2i} = s(a_{2i})$, \Cref{lem:popular-with-0,th:abraham} imply that $b_{2i} = s (a_{2i+1})$ which then implies $f(a_{2i+1}) = b_{2i+ 1}$ (and in particular the existence of $a_{2i+1} $ as $b_{2i+1}$ cannot be unmatched in a popular matching).
    This now implies that $f(a_{2i+2} ) = b_{2i+1}$ which implies $s(a_{2i+2})  = b_{2i+2}$.

    We continue by showing the remaining bullet points.
    }
     For $i =  \neu{1}$, due to the popularity of~$M^P_2$, it follows that $a_2 \succ_{b_1} a_1$ (otherwise $M_2': = (M^P_2 \setminus \{\{a_2, b_1\}\}) \cup \{ \{a_1, b_1\}\}$ is more popular).
     As~$a_1$ is unmatched in~$M_2$, we have $\wptwo_{a_1} = 0$ \new{and $\weightedvote^{M_2}_{a_1} (b_1) = c -1$}.
     By \Cref{lem:tight}, it follows that $\wptwo_{b_1} = \new{\weightedvote^{M_2} (\{a_1, b_1\} ) = }c-1$ and thus $\wptwo_{a_2} = 1-c$.
%     Therefore, $b_1$ prefers $a_2$ to~$a_1$.
     Applying again \Cref{lem:tight} to the edge $\{a_2, b_2\}$, we conclude that $\wptwo_{b_2} = 0$, and if $a_3$ exists, then $b_2$ prefers $a_2$ to $a_3$.
     Since $\weightedvote^{M_1} (\{a_2, b_1\}) = c+ 1$, it follows that $\wpone_{b_1} = c$ and $\wpone_{a_2} = 1$, implying $\wpone_{b_2} = -1$ and $\wpone_{a_1}  = -c$.

     The argument for the induction step is basically identical:
     %Since
     \neu{As} ${\wptwo_{b_{2i}} = 0}$, we have $\wptwo_{a_{2i+1}} = 0$.
     By \new{the first bullet point, $f(a_{2i+1}) = b_{2i+1}$, which implies $\weightedvote^{M_2}_{a_{2i+1}} (b_{2i+1}) = c$.
     Because $\wptwo_{b_{2i+1}} \le c$ by \Cref{obs:witness-bounds} and $\wptwo_{a_{2i+1}} + \wptwo_{b_2i+1} \ge \weightedvote^{M_2} (\{a_{2i+1}, b_{2i+1}\} )$, we have $\weightedvote^{M_2} (\{a_{2i+1}, b_{2i+1}\} ) = c-1$.
     Applying \Cref{lem:tight} to~$\{a_{2i+1}, b_{2i+1}\}$}, it follows that $\wptwo_{b_{2i+1}} = c-1$ and thus $\wptwo_{{2i+2}} = 1-c$.
     Therefore, $b_{2i + 1}$ prefers $a_{2i+2}$ to $a_{2i+1}$.
     Applying again \Cref{lem:tight} to the edge $\{a_{2i+2}, b_{2i+2}\}$, we conclude that $\wptwo_{b_{2i+2}} = 0$, and if $a_{2i+3}$ exists, then $b_{2i+2}$ prefers $a_{2i+2}$ to~$a_{2i+3}$.
     Since $\weightedvote^{M_1} (\{a_{2i+2}, b_{2i+1}\}) = c+ 1$, it follows that $\wpone_{b_{2i+1}} = c$ and ${\wpone_{a_{2i+2}} = 1}$, implying $\wpone_{b_{2i+2}} = -1$ and $\wpone_{a_{2i+1}}  = -c$.
    \end{proof}

    We now turn to the cycle components of $\hcp$.
%    \neu{Recall} that a witness~$\wone $ of~$M_1$ dominates a witness~$\wtwo$ of~$M_2$ at a vertex~$b \in B$ if $\woneof{b} \ge \wtwoof{b} + 2$ or $\woneof{b} \ge \wtwoof{b} $ and $b$ prefers~$M_1$ to $M_2$.
    First, we show that if $\wone$ dominates~$\wtwo$ at one vertex~$b \in B$, then this already implies that $\wone $ dominates $\wtwo$.
    
    \begin{numberedclaim}\label{lem:dom-comp}
      Let~$C$ be a cycle component of $\hcp$.
      Let $M_1$ and $M_2$ be two popular matchings in $C$, and $\wone $ and $\wtwo$ be \odd\ witnesses of $M_1$ and $M_2$ such that $\wone $ dominates $\wtwo$ at some $b^* \in V(B) \cap V(C)$.
      Then $\wone $ dominates $\wtwo $ at every $b \in V(B) \cap V(C)$.
    \end{numberedclaim}

    \begin{proof}
      Let $\wone$ be a witness of~$M_1$, and $\wtwo$ be a witness of~$M_2$.
     If $M_1 = M_2$, then the claim follows from \Cref{lem:tight}.
     Thus, we assume~$M_1 \neq M_2$.
     Let $ M_1 = \{\{a_i, b_i\} : i \in [k]\}$ for some $k \in \mathbb{N}$, and $M_2 = \{\{a_{i+1}, b_i\} : i \in [k -1]\} \cup \{\{a_1, b_k\}\}$.
     
     Let $b^* = b_j$ such that $\wone$ dominates $\wtwo $ at~$b^*$.
     We will show that $\wone $ dominates $\wtwo$ also at $b_{j+1}$, implying the claim.
     To simplify notation, we assume without loss of generality that $j = 1$.
     
     \textbf{Case 1: }$b_1 = f(a_1)$.
     
     \new{By \Cref{th:abraham,lem:popular-with-0}, $b_1, b_2, \dots, b_k$ alternates between $f$-posts and $s$-posts, implying that} $b_{2i -1} = f(a_{2i -1} ) = f(a_{2i})$ for every~$i$, and $b_{2i} = s (a_{2i-1} ) = s (a_{2i})$.
     Because $a_2$ prefers $b_1$ to $M_1 (a_2) = b_2$, we have \new{the following: $\weightedvote^{M_1} (\{a_2, b_1\}) \ge c-1$. This, together with $\wof{a_2} \le 1 $, by \Cref{obs:witness-bounds} implies} $\woneof{b_1} \ge c-2$.
     Similarly, because $a_1$ prefers $b_1$ to $M_2 (a_1) = b_k$, we have $\wtwoof{b_\neu{1}} \ge c-2$.
     
     \textbf{Case 1.1: }$\woneof{b_1} = c-2$.
     
     We will show that this case leads to a contradiction and therefore cannot occur.
     Since $\wone $ dominates $\wtwo$ at $b_1$, it follows that $\witness^2_{b_1} = c-2$, and $b_1$ prefers~$a_{1}$ to $a_{2}$.
     \new{This implies} $\weightedvote^{M_2} (\{a_1, b_1\}) = c + 1$.
     Since $\witness^2_{a_{1}} \le 1$ \new{by \Cref{obs:witness-bounds}}, this contradicts $\wtwo$ being a witness of~$M_2$.
     
     \textbf{Case 1.2: }$\woneof{b_1} = c$.
     
     \textbf{Case 1.2.1: }$\wtwoof{b_1} = c-2$.
     
     Then $b_1$ prefers $a_{2} $ to $a_{1}$, since otherwise $\weightedvote^{M_2} (\{a_1, b_1\}) = c+1 > \wtwoof{a_1} + \wtwoof{b_1}$ \new{(using $\wtwoof{a_1} \le 1$ by \Cref{obs:witness-bounds} for the inequality)}, contradicting~$\witness^2$ being a witness of $M_2$.
     \new{Thus, we have $\weightedvote^{M_1} (\{a_2, b_1\}) = c+1$.
     Since $\woneof{b_1} = c$, we have} $\witness^1_{a_{2}} = 1$, which implies $\witness^1_{b_{ 2}} = -1$ \new{by \Cref{obs:witness-bounds}}.
     Because $\witness^2_{a_{2}} = 2-c$ \new{(due to $\wtwoof{b_1} = c-2$ and \Cref{obs:witness-bounds}) and $\weightedvote^{M_2}  (\{a_2, b_2\} ) \in \{-1-c, 1-c\}$}, \Cref{lem:tight} implies $\witness^2_{b_{2} } = -1$ \new{and $\weightedvote^{M_2 } (\{a_2, b_2\} ) = 1-c$.
     In particular, $b_2$ prefers~$M^1(b_2) = a_2$ to~$M^2 (b_2) = a_3$, implying that $\wone$ dominates $\wtwo$ at~$b_2$.}

     \textbf{Case 1.2.2: }$\wtwoof{b_1} = c$.
     
     Then we have $\wtwoof{a_2} = -c$ \new{by \Cref{obs:witness-bounds}}.
     Since $\wone $ dominates $\wtwo$ at $b_1$, we have $a_1 \succ_{b_{1}} a_2$.
     \new{Thus, we have $\weightedvote^{M_1} (\{b_1, a_2\}) = c-1$.
     \Cref{lem:tight} implies} that $\woneof{a_2} = -1$ and \new{consequently} $\woneof{b_2} = 1$ \new{by \Cref{obs:witness-bounds}}.
     If $\wtwoof{b_2} = -1$, then $\wone $ dominates $\wtwo$ at $b_2$.
     Otherwise we have $\wtwoof{b_2} \new{\ge} 1$.
     \new{Since $\weightedvote^{M_2} (\{a_2, b_2\}) \in \{-c-1, 1- c\}$, \Cref{lem:tight} implies $\weightedvote^{M_2} (\{a_2, b_2\}) = 1-c $ and $\wtwoof{b_2} = 1$.
     This implies} $a_2 \succ_{b_{2}} a_3$, and therefore, $\wone$ dominates $\wtwo $ at $b_2$.
    
     \textbf{Case 2: }$b_1 = s (a_1)$.
     
     \new{By \Cref{th:abraham,lem:popular-with-0}, $b_1, b_2, \dots, b_k$ alternates between $f$-posts and $s$-posts, implying that} $b_{2i -1} = s(a_{2i -1} ) = s(a_{2i})$ for every $i$, and $b_{2i} = f (a_{2i-1} ) = f (a_{2i})$.
     \new{Thus, we have $\weightedvote^{M_1} (\{a_2, b_2\}) \in \{-c-1, -c+1\}$ and $\weightedvote^{M_2} (\{a_1, b_1\}) \in \{-c-1, -c + 1\}$.}
     \Cref{lem:tight} \new{now} implies that $\woneof{b_1} \le 1$ and $\wtwoof{b_2} \le 1$.
     
     \textbf{Case 2.1: }$\woneof{b_1} = -1$.
     
     Since $\wone $ dominates $\wtwo$ at $b_1$, we have $\wtwoof{b_1} = -1$, and $a_1 \succ_{b_{1}} a_2$.
     \new{\Cref{lem:tight} applied to $\{a_2, b_1\}$ implies} $\woneof{a_2} = -c$ \new{which then implies} $\woneof{b_2} = c$.
     \new{\Cref{obs:witness-bounds} implies} $\wtwoof{a_2} = 1$.
     If $\wtwoof{b_2} = c -2$, then $\wone$ dominates $\wtwo$ at $b_2$.
     Otherwise, we have $\wtwoof{b_2} = c$.
     \new{Applying} \Cref{lem:tight} \new{to $\{a_2, b_2\}$ now implies} that $a_2 \succ_{b_{2}} a_3$, and thus $\wone$ dominates $\wtwo$ at~$b_2$.
     
     \textbf{Case 2.2: }$\woneof{b_1} = 1$.
     
     If $a_1 \succ_{b_1} a_2$, then $\weightedvote^{M_1} (\{a_2, b_1\}) = -1 -c < -c + 1 \le \woneof{a_2} + \woneof{b_1}$, a contradiction to \Cref{lem:tight}.
     Thus, we have $a_2 \succ_{b_1} a_1$.
     \new{Applying \Cref{lem:tight} to $\{a_2, b_1\}$ implies} $\woneof{a_2} = -c$ and thus $\woneof{b_2} = c$ \new{by \Cref{obs:witness-bounds}}.
     As $\wone$ dominates $\wtwo$ at $b_1$, it follows that $\wtwoof{b_1} = -1$ and thus $\wtwoof{a_2} = 1$ \new{by \Cref{obs:witness-bounds}}.
     If $\wtwoof{b_2} \new{\le} c-2$, then $\wone$ dominates $\wtwo $ at $b_2$.
     Otherwise $\wtwoof{b_2} = c$, and \new{applying} \Cref{lem:tight} \new{to $\{a_2, b_2\}$ implies} $a_2 \succ_{b_2} a_3$, and thus $\wone$ dominates $\wtwo$ at~$b_2$.
     \end{proof}
    
    We can now show that the witnesses in a cycle component (weakly) dominate each other.
    
    \begin{numberedclaim}\label{lem:cycle-witness}
     Let $C$ be a connected component of $\hcp$ that is a cycle.
     Then $C$ has two popular matchings $M_0^C$ and $M_1^C$ such that $M_0^C$ has exactly one (\odd) witness~$\wcmtwoof{}$, while $M_1^C$ has either one witness~$\wcone$ or three nice witnesses~$\wcone, \wcthree$, and~$ \wctwo$.
     
     If $M_1^C$ has one witness, then $\wcone $ dominates $\wcmtwo$.
     
     If $M_1^C$ has three nice witnesses, then $\wcone$ dominates $\wcmtwo$, witness~$\wcmtwo$ dominates $\wcthree$ and $\wcthree$ weakly dominates $\wctwo$.
    \end{numberedclaim}

    \begin{proof}
     Since every vertex $a\in A \cap V(C)$ can have at most two edges to $f(a)$ and $s(a)$, it follows that $f(a) \in V(C)$ for all $a\in V(C) \cap A$.
     Let $b\in B \cap \hcp$ be the $f$-post of at least one vertex in $ A$.
     Then $b$ prefers one of its neighbors in~$C$ to the other, say, $a_1 \succ _b a_2$.
     Assuming that $M_0^C$ contains $\{a_2, b\}$, it follows that $\weightedvote^{M_0^C} (\{a_1, b\}) =  c + 1$.
     Thus, we have $\wof{a_1} = 1$ and $\wof{b} = c$ for every witness of $M_0^C$\new{, implying that every witness of~$M_0^C$ is \odd}.
     By \Cref{lem:tight}, it follows that the witness of~$M_0^C$ is unique.
     
     Considering $M_1^C$, note that we have $\weightedvote^{M_1^C} (\{ a_2, b\}) = c- 1$. 
     \neu{Let $\witness$ be a nice witness of~$M_1^C$.}
     \new{Because $\wof{a} \le 1 $ by \Cref{obs:witness-bounds}}, we have $\wof{b} \in \{c-2, c-1, c\}$.
     If $\wof{b} = c-1$, then also \new{$\witness^+$ defined as $\witness^+_v := \begin{cases}\witness_v+1 & v \in A\\\witness_v-1 & v \in B\end{cases}$ and $\witness^-$ defined as $\witness^-_v := \begin{cases}\witness_v - 1 & v \in A\\\witness_v + 1 & v \in B\end{cases}$} are feasible witnesses.
     If there exists a feasible witness~\neu{$\wcone$ with $\wconeof{b} = c$ and a feasible witness $\wthree $} with $\wthreeof{b} = c-2$, then also \new{$\wctwo$ defined as $\wctwoof{v} := \begin{cases}\wconeof{v} +1 & v\in A\\\wconeof{v} -1 & v \in B\end{cases}$} is a feasible witness.
     Consequently, $M_1^C$ admits either one witness \new{(with $\wof{b} \in \{c, c-2\}$)} or three witnesses.
%     The claim now follows by \Cref{lem:tight}.

     \new{It remains to show that the witnesses (weakly) dominate each other.
     First a}ssume that~$M_1^C$ admits exactly one witness~$ \wcone$.
     Then we have $\wconeof{b} \in \{c, c-2\}$.
     \new{If $\wconeof{b} = c-2$, then $\wcmtwo$ dominates $\wcoone$ at~$b$.
     Otherwise $\wconeof{b} = \wcmtwoof{b}$, and because~$b$ prefers $M_1^C$ to~$M_0^C$, we have that $\wcone$ dominates $\wcmtwo$ at~$b$.}
%     In both cases}, either $\wcone$ dominates $\wcmtwo$ at~$b$ or $\wcmtwo$ dominates $\wcone$ at~$b$.
     \Cref{lem:dom-comp} implies that either $\wcone$ dominates $\wcmtwo$ or $\wcmtwo$ dominates~$\wcone$.
     The claim follows (possibly by switching the names of $M_0^C$ and $M_1^C$).
     
     Assume that~$M_1^C$ admits three witnesses~$\wcone$, $\wceven$, and $\wcthree$ with $\wconeof{b} = c$, $\wcevenof{b} = c-1$, and $\wcthreeof{b} = c-2$.
     Because $b$ prefers $M_1^C$ to $M_0^C$, it follows that $\wcone$ dominates $\wcmtwo$ at $b$.
     Since $\wcmtwoof{b} = c = \wcthreeof{b} + 2$, it follows that $\wcmtwo$ dominates $\wcthree$ at~$b$.
     \Cref{lem:dom-comp} implies that $\wcone$ dominates $\wcmtwo$ and $\wcmtwo$ dominates $\wcthree$.
     Since \new{$\wcthreeof{b} + \bm{1} = \wcevenof{b}$ for every~$b \in B$}, it follows that $\wcthree$ weakly dominates $\wceven$.
%     
%     This finishes the proof of \Cref{lem:fourwitnesses}.
     \end{proof}
    
    We remark that \Cref{lem:fourwitnesses} requires that each agent from~$A$ has the same weight and each agent from~$B$ has the same weight (see \ref{apx:no-dominance} for an example where agents from~$B$ have different weights and the witnesses are not in a dominance relation).
%    Considering instances with $w (a) = c' \le 3$ for every~$a\in A$ for some constant $c' $ and $w(b) = 1$ for every~$b \in B$, the reduction from \Cref{thm:np-hardb} shows that there is no dominance relation between witnesses when $w (a) = c \le 2$ for every~$a \in A$, although the instance constructed in the reduction admits a subgraph which contains every popular edge and is a disjoint union of paths and cycles.
    
    \subsubsection{Constructing a ``global'' witness}
    \Cref{lem:fourwitnesses,lem:b_two} lead to the following algorithm \new{(see \Cref{alg:popular} for a pseudocode description)}. We start by \new{ordering the witnesses according to the weak dominance relation and} assigning to each edge component the \odd\ witness $\weone$, to each path component~$\wpone$, and to each cycle component~$\wcone$ \new{(Lines~\ref{line:wpm-init-start} to~\ref{line:wpm-init-end})}.
    Whenever we encounter a conflicting edge~$\{a, b\}$, we distinguish two cases:
    If the witness in the component containing $a$ is \even, then this will be the only witness remaining for the component containing~$a$ and we dismiss the witness in the component~$C$ containing~$b$ \new{(see Lines~\ref{line:odd-start} to~\ref{line:odd-end})}.
    Otherwise, we dismiss the witness in the component~$C$ containing~$a$ (\new{Lines~\ref{line:even-start} to~\ref{line:even-end})}.
    In both cases, we assign a new witness (together with a matching) to $C$ as follows:
    If there is an undismissed \odd\ witness of $C$, then we assign this witness to $C$ (the only case when this witness is not unique is if $C$ is a cycle component and neither $\wcthree$ nor~$\wcmtwoof{}$ have been dismissed; in this case we assign $\wcmtwoof{}$ to $C$).
    If no undismissed \odd\ witness of $C$ exists, then we assign the \even\ witness to~$C$.
    If we eventually dismissed all witnesses for one component, then we conclude that there is no popular matching \new{(Lines~\ref{line:wpm-no-1} and~\ref{line:wpm-no-2})}.
    Otherwise, we eventually found a matching together with a witness of it, and we return this matching.
    \new{
	\begin{algorithm}
		\caption{%Algorithm 
  for \wpm}\label{alg:popular}
		\begin{algorithmic}[1]
			\Input{A \wpm\ instance $\mathcal{I}$}
			\Output{A popular matching in $\mathcal{I}$ or \no\ if no popular matching exists}
			\State Compute $\hcp$.\label{line:hcp}
			
			\ForEach{connected component~$C$ of $\hcp$}\label{line:wpm-init-start}
			
			\State Compute the set of popular matchings together with their nice witnesses.
			
			\State Create a list~$L (C) $ of the witnesses by\label{line:order}
			\begin{itemize}
			 \item $L(C) = (\vect{0})$ if $C $ contains only one vertex,
			 \item $L(C) = (\weone, \wetwo, \weeven)$ if $C$ contains exactly one edge~$e$,
			 \item $L(C)= (\wpone, \wptwo)$ if $C$ is a path with at least three edges, or
			 \item $L(C)= (\wcone, \wcmtwo)$ if $C$ is a cycle and each popular matching in~$C$ has only one nice witness, or
			 \item $L(C)= (\wcone, \wcmtwo, \wcthree, \wctwo)$ otherwise.
			\end{itemize}

			\State Set $\witnessc$ to be the first witness of~$L (C)$ to $C$.
			\EndFor
			
			\State For each vertex~$v \in V$, set $\wof{v} := \witnesscof{v}$ where $C$ is the connected component of~$\hcp$ containing~$v$.
			\label{line:wpm-init-end}
			
			\While{there is a conflicting edge~$\{a, b\}$ for $\witness$}
			\label{line:wpm-while-start}
              \State Let $C_a$ be the connected component containing~$a$, and $C_b$ be the connected component containing~$b$.
              \label{line:find-con-comp}
              \If{the witness assigned to~$C_a$ is \odd}\label{line:odd-start}
              \State Dismiss the witness assigned to $C_a$ from~$L(C_a)$.\label{line:delete-odd}
              \If{$L(C_a) $ is empty}
                \State \textbf{return} \no\label{line:wpm-no-1}
              \EndIf
              \State Set $\wof{v} := \witness^{C_a}_{v}$ for every~$v \in V(C_a)$ where $\witness^{C_a}$ is the first witness of~$L (C_a)$.
              \label{line:odd-end}
              \Else
                \State Dismiss the witness assigned to~$C_b$ from~$L (C_b)$.
                \label{line:even-start}
              \If{$L(C_b) $ is empty}
                \State \textbf{return} \no\label{line:wpm-no-2}
              \EndIf
              \State Set $\wof{v} := \witness^{C_b}_{v}$ for every~$v \in V({C_b})$ where $\witness^{C_b}$ is the first witness of~$L (C_b)$.
              \label{line:even-end}
              \EndIf
			
			\EndWhile
			\label{line:wpm-while-end}
			\State \textbf{return} the matching~$M$ constructed by taking for each connected component~$C$ the popular matchings belonging to the witnesses assigned to~$C$.
			\label{line:wpm-return}
		\end{algorithmic}
	\end{algorithm}
}

    \begin{theorem}%[\appmark]
    \label{thm:c>3}
     If all vertices in $A$ have weight~$c$ for some $c > 3$ and every vertex in $B$ has weight~1, then a maximum-cardinality popular matching (if one exists) can be computed in $O (n + m)$ time.
    \end{theorem}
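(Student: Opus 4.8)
The algorithm itself is already described just above the theorem; what the proof must supply is a running-time bound, \emph{soundness} (any matching the algorithm outputs is popular), and \emph{completeness} (if it reports failure, no popular matching exists). The plan is to push everything through the structural lemmas and then run a ``greedy with a dominance invariant'' argument. First I would invoke \Cref{lem:b_two} to replace $G$ by $\hcp$: in $O(n^2)$ time we obtain a disjoint union of paths and cycles that contains every popular edge and in which every edge lies in a popular matching of its own component. By \Cref{lem:fourwitnesses} (i.e.\ \Cref{obs:edge,lem:path,lem:cycle-witness}) I compute, in linear time per component, the at most four candidate nice witnesses of each component, each certifying popularity of one of the (at most two) matchings of that component, together with the linear (weak-)dominance order on them; in particular each component has a unique $\le$-maximal candidate, and the algorithm always holds the $\le$-maximal not-yet-dismissed one. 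Two facts I would record for later: inside a single component a nice witness uniquely determines the matching it certifies (this follows from \Cref{lem:tight}, as in the proofs of the three sub-lemmas, together with \Cref{lem:even-odd}); and, by \Cref{lem:nice} with \Cref{lem:popular-with-0} and \Cref{th:abraham}, every popular matching $M^\ast$ of $G$ uses only edges of $\hcp$, hence restricts on each component to one of its matchings and admits a nice witness whose restriction to each component is one of the computed candidates. Moreover, the witness the algorithm holds on each component sums to $0$ over that component's vertices (these are genuine witnesses of popular matchings of the components, \Cref{lem:fourwitnesses}), and isolated $B$-vertices, which are unmatched in every popular matching, are assigned value $0$.

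For soundness, suppose the algorithm halts with a matching $M$ and the concatenated vector $\witness$. The halting condition is that no edge of $G$ is conflicting, so $\wof{a}+\wof{b}\ge \vote^M(\{a,b\})$ for every $\{a,b\}\in E(G)$; the sign constraints hold since each component witness is nice (hence $\ge-w(v)$) and is $0$ wherever it leaves a vertex unmatched; and $\sum_v \wof{v}=0$ by the per-component observation above. Hence $\witness$ is a witness of $M$ and $M$ is popular by \Cref{lem:witness}. For edges of $G$ absent from $\hcp$ one checks that the only way such an edge $\{a,b\}$ (with $b$ an isolated $B$-vertex, say) can be flagged is when $\vote^M_a(b)=-c$ and $\wof{a}=-c$, which triggers a dismissal on the component of $a$; so halting with $M$ really does mean $\witness$ is conflict-free everywhere. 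For the running time, building $\hcp$ costs $O(n^2)$ by \Cref{lem:b_two}, computing all candidate witnesses costs $O(n)$ by \Cref{lem:fourwitnesses}, and the main loop performs $O(n)$ witness dismissals (each of the $O(n)$ components can be advanced only $O(1)$ times) and after each one rescans the $O(m)$ edges of $G$ for a conflict; the total is $O(nm)$.

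The heart of the argument is completeness, which I would establish through the invariant: \emph{at every moment of the execution, for every popular matching $M^\ast$ of $G$ and every nice witness $\witness^\ast$ of it, the restriction $\witness^\ast$ to each component $C$ is never among the dismissed candidates of $C$}. Granting the invariant, if the algorithm ever dismisses all candidates of some component then no popular matching can exist, and otherwise the loop terminates with a popular matching by soundness; termination follows from the $O(n)$ dismissal bound. To maintain the invariant I would inspect the step that dismisses a candidate because of a conflicting edge $\{a,b\}$ with $a\in A\cap V(C_a)$ and $b\in B\cap V(C_b)$, where necessarily $C_a\neq C_b$ (edges inside a component are non-conflicting for a component witness). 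If the current witness on $C_a$ is \odd\ (so it is the one dismissed): assuming some nice witness $\witness^\ast$ of a popular $M^\ast$ agreed with it on $C_a$, the invariant gives that $\witness^\ast$ restricted to $C_b$ is not yet dismissed, hence is weakly dominated by the witness the algorithm currently holds on $C_b$; \Cref{lem:dominated-witnesses} then says $\{a,b\}$ remains conflicting after swapping in $\witness^\ast$ on $C_b$, and since a nice witness determines the matching inside a component, the resulting configuration is exactly $M^\ast$ with $\witness^\ast$, contradicting that $\witness^\ast$ is a witness. If the current witness on $C_a$ is \even\ (so, by the algorithm, every \odd\ candidate of $C_a$ has already been dismissed) and we dismiss the witness on $C_b$: the invariant forces $\witness^\ast$ on $C_a$ to be the unique undismissed (hence \even) candidate of $C_a$, so again $M^\ast$ together with $\witness^\ast$ realises the conflicting edge $\{a,b\}$, a contradiction. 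This preserves the invariant and finishes correctness. I expect this invariant-maintenance case analysis to be the main obstacle: getting right that dismissing the ``$a$-side'' (\odd) witness, versus the ``$b$-side'' witness in the \even\ case, is always safe is precisely where \Cref{lem:dominated-witnesses} and the ``a nice witness pins down the component matching'' fact are needed, and the bookkeeping for edges and isolated vertices living outside $\hcp$ is a secondary but unavoidable nuisance.
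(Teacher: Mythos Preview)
Your proposal is correct and follows essentially the same approach as the paper's own proof: the same invariant (no dismissed candidate is ever the restriction of a global nice witness), the same two-case maintenance argument via \Cref{lem:dominated-witnesses} for the \odd\ case and the ``only-witness-left'' observation for the \even\ case, and the same $O(n)$-dismissals-times-$O(m)$-scan running-time bound. Your writeup is somewhat more explicit about soundness and about the bookkeeping for edges leaving $\hcp$ (which the paper leaves implicit), but the argument is otherwise identical.
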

%\khcom{}%    We sketch the idea of the proof here; a more formal proof can be found in the appendix. 
    \begin{proof}
        \new{We claim that \Cref{alg:popular} computes a maximum-cardinality popular matching in linear time.
        We start by analyzing the running time of \Cref{alg:popular}.
        }
        The computation of~$\hcp$ takes $O(n +m)$ time due to \Cref{lem:b_two}.
        There are $O(n)$ connected components of $\hcp$;
        thus at most $O(n)$ partial witnesses get dismissed.
%        Searching for a conflicting edge can be done in $O(m)$ time.
        As we only need to check whether an edge is conflicting if the witness on one of the two connected component\new{s} incident to it changed, we need to check for each edge only a constant number of times whether it is conflicting.
        From this, the running time follows.
%        The algorithm clearly runs in polynomial time.

        To show the correctness of the algorithm, first observe that if a component is assigned its \even\ witness, then it is the only remaining witness that has not been dismissed.
        We will now show that whenever a witness~$\witness^{\operatorname{sub}}$ is dismissed \new{(which can happen in Line~\ref{line:delete-odd} or~\ref{line:even-start})}, then there is no popular matching together with a witness~$\witness$ such that $\wof{v}^{\operatorname{sub}} = \wof{v}$ for every $v\in V(C)$.
        At the beginning, the statement is clear since no witness has been dismissed.
        Consider the first witness~$\witness^{\operatorname{sub}}$ of a connected component~$C$ deleted by the algorithm that is a subwitness of a witness~$\witness$ of some matching, and let~$\{a, b\}$ be the conflicting edge because of which~$\witness^{\operatorname{sub}}$ was dismissed.
        \new{If $\witness^{\operatorname{sub}}$ was deleted in Line~\ref{line:delete-odd}, then}
        $a \in V(C)$, $\witness^{\operatorname{sub}}$ is \odd, and the current witness in the component~$C^b$ containing~$b$ weakly dominates all other non-dismissed witnesses on this component by \cref{lem:fourwitnesses}.
        Note that $\witness$ restricted to~$C^b$ is not dismissed by the choice of~$\witness^{\operatorname{sub}}$.
        Consequently, \Cref{lem:dominated-witnesses} implies that $\{a, b\}$ is conflicting for~$\witness$, a contradiction.

        \new{If $\witness^{\operatorname{sub}}$ was deleted in Line~\ref{line:even-start}, then }
        $b \in V(C)$, the witness in the component containing~$a$ is \even, and consequently the only non-dismissed witness~$\witness^{\operatorname{sub}}$ on this component.
        Therefore, by the definition of~$\witness^{\operatorname{sub}}$, every witness~$\witness$ of some popular matching coincides with~$\witness^{\operatorname{sub}}_a$ on~$a$, and therefore cannot coincide with~$\witness^{\operatorname{sub}}$ on $b$ (as otherwise the edge~$\{a, b\}$ would be conflicting for~$\witness$).
        %\hfill\qedhere\qedsymbol
        
        Finally, we show that there is no larger popular matching than the computed one.
        Let $C$ be a connected component which is a path and let~$M_1$ and $M_2$ be the two popular matchings on~$C$.
        Assume without loss of generality that $|M_1 | \le |M_2|$.
        Then at least one vertex~$v$ of~$C$ is unmatched by~$M_2$.
        This implies that every witness~$\wtwo$ of~$M_2$ has $\wtwoof{v} = 0$, implying that $\wtwo$ is \even.
%        For every connected component that is a path, the witness corresponding to the smaller matching is \even.
    Because we initially assign the \odd\ witness (and thus corresponding to the larger matching~$M_1$) to this connected component,
    the computed matching will contain the smaller matching only if no popular matching contains~$M_1$.
    Thus, the computed matching is of maximum cardinality among all popular matchings.
    \end{proof}
    
    \section{Future directions} 
    \label{sec:open}
    We discovered an unusual pattern in the complexity of \textsc{Popular Matching with Weighted Voters} in instances where the weights on one side are fixed at~1. A solution is guaranteed to exist and easy to find if the weight~$c$ of the other side equals~1. Then, for  $1 < c \le 2$, the problem becomes $\NP$-complete as \Cref{thm:np-hardb} shows. Polynomial-time solvability then returns for $3 < c$, but no-instances occur.
    A straightforward open question is whether \textsc{Popular Matching with Weighted Voters} can be solved in polynomial time when all vertices in~$A$ have weight $2 < c \le 3$ and all vertices in $B$ have weight~1.
    
    Furthermore, it would be interesting whether allowing each vertex from one side an individual weight larger than three (or some other constant~$c$) while all vertices from the other side still have weight one is also solvable in polynomial time.

    \new{\section*{Acknowledgements} The authors thank the anonymous reviewers of earlier versions of the paper, whose suggestions helped to improve the presentation.}

\newpage
\appendix

    \section{Example execution of the algorithm}
\label{sec:example}

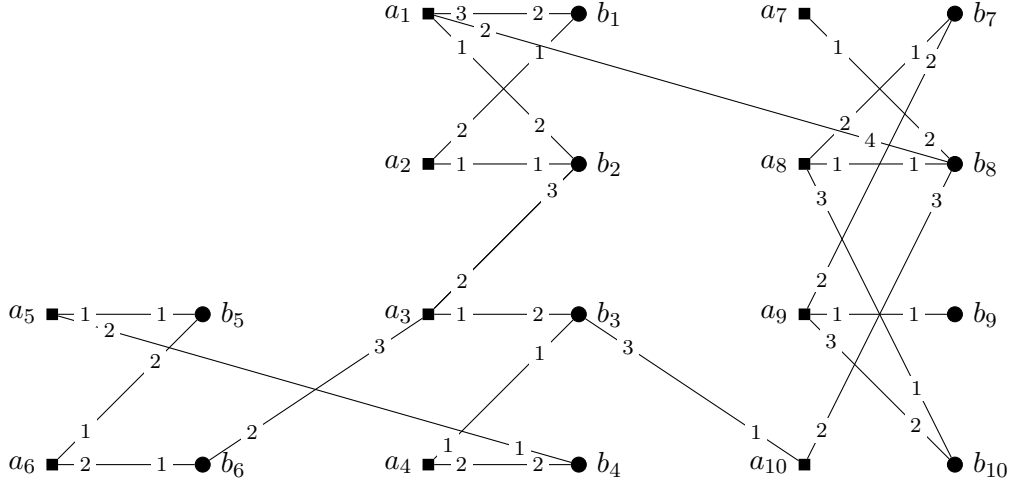
\begin{figure}
    \centering
\begin{tikzpicture}[yscale = 0.9]
          \node[terminal, label=180:$\eaf$] (a1) at (-1,1) {};
          \node[terminal, label=180:$\eafh$] (a1p) at (-1,3) {};
          \node[vertex, label={[xshift=-.0cm]0:$\ebfour$}] (b1) at (1.,1) {};
          \node[vertex, label=0:$\ebfh$] (b1p) at (1.,3) {};

        \node[terminal, label=180:$\eato$] (a2) at ($(a1) + (0, -4)$) {};
        \node[terminal, label=180:$\eatoh$] (a2p) at ($(a2) + (0, 2)$) {};
        \node[vertex, label=0:$\ebto$] (b2) at ($(a2) + (2, 0)$) {};
        \node[vertex, label={[xshift=-.0cm]0:$\ebtoh$}] (b2p) at ($(b2) + (0, 2)$) {};

        \node[terminal, label=180:$\eatt$] (a3) at ($(a2) + (-5, 0)$) {};
        \node[terminal, label=180:$\eatth$] (a3p) at ($(a3) + (0, 2)$) {};
        \node[vertex, label=0:$\ebtt$] (b3) at ($(a3) + (2, 0)$) {};
        \node[vertex, label=0:$\ebtth$] (b3p) at ($(b3) + (0, 2)$) {};

          \node[terminal, label=180:$\eap$] (ap) at (4,1) {};
          \node[terminal, label=180:$\eaph$] (app) at (4,3) {};
          \node[vertex, label={[xshift=-.0cm]0:$\ebp$}] (bp) at (6.,1) {};
          \node[vertex, label=0:$\ebph$] (bpp) at (6.,3) {};

          \node[terminal, label=180:$\eaedge$] (ae) at (4, -1) {};
          \node[vertex, label=0:$\ebe$] (be) at (6, -1) {};

          \node[terminal, label=180:$\eas$] (as) at (4, -3) {};
          \node[vertex, label=0:$\ebs$] (bs) at (6, -3) {};

        \draw (a1) edge[] node[pos=0.2, fill=white, inner sep=2pt] {\scriptsize $1$}  node[pos=0.76, fill=white, inner sep=2pt] {\scriptsize $1$} (b1);
        \draw (a1) edge[] node[pos=0.2, fill=white, inner sep=2pt] {\scriptsize $2$}  node[pos=0.76, fill=white, inner sep=2pt] {\scriptsize $1$} (b1p);

        \draw (a1p) edge[] node[pos=0.2, fill=white, inner sep=2pt] {\scriptsize $1$}  node[pos=0.76, fill=white, inner sep=2pt] {\scriptsize $2$} (b1);
        \draw (a1p) edge[] node[pos=0.2, fill=white, inner sep=2pt] {\scriptsize $3$}  node[pos=0.76, fill=white, inner sep=2pt] {\scriptsize $2$} (b1p);

        \draw (a2) edge[] node[pos=0.2, fill=white, inner sep=2pt] {\scriptsize $2$}  node[pos=0.76, fill=white, inner sep=2pt] {\scriptsize $2$} (b2);
        \draw (a2) edge[] node[pos=0.1, fill=white, inner sep=2pt] {\scriptsize $1$}  node[pos=0.76, fill=white, inner sep=2pt] {\scriptsize $1$} (b2p);

        \draw (a2p) edge[] node[pos=0.2, fill=white, inner sep=2pt] {\scriptsize $3$}  node[pos=0.8, fill=white, inner sep=2pt] {\scriptsize $2$} (b3);
        \draw (a2p) edge[] node[pos=0.2, fill=white, inner sep=2pt] {\scriptsize $1$}  node[pos=0.76, fill=white, inner sep=2pt] {\scriptsize $2$} (b2p);

        \draw (a3) edge[] node[pos=0.2, fill=white, inner sep=2pt] {\scriptsize $2$}  node[pos=0.76, fill=white, inner sep=2pt] {\scriptsize $1$} (b3);
        \draw (a3) edge[] node[pos=0.2, fill=white, inner sep=2pt] {\scriptsize $1$}  node[pos=0.7, fill=white, inner sep=2pt] {\scriptsize $2$} (b3p);

        \draw (a3p) edge[] node[pos=0.1, fill=white, inner sep=2pt] {\scriptsize $2$}  node[pos=0.9, fill=white, inner sep=2pt] {\scriptsize $1$} (b2);
        \draw (a3p) edge[] node[pos=0.2, fill=white, inner sep=2pt] {\scriptsize $1$}  node[pos=0.76, fill=white, inner sep=2pt] {\scriptsize $1$} (b3p);

        \draw (a2p) edge[] node[pos=0.2, fill=white, inner sep=2pt] {\scriptsize $2$}  node[pos=0.85, fill=white, inner sep=2pt] {\scriptsize $3$} (b1);
        
        \draw (a2p) edge[] node[pos=0.2, fill=white, inner sep=2pt] {\scriptsize $2$}  node[pos=0.85, fill=white, inner sep=2pt] {\scriptsize $3$} (b1);
        
        \draw (ap) edge[] node[pos=0.2, fill=white, inner sep=2pt] {\scriptsize $1$}  node[pos=0.76, fill=white, inner sep=2pt] {\scriptsize $1$} (bp);
        \draw (ap) edge[] node[pos=0.25, fill=white, inner sep=2pt] {\scriptsize $2$}  node[pos=0.76, fill=white, inner sep=2pt] {\scriptsize $1$} (bpp);

        \draw (app) edge[] node[pos=0.2, fill=white, inner sep=2pt] {\scriptsize $1$}  node[pos=0.86, fill=white, inner sep=2pt] {\scriptsize $2$} (bp);
        
        \draw (ae) edge[] node[pos=0.2, fill=white, inner sep=2pt] {\scriptsize $1$}  node[pos=0.76, fill=white, inner sep=2pt] {\scriptsize $1$} (be);
        
        \draw (ae) edge[] node[pos=0.15, fill=white, inner sep=2pt] {\scriptsize $3$}  node[pos=0.76, fill=white, inner sep=2pt] {\scriptsize $2$} (bs);
        
        \draw (as) edge[] node[pos=0.1, fill=white, inner sep=2pt] {\scriptsize $2$}  node[pos=0.9, fill=white, inner sep=2pt] {\scriptsize $3$} (bp);
        \draw (ap) edge[] node[pos=0.1, fill=white, inner sep=2pt] {\scriptsize $3$}  node[pos=0.76, fill=white, inner sep=2pt] {\scriptsize $1$} (bs);

        \draw (a1p) edge[] node[pos=0.1, fill=white, inner sep=2pt] {\scriptsize $2$}  node[pos=0.85, fill=white, inner sep=2pt] {\scriptsize $4$} (bp);
        
        \draw (ae) edge[] node[pos=0.1, fill=white, inner sep=2pt] {\scriptsize $2$}  node[pos=0.86, fill=white, inner sep=2pt] {\scriptsize $2$} (bpp);
        
        \draw (as) edge[] node[pos=0.2, fill=white, inner sep=2pt] {\scriptsize $1$}  node[pos=0.8, fill=white, inner sep=2pt] {\scriptsize $3$} (b2p);
\end{tikzpicture}
\caption{The input instance for our example execution of the algorithm.}
    \label{fig:example_instance}
\end{figure}
We now present an example execution for our algorithm from \Cref{thm:c>3}.
The input instance is depicted in \Cref{fig:example_instance}.
\subsection{Phase 1: Pruning edges}
In the first phase of the algorithm, see \Cref{sec:non-pop}, we prune non-popular edges in three steps, each of which results in the graph $H^{\deg (A) \le 2}$, $\hct$, and $\hcp$, respectively. \Cref{fig:degA,fig:hct,fig:hcp} depict these three graphs.
\begin{enumerate}
\item In the first step, we compute $f(v)$ and $s(v) $ for every vertex~$v\in V(G)$.
This results in graph~$H^{\deg (A) \le 2}$, depicted in \Cref{fig:degA}.
For example, for vertex $\eafh$, $f(\eafh) = \ebfh$, because $\ebfh$ is the first choice of $\eafh$. The second choice of $\eafh$ is $\ebp$, but since $\ebp = f(\eap)$, $s(\eafh) \neq \ebp$. Instead, $s(\eafh)= \ebfh$.
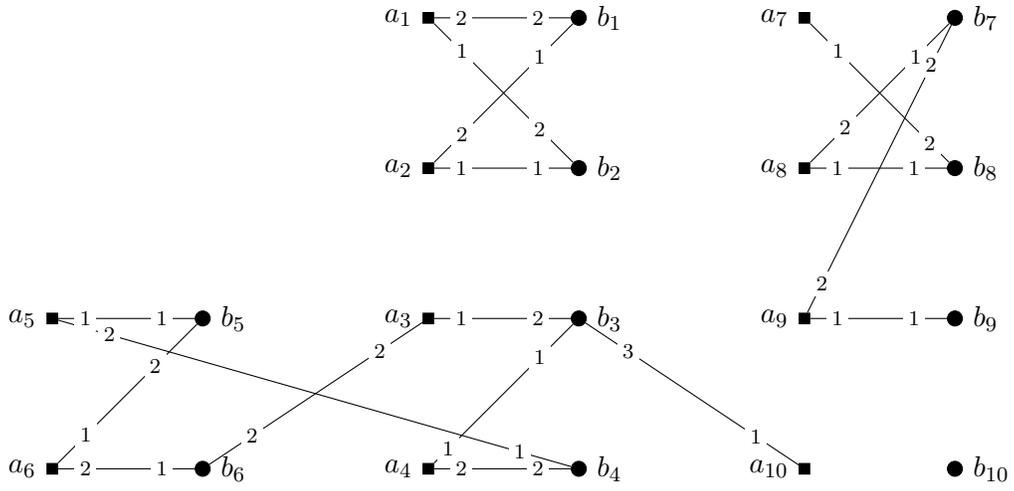
\begin{figure}
    \centering
\begin{tikzpicture}[yscale = 0.85]
          \node[terminal, label=180:$\eaf$] (a1) at (-1,1) {};
          \node[terminal, label=180:$\eafh$] (a1p) at (-1,3) {};
          \node[vertex, label={[xshift=-.0cm]0:$\ebfour$}] (b1) at (1.,1) {};
          \node[vertex, label=0:$\ebfh$] (b1p) at (1.,3) {};

        \node[terminal, label=180:$\eato$] (a2) at ($(a1) + (0, -3.8)$) {};
        \node[terminal, label=180:$\eatoh$] (a2p) at ($(a2) + (0, 2)$) {};
        \node[vertex, label=0:$\ebto$] (b2) at ($(a2) + (2, 0)$) {};
        \node[vertex, label={[xshift=-.0cm]0:$\ebtoh$}] (b2p) at ($(b2) + (0, 2)$) {};

        \node[terminal, label=180:$\eatt$] (a3) at ($(a2) + (-5, 0)$) {};
        \node[terminal, label=180:$\eatth$] (a3p) at ($(a3) + (0, 2)$) {};
        \node[vertex, label=0:$\ebtt$] (b3) at ($(a3) + (2, 0)$) {};
        \node[vertex, label=0:$\ebtth$] (b3p) at ($(b3) + (0, 2)$) {};

          \node[terminal, label=180:$\eap$] (ap) at (4,1) {};
          \node[terminal, label=180:$\eaph$] (app) at (4,3) {};
          \node[vertex, label={[xshift=-.0cm]0:$\ebp$}] (bp) at (6.,1) {};
          \node[vertex, label=0:$\ebph$] (bpp) at (6.,3) {};

          \node[terminal, label=180:$\eaedge$] (ae) at (4, -0.8) {};
          \node[vertex, label=0:$\ebe$] (be) at (6, -0.8) {};

          \node[terminal, label=180:$\eas$] (as) at (4, -2.8) {};
          \node[vertex, label=0:$\ebs$] (bs) at (6, -2.8) {};

        \draw (a1) edge[] node[pos=0.2, fill=white, inner sep=2pt] {\scriptsize $1$}  node[pos=0.76, fill=white, inner sep=2pt] {\scriptsize $1$} (b1);
        \draw (a1) edge[] node[pos=0.2, fill=white, inner sep=2pt] {\scriptsize $2$}  node[pos=0.76, fill=white, inner sep=2pt] {\scriptsize $1$} (b1p);

        \draw (a1p) edge[] node[pos=0.2, fill=white, inner sep=2pt] {\scriptsize $1$}  node[pos=0.76, fill=white, inner sep=2pt] {\scriptsize $2$} (b1);
        \draw (a1p) edge[] node[pos=0.2, fill=white, inner sep=2pt] {\scriptsize $2$}  node[pos=0.76, fill=white, inner sep=2pt] {\scriptsize $2$} (b1p);

        \draw (a2) edge[] node[pos=0.2, fill=white, inner sep=2pt] {\scriptsize $2$}  node[pos=0.76, fill=white, inner sep=2pt] {\scriptsize $2$} (b2);
        \draw (a2) edge[] node[pos=0.1, fill=white, inner sep=2pt] {\scriptsize $1$}  node[pos=0.76, fill=white, inner sep=2pt] {\scriptsize $1$} (b2p);

        \draw (a2p) edge[] node[pos=0.2, fill=white, inner sep=2pt] {\scriptsize $2$}  node[pos=0.8, fill=white, inner sep=2pt] {\scriptsize $2$} (b3);
        \draw (a2p) edge[] node[pos=0.2, fill=white, inner sep=2pt] {\scriptsize $1$}  node[pos=0.76, fill=white, inner sep=2pt] {\scriptsize $2$} (b2p);

        \draw (a3) edge[] node[pos=0.2, fill=white, inner sep=2pt] {\scriptsize $2$}  node[pos=0.76, fill=white, inner sep=2pt] {\scriptsize $1$} (b3);
        \draw (a3) edge[] node[pos=0.2, fill=white, inner sep=2pt] {\scriptsize $1$}  node[pos=0.7, fill=white, inner sep=2pt] {\scriptsize $2$} (b3p);

        \draw (a3p) edge[] node[pos=0.1, fill=white, inner sep=2pt] {\scriptsize $2$}  node[pos=0.9, fill=white, inner sep=2pt] {\scriptsize $1$} (b2);
        \draw (a3p) edge[] node[pos=0.2, fill=white, inner sep=2pt] {\scriptsize $1$}  node[pos=0.76, fill=white, inner sep=2pt] {\scriptsize $1$} (b3p);
                
        \draw (ap) edge[] node[pos=0.2, fill=white, inner sep=2pt] {\scriptsize $1$}  node[pos=0.76, fill=white, inner sep=2pt] {\scriptsize $1$} (bp);
        \draw (ap) edge[] node[pos=0.25, fill=white, inner sep=2pt] {\scriptsize $2$}  node[pos=0.76, fill=white, inner sep=2pt] {\scriptsize $1$} (bpp);

        \draw (app) edge[] node[pos=0.2, fill=white, inner sep=2pt] {\scriptsize $1$}  node[pos=0.86, fill=white, inner sep=2pt] {\scriptsize $2$} (bp);
        
        \draw (ae) edge[] node[pos=0.2, fill=white, inner sep=2pt] {\scriptsize $1$}  node[pos=0.76, fill=white, inner sep=2pt] {\scriptsize $1$} (be);
                
        \draw (ae) edge[] node[pos=0.1, fill=white, inner sep=2pt] {\scriptsize $2$}  node[pos=0.86, fill=white, inner sep=2pt] {\scriptsize $2$} (bpp);
        
        \draw (as) edge[] node[pos=0.2, fill=white, inner sep=2pt] {\scriptsize $1$}  node[pos=0.8, fill=white, inner sep=2pt] {\scriptsize $3$} (b2p);
\end{tikzpicture}
\caption{The graph~$H^{\deg (A)\le 2}$.}
    \label{fig:degA}
\end{figure}

\item In the second step, we delete every edge incident to a cycle, resulting in graph~$\hct$ (see \Cref{fig:hct}). There are two cycles in the graph~$H^{\deg (A) \le 2}$ in \Cref{fig:degA}, but only one of them is incident to an edge. This edge is $\{\eas, \ebtoh\}$.
\begin{figure}
    \centering
\begin{tikzpicture}[yscale = 0.85]
          \node[terminal, label=180:$\eaf$] (a1) at (-1,1) {};
          \node[terminal, label=180:$\eafh$] (a1p) at (-1,3) {};
          \node[vertex, label={[xshift=-.0cm]0:$\ebfour$}] (b1) at (1.,1) {};
          \node[vertex, label=0:$\ebfh$] (b1p) at (1.,3) {};

        \node[terminal, label=180:$\eato$] (a2) at ($(a1) + (0, -3.8)$) {};
        \node[terminal, label=180:$\eatoh$] (a2p) at ($(a2) + (0, 2)$) {};
        \node[vertex, label=0:$\ebto$] (b2) at ($(a2) + (2, 0)$) {};
        \node[vertex, label={[xshift=-.0cm]0:$\ebtoh$}] (b2p) at ($(b2) + (0, 2)$) {};

        \node[terminal, label=180:$\eatt$] (a3) at ($(a2) + (-5, 0)$) {};
        \node[terminal, label=180:$\eatth$] (a3p) at ($(a3) + (0, 2)$) {};
        \node[vertex, label=0:$\ebtt$] (b3) at ($(a3) + (2, 0)$) {};
        \node[vertex, label=0:$\ebtth$] (b3p) at ($(b3) + (0, 2)$) {};

          \node[terminal, label=180:$\eap$] (ap) at (4,1) {};
          \node[terminal, label=180:$\eaph$] (app) at (4,3) {};
          \node[vertex, label={[xshift=-.0cm]0:$\ebp$}] (bp) at (6.,1) {};
          \node[vertex, label=0:$\ebph$] (bpp) at (6.,3) {};

          \node[terminal, label=180:$\eaedge$] (ae) at (4, -0.8) {};
          \node[vertex, label=0:$\ebe$] (be) at (6, -0.8) {};

          \node[terminal, label=180:$\eas$] (as) at (4, -2.8) {};
          \node[vertex, label=0:$\ebs$] (bs) at (6, -2.8) {};

        \draw (a1) edge[] node[pos=0.2, fill=white, inner sep=2pt] {\scriptsize $1$}  node[pos=0.76, fill=white, inner sep=2pt] {\scriptsize $1$} (b1);
        \draw (a1) edge[] node[pos=0.2, fill=white, inner sep=2pt] {\scriptsize $2$}  node[pos=0.76, fill=white, inner sep=2pt] {\scriptsize $1$} (b1p);

        \draw (a1p) edge[] node[pos=0.2, fill=white, inner sep=2pt] {\scriptsize $1$}  node[pos=0.76, fill=white, inner sep=2pt] {\scriptsize $2$} (b1);
        \draw (a1p) edge[] node[pos=0.2, fill=white, inner sep=2pt] {\scriptsize $2$}  node[pos=0.76, fill=white, inner sep=2pt] {\scriptsize $2$} (b1p);

        \draw (a2) edge[] node[pos=0.2, fill=white, inner sep=2pt] {\scriptsize $2$}  node[pos=0.76, fill=white, inner sep=2pt] {\scriptsize $2$} (b2);
        \draw (a2) edge[] node[pos=0.1, fill=white, inner sep=2pt] {\scriptsize $1$}  node[pos=0.76, fill=white, inner sep=2pt] {\scriptsize $1$} (b2p);

        \draw (a2p) edge[] node[pos=0.2, fill=white, inner sep=2pt] {\scriptsize $2$}  node[pos=0.8, fill=white, inner sep=2pt] {\scriptsize $2$} (b3);
        \draw (a2p) edge[] node[pos=0.2, fill=white, inner sep=2pt] {\scriptsize $1$}  node[pos=0.76, fill=white, inner sep=2pt] {\scriptsize $2$} (b2p);

        \draw (a3) edge[] node[pos=0.2, fill=white, inner sep=2pt] {\scriptsize $2$}  node[pos=0.76, fill=white, inner sep=2pt] {\scriptsize $1$} (b3);
        \draw (a3) edge[] node[pos=0.2, fill=white, inner sep=2pt] {\scriptsize $1$}  node[pos=0.7, fill=white, inner sep=2pt] {\scriptsize $2$} (b3p);

        \draw (a3p) edge[] node[pos=0.1, fill=white, inner sep=2pt] {\scriptsize $2$}  node[pos=0.9, fill=white, inner sep=2pt] {\scriptsize $1$} (b2);
        \draw (a3p) edge[] node[pos=0.2, fill=white, inner sep=2pt] {\scriptsize $1$}  node[pos=0.76, fill=white, inner sep=2pt] {\scriptsize $1$} (b3p);
                
        \draw (ap) edge[] node[pos=0.2, fill=white, inner sep=2pt] {\scriptsize $1$}  node[pos=0.76, fill=white, inner sep=2pt] {\scriptsize $1$} (bp);
        \draw (ap) edge[] node[pos=0.25, fill=white, inner sep=2pt] {\scriptsize $2$}  node[pos=0.76, fill=white, inner sep=2pt] {\scriptsize $1$} (bpp);

        \draw (app) edge[] node[pos=0.2, fill=white, inner sep=2pt] {\scriptsize $1$}  node[pos=0.86, fill=white, inner sep=2pt] {\scriptsize $2$} (bp);
        
        \draw (ae) edge[] node[pos=0.2, fill=white, inner sep=2pt] {\scriptsize $1$}  node[pos=0.76, fill=white, inner sep=2pt] {\scriptsize $1$} (be);
        
        \draw (ae) edge[] node[pos=0.1, fill=white, inner sep=2pt] {\scriptsize $2$}  node[pos=0.86, fill=white, inner sep=2pt] {\scriptsize $2$} (bpp);
\end{tikzpicture}
\caption{The graph~$\hct$.}
    \label{fig:hct}
\end{figure}

\item In the third step, we consider each connected component, which is either a tree or a cycle, and exhaustively delete every edge not contained in a popular matching in this component. This results in the graph~$\hcp$, depicted in \Cref{fig:hcp}. In our example, only the component on vertices $\{\eaph,\eap,\eaedge,\ebph,\ebp,\ebe\}$ has such an edge, all other components have popular matchings that cover all edges.
\end{enumerate}

\begin{figure}
    \centering
\begin{tikzpicture}[yscale = 0.85]
          \node[terminal, label=180:$\eaf$] (a1) at (-1,1) {};
          \node[terminal, label=180:$\eafh$] (a1p) at (-1,3) {};
          \node[vertex, label={[xshift=-.0cm]0:$\ebfour$}] (b1) at (1.,1) {};
          \node[vertex, label=0:$\ebfh$] (b1p) at (1.,3) {};

        \node[terminal, label=180:$\eato$] (a2) at ($(a1) + (0, -3.5)$) {};
        \node[terminal, label=180:$\eatoh$] (a2p) at ($(a2) + (0, 2)$) {};
        \node[vertex, label=0:$\ebto$] (b2) at ($(a2) + (2, 0)$) {};
        \node[vertex, label={[xshift=-.0cm]0:$\ebtoh$}] (b2p) at ($(b2) + (0, 2)$) {};

        \node[terminal, label=180:$\eatt$] (a3) at ($(a2) + (-5, 0)$) {};
        \node[terminal, label=180:$\eatth$] (a3p) at ($(a3) + (0, 2)$) {};
        \node[vertex, label=0:$\ebtt$] (b3) at ($(a3) + (2, 0)$) {};
        \node[vertex, label=0:$\ebtth$] (b3p) at ($(b3) + (0, 2)$) {};

          \node[terminal, label=180:$\eap$] (ap) at (4,1) {};
          \node[terminal, label=180:$\eaph$] (app) at (4,3) {};
          \node[vertex, label={[xshift=-.0cm]0:$\ebp$}] (bp) at (6.,1) {};
          \node[vertex, label=0:$\ebph$] (bpp) at (6.,3) {};

          \node[terminal, label=180:$\eaedge$] (ae) at (4, -0.5) {};
          \node[vertex, label=0:$\ebe$] (be) at (6, -0.5) {};

          \node[terminal, label=180:$\eas$] (as) at (4, -2.5) {};
          \node[vertex, label=0:$\ebs$] (bs) at (6, -2.5) {};

        \draw (a1) edge[] node[pos=0.2, fill=white, inner sep=2pt] {\scriptsize $1$}  node[pos=0.76, fill=white, inner sep=2pt] {\scriptsize $1$} (b1);
        \draw (a1) edge[] node[pos=0.2, fill=white, inner sep=2pt] {\scriptsize $2$}  node[pos=0.76, fill=white, inner sep=2pt] {\scriptsize $1$} (b1p);

        \draw (a1p) edge[] node[pos=0.2, fill=white, inner sep=2pt] {\scriptsize $1$}  node[pos=0.76, fill=white, inner sep=2pt] {\scriptsize $2$} (b1);
        \draw (a1p) edge[] node[pos=0.2, fill=white, inner sep=2pt] {\scriptsize $2$}  node[pos=0.76, fill=white, inner sep=2pt] {\scriptsize $2$} (b1p);

        \draw (a2) edge[] node[pos=0.2, fill=white, inner sep=2pt] {\scriptsize $2$}  node[pos=0.76, fill=white, inner sep=2pt] {\scriptsize $2$} (b2);
        \draw (a2) edge[] node[pos=0.1, fill=white, inner sep=2pt] {\scriptsize $1$}  node[pos=0.76, fill=white, inner sep=2pt] {\scriptsize $1$} (b2p);

        \draw (a2p) edge[] node[pos=0.2, fill=white, inner sep=2pt] {\scriptsize $2$}  node[pos=0.8, fill=white, inner sep=2pt] {\scriptsize $2$} (b3);
        \draw (a2p) edge[] node[pos=0.2, fill=white, inner sep=2pt] {\scriptsize $1$}  node[pos=0.76, fill=white, inner sep=2pt] {\scriptsize $2$} (b2p);

        \draw (a3) edge[] node[pos=0.2, fill=white, inner sep=2pt] {\scriptsize $2$}  node[pos=0.76, fill=white, inner sep=2pt] {\scriptsize $1$} (b3);
        \draw (a3) edge[] node[pos=0.2, fill=white, inner sep=2pt] {\scriptsize $1$}  node[pos=0.7, fill=white, inner sep=2pt] {\scriptsize $2$} (b3p);

        \draw (a3p) edge[] node[pos=0.1, fill=white, inner sep=2pt] {\scriptsize $2$}  node[pos=0.9, fill=white, inner sep=2pt] {\scriptsize $1$} (b2);
        \draw (a3p) edge[] node[pos=0.2, fill=white, inner sep=2pt] {\scriptsize $1$}  node[pos=0.76, fill=white, inner sep=2pt] {\scriptsize $1$} (b3p);
                
        \draw (ap) edge[] node[pos=0.2, fill=white, inner sep=2pt] {\scriptsize $1$}  node[pos=0.76, fill=white, inner sep=2pt] {\scriptsize $1$} (bp);
        \draw (ap) edge[] node[pos=0.25, fill=white, inner sep=2pt] {\scriptsize $2$}  node[pos=0.76, fill=white, inner sep=2pt] {\scriptsize $1$} (bpp);

        \draw (app) edge[] node[pos=0.2, fill=white, inner sep=2pt] {\scriptsize $1$}  node[pos=0.86, fill=white, inner sep=2pt] {\scriptsize $2$} (bp);
        
        \draw (ae) edge[] node[pos=0.2, fill=white, inner sep=2pt] {\scriptsize $1$}  node[pos=0.76, fill=white, inner sep=2pt] {\scriptsize $1$} (be);
\end{tikzpicture}
\caption{The graph~$\hcp$.}
    \label{fig:hcp}
\end{figure}
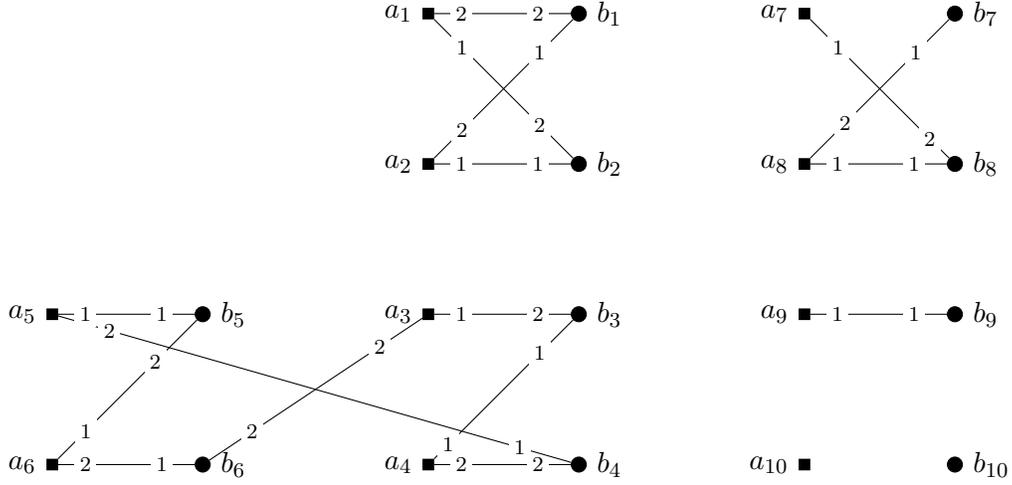
\subsection{Phase 2: Computing ``local'' witnesses}
In the second phase, we compute witnesses for each connected component of~$\hcp$, as described in \Cref{sec:witness}. These ``local'' witnesses are depicted in \Cref{fig:initial-witnesses}.

  \begin{table}
      \centering
      \begin{tabular}{c | c | c || c | c | c}
           Witness & Matching & Vertex & Value of $\witness$ & Vertex & Value of $\witness$\\
           \hline
           \multirow{2}*{$\wcoone$} & \multirow{2}*{$\{\{\eaf, \ebfour\}, \{\eafh, \ebfh\}\}$} &$\eaf$ & $-c$ & $\ebfour$ & $c$\\
           & & $\eafh$ & $-1$ & $\ebfh$ & $1$\\
           \hline
          \multirow{2}*{$\wcomtwo$} & \multirow{2}*{$\{\{\eaf, \ebfh\}, \{\eafh, \ebfour\}\}$} &$\eaf$ & $1$ & $\ebfour$ & $c$\\
           & & $\eafh$ & $-c$ & $\ebfh$ & $-1$\\
           \hline
           \multirow{2}*{$\wcothree$} & \multirow{2}*{$\{\{\eaf, \ebfour\}, \{\eafh, \ebfh\}\}$} &$\eaf$ & $2-c$ & $\ebfour$ & $c-2$\\
           & & $\eafh$ & $1$ & $\ebfh$ & $-1$\\
           \hline
           \multirow{2}*{$\wcoeven$} & \multirow{2}*{$\{\{\eaf, \ebfour\}, \{\eafh, \ebfh\}\}$} &$\eaf$ & $1-c$ & $\ebfour$ & $c -1$\\
           & & $\eafh$ & $0$ & $\ebfh$ & $0$\\
           \hline
           \hline
           \multirow{4}*{$\wctone$} & \multirow{4}*{$\{\{\eato, \ebtoh\}, \{\eatoh, \ebtt\}, \{\eatt, \ebtth\}, \{\eatth, \ebto\}\}$} &$\eato$ & $-c$ & $\ebto$ & $-1$\\
           & & $\eatoh$ & $-1$ & $\ebtoh$ & $c$\\
           & & $\eatt$ & $-c$ & $\ebtt$ & $1$\\
           & & $\eatth$ & $1$ & $\ebtth$ & $c$\\
           \hline
           \multirow{4}*{$\wctthree$} & \multirow{4}*{$\{\{\eato, \ebto\}, \{\eatoh, \ebtoh\}, \{\eatt, \ebtt\}, \{\eatth, \ebtth\}\}$} &$\eato$ & $1$ & $\ebto$ & $-1$\\
           & & $\eatoh$ & $-c$ & $\ebtoh$ & $c$\\
           & & $\eatt$ & $1$ & $\ebtt$ & $-1$\\
           & & $\eatth$ & $2-c$ & $\ebtth$ & $c-2$\\
           \hline
           \hline
           \multirow{2}*{$\wpone$} & \multirow{2}*{$\{\{\eap, \ebph\}, \{\eaph, \ebp\}\}$} &$\eap$ & $1$ & $\ebp$ & $c$\\
           & & $\eaph$ & $-c$ & $\ebph$ & $-1$\\
           \hline
           \multirow{2}*{$\wptwo$} & \multirow{2}*{$\{\{\eap, \ebp\}\}$} &$\eap$ & $1-c$ & $\ebp$ & $c - 1$\\
           & & $\eaph$ & $0$ & $\ebph$ & $0$\\
           \hline
           \hline
           $\weone$ & $\{\{\eaedge, \ebe\}\}$ & $\eaedge$ & $-c$ & $\ebe$ & $c$ \\
           \hline
           $\wetwo$ & $\{\{\eaedge, \ebe\}\}$ & $\eaedge$ & $2-c$ & $\ebe$ & $c -2 $ \\
           \hline
           $\weeven$ & $\{\{\eaedge, \ebe\}\}$ & $\eaedge$ & $1 -c$ & $\ebe$ & $c -1$ \\
           \hline
           \hline
           & $\emptyset $ & $\eas$ & 0 & & \\
           \hline
           \hline
           & $\emptyset $ & & & $\ebs $& 0\\
      \end{tabular}
      %\vspace{0.2cm}
      \caption{The set of witnesses for the connected components of~$\hcp$. The components are separated by a double horizontal line in the table.}
      \label{fig:initial-witnesses}
  \end{table}
  
\subsection{Phase 3: Constructing a ``global'' witness}
%Then the pruning procedure starts.
  We now turn to the final step of the algorithm, which prunes ``local'' witnesses step-by-step until we found a ``global'' witness.
  Throughout this phase, each connected component of~$\hcp$ is assigned a ``local'' witness.
  Initially, witnesses~$\wcoone$, $\wctone$, $\wpone$, and $\weone$ are assigned to the four non-trivial components of~$\hcp$.
  We now choose an arbitrary order of conflicting edges.
  We start with edge~$\{\eaedge, \ebs\}$.
  Because $\weone$ is \odd, we dismiss~$\weone$ and assign~$\wetwo$.
  Next, we choose edge~$\{\eas, \ebp\}$.
  Because the witness~$\bm{0}$ for $\eas$ is \even, we dismiss~$\wpone$ and assign~$\wptwo$.
  Now edge~$\{\eafh, \ebp\}$ becomes conflicting and we choose it.
  Because $\wptwo$ is \even, we dismiss~$\wcoone$ and assign $\wcomtwo$.
  Now edge~$\{\eatoh, \ebfour\}$ becomes conflicting and we choose it.
  Because~$\wctone$ is \odd, we dismiss witness~$\wctone$ and assign~$\wctthree$.
  Afterwards, there are no conflicting edges, and thus, we found the popular matching~$M=\{\{\eaf, \ebfh\}, \{\eafh, \ebfour\}, \{\eato, \ebtoh\}, \{\eatoh, \ebtt\}, \allowbreak\{\eatt, \ebtth\} , \allowbreak\{\eatth, \ebtoh\}, \{\eap, \ebp\}, \{\eaedge, \ebe\}\}$.

  \section{Example for non-dominating witnesses for non-unit weights for $B$}
  \label{apx:no-dominance}
  
  \begin{figure}
      \centering
        \begin{tikzpicture}
          \node[terminal, label=180:$a_{2}$, label={[xshift=-20pt]180:\color{green}$-c$, \color{red}$\frac{1}{3}$}] (a1) at (-1,1) {};
          \node[terminal, label=180:${a}_{1}$, label={[xshift=-20pt]180:\color{green}$-\frac{1}{3}$, \color{red}$-c$}] (a1p) at (-1,3) {};
          \node[white-vertex, label={[xshift=0pt]0:$b_{ 2}$}, label={[xshift=20pt]0:\color{green}$c$, \color{red}$c$}] (b1) at (1.,1) {};
          \node[vertex, label=0:${b}^*$, label={[xshift=20pt]0:\color{green}$\frac{1}{3}$, \color{red}$-1$}] (b1p) at (1.,3) {};

        \node[terminal, label=180:$a_{4}$, label={[xshift=-20pt]180:\color{green}$\frac{2}{3} -c$, \color{red}$1$}] (a2) at ($(a1) + (0, -4)$) {};
        \node[terminal, label=180:${a}_{3}$, label={[xshift=-20pt]180:\color{green}$\frac{1}{3}$, \color{red}$\frac{2}{3}- c$}] (a2p) at ($(a2) + (0, 2)$) {};
        \node[white-vertex, label=0:$b_{4}$, label={[xshift=20pt]0:\color{green}$c-\frac{2}{3}$, \color{red}$c - \frac{2}{3}$}] (b2) at ($(a2) + (2, 0)$) {};
        \node[white-vertex, label={[xshift=-.0cm]90:${b}_{3}$}, label={[xshift=20pt]0:\color{green}$-\frac{1}{3}$, \color{red}$-\frac{1}{3}$}] (b2p) at ($(b2) + (0, 2)$) {};

        \draw (a1) edge[ultra thick, green] node[pos=0.2, fill=white, inner sep=2pt] {\scriptsize $1$}  node[pos=0.76, fill=white, inner sep=2pt] {\scriptsize $1$} (b1);
        \draw (a1) edge[ultra thick, densely dotted, red] node[pos=0.2, fill=white, inner sep=2pt] {\scriptsize $2$}  node[pos=0.76, fill=white, inner sep=2pt] {\scriptsize $2$} (b2p);

        \draw (a1p) edge[ultra thick, green] node[pos=0.2, fill=white, inner sep=2pt] {\scriptsize $2$}  node[pos=0.76, fill=white, inner sep=2pt] {\scriptsize $2$} (b1p);
        \draw (a1p) edge[ultra thick, densely dotted, red] node[pos=0.2, fill=white, inner sep=2pt] {\scriptsize $1$}  node[pos=0.76, fill=white, inner sep=2pt] {\scriptsize $2$} (b1);

        \draw (a2) edge[ultra thick, green] node[pos=0.2, fill=white, inner sep=2pt] {\scriptsize $1$}  node[pos=0.76, fill=white, inner sep=2pt] {\scriptsize $1$} (b2);
        \draw (a2) edge[ultra thick, densely dotted, bend right = 52, red] node[pos=0.1, fill=white, inner sep=2pt] {\scriptsize $2$}  node[pos=0.9, fill=white, inner sep=2pt] {\scriptsize $1$} (b1p);

        \draw (a2p) edge[densely dotted, ultra thick, red] node[pos=0.2, fill=white, inner sep=2pt] {\scriptsize $2$}  node[pos=0.85, fill=white, inner sep=2pt] {\scriptsize $1$} (b2);
        \draw (a2p) edge[ultra thick, green] node[pos=0.2, fill=white, inner sep=2pt] {\scriptsize $1$}  node[pos=0.76, fill=white, inner sep=2pt] {\scriptsize $2$} (b2p);

  \end{tikzpicture}
      \caption{An example for an instance where each agent from~$A$ has weight $c > 3$ and each agent from~$B$ has weight at most 1 such that the witnesses are not in a dominance relation.
      Each squared vertex has weight $c > 3$, each black circular vertex has weight~1, and each white (circular) vertex has weight~$\frac{1}{3}$.
      }
      \label{fig:no-dominance}
  \end{figure}
  
  If we drop the assumption that vertices from~$B$ have weight~$1$ and allow them instead to have arbitrary weights not larger than 1, then \Cref{lem:fourwitnesses} does not hold any more:
  As an example, consider \Cref{fig:no-dominance}.
  There are two popular matchings (one containing the green edges, the other containing the red dotted edges), each having a unique witness (indicated by the green and red numbers).
  At vertex~$b^*$, neither witness dominates the other.

%% The Appendices part is started with the command \appendix;
%% appendix sections are then done as normal sections
%% \appendix

%% \section{}
%% \label{}

%% If you have bibdatabase file and want bibtex to generate the
%% bibitems, please use
%%
%%  \bibliographystyle{elsarticle-num} 
%%  \bibliography{<your bibdatabase>}

%% else use the following coding to input the bibitems directly in the
%% TeX file.
\bibliographystyle{elsarticle-num}
\bibliography{popular}
\end{document}